\crefname{figure}{figure}{figures}
 \renewcommand*{\AC@hyperlink}[2]{%
   \begingroup
     \hypersetup{hidelinks}%
     \hyperlink{#1}{#2}%
   \endgroup
 }%
\newtheorem{theorem}{Theorem}
\numberwithin{theorem}{section}
\newtheorem{corollary}[theorem]{Corollary}
\newtheorem{lemma}[theorem]{Lemma}
\newtheorem{proposition}[theorem]{Proposition}
\newtheorem{observation}[theorem]{Numerical Observation}
\newtheorem{definition}[theorem]{Definition}
\theoremstyle{definition}
\newtheorem{remark}{Remark}
\newcommand{\e}{\ensuremath\mathrm{e}}
\renewcommand{\i}{\ensuremath\mathrm{i}}
\newcommand{\rmd}{\ensuremath\mathrm{d}}
\DeclareMathOperator{\LandauO}{\mathrm{O}}
\DeclareMathOperator{\Tr}{Tr}
\DeclareMathOperator{\Id}{Id}
\DeclareMathOperator{\supp}{supp}
\newcommand{\fro}{\mathrm{F}}
\DeclareMathOperator{\conv}{conv}
\DeclareMathOperator{\Herm}{Herm} % Hermitian operators
\NewDocumentCommand\Cl{mg}{
    \ensuremath{\mathrm{Cl}_{#1}\IfNoValueTF{#2}{}{(#2)}}%
}
\NewDocumentCommand\HW{mg}{
    \ensuremath{\mathrm{HW}_{#1}\IfNoValueTF{#2}{}{(#2)}}%
}
\newcommand{\CC}{\mathbb{C}}
\newcommand{\RR}{\mathbb{R}}
\newcommand{\1}{\mathds{1}}
\newcommand{\PP}{\mathbb{P}}
\newcommand{\FF}{\mathbb{F}}
\newcommand{\R}{\mathbb{R}}
\newcommand{\mc}[1]{\mathcal{#1}}
\newcommand{\argdot}{{\,\cdot\,}}
\renewcommand{\vec}[1]{\boldsymbol{#1}}
\DeclarePairedDelimiterX{\abs}[1]{\lvert}{\rvert}{%
  \ifblank{#1}{\,\cdot\,}{#1}
}   % absolute value
\DeclarePairedDelimiterX\norm[1]\lVert\rVert{%
  \ifblank{#1}{\,\cdot\,}{#1}
}   % norm
\newcommand{\lonenorm}[2][1]{\norm{#2}_{\ell_{#1}}}
\newcommand{\linfnorm}[2][\infty]{\norm{#2}_{\ell_{#1}}}
\DeclarePairedDelimiterX{\iiiNorm}[1]{\lvert}{\rvert}{%
  \delimsize\lvert\delimsize\lvert#1\delimsize\rvert\delimsize\rvert%
}
\DeclarePairedDelimiterXPP\snorm[1]{}\lVert\rVert{_\infty}{\ifblank{#1}{\,\cdot\,}{#1}}   %spectral norm  =  (2->2)-norm
\DeclarePairedDelimiterXPP\twonorm[1]{}\lVert\rVert{_2}{\ifblank{#1}{\,\cdot\,}{#1}}   % 2-norm
\DeclarePairedDelimiterXPP\trnorm[1]{}\lVert\rVert{_1}{\ifblank{#1}{\,\cdot\,}{#1}}   % trace norm
\DeclarePairedDelimiterXPP\fnorm[1]{}\lVert\rVert{_{\fro}}{\ifblank{#1}{\,\cdot\,}{#1}}   % Fro-norm
\DeclarePairedDelimiterXPP\dnorm[1]{}\lVert\rVert{_\diamond}{\ifblank{#1}{\,\cdot\,}{#1}}   % diamond norm
\DeclarePairedDelimiterXPP\cbnorm[1]{}\lVert\rVert{_\mathrm{cb}}{\ifblank{#1}{\,\cdot\,}{#1}}   % CB-norm
\DeclarePairedDelimiterXPP\onenorm[1]{}\lVert\rVert{_{1\rightarrow 1}}{\ifblank{#1}{\,\cdot\,}{#1}}   % (1->1)-norm
\DeclarePairedDelimiterXPP\ddnorm[1]{}\lVert\rVert{_{\diamond\rightarrow \diamond}}{\ifblank{#1}{\,\cdot\,}{#1}}   % (\diamond->\diamond)-norm
\DeclarePairedDelimiterXPP\ssnorm[1]{}\lVert\rVert{_{\infty\rightarrow\infty}}{\ifblank{#1}{\,\cdot\,}{#1}}   % (\infty->\infty)-norm
\providecommand\given{}
\newcommand\SetSymbol[1][]{%
  \nonscript\:#1\vert
  \allowbreak
  \nonscript\:
  \mathopen{}}
\DeclarePairedDelimiterX\Set[1]\{\}{%
  \renewcommand\given{\SetSymbol[\delimsize]}
  #1
}
\DeclarePairedDelimiterX\innerp[2]{\langle}{\rangle}{%
  \ifblank{#1}{\,\cdot\,}{#1} , \ifblank{#2}{\,\cdot\,}{#2}%
}
\DeclarePairedDelimiterX\braket[2]{\langle}{\rangle}%
  {#1\kern0.15ex\delimsize\vert\kern0.15ex\mathopen{}#2}
\DeclarePairedDelimiterX\ketbra[2]{\vert}{\vert}%
  {#1\kern0.15ex\delimsize\rangle\delimsize\langle\kern0.15ex\mathopen{}#2}
\DeclarePairedDelimiterX\sandwich[3]{\langle}{\rangle}%
  {#1\,\delimsize\vert\kern0.15ex\mathopen{}#2\kern0.15ex\delimsize\vert\kern0.15ex\mathopen{}#3}
\DeclarePairedDelimiterX\obraket[2]{(}{)}%
  {#1\kern0.15ex\delimsize\vert\kern0.15ex\mathopen{}#2}
\DeclarePairedDelimiterX\oketbra[2]{\vert}{\vert}%
  {#1\kern0.15ex\delimsize)\delimsize(\kern0.15ex\mathopen{}#2}
\DeclarePairedDelimiterX\osandwich[3]{(}{)}%
  {#1\,\delimsize\vert\kern0.15ex\mathopen{}#2\kern0.15ex\delimsize\vert\kern0.15ex\mathopen{}#3}
\newcommand{\myleft}{\mathopen{}\mathclose\bgroup\left}
\newcommand{\myright}{\aftergroup\egroup\right}
\newcommand{\func}[1]{{\ensuremath{\mathsf{#1}}}}
\newcommand{\poly}{\func{poly}}
\newcommand{\Hz}{\mathrm{Hz}}
\newcommand{\second}{\mathrm{s}}
\definecolor{green}{HTML}{009E73}
\definecolor{darkBlue}{HTML}{0072B2}
\definecolor{lightBlue}{HTML}{56B4E0}
\definecolor{myRed}{HTML}{CA181D}
\definecolor{myOrange}{HTML}{E69F00}
\definecolor{myYellow}{HTML}{F0E442}
\newcommand{\tikzLabel}[1]{{\color{#1}\begin{tikzpicture}
	 \path [use as bounding box] (-0.4cm,-0.06cm) rectangle (0.4cm,0.05cm);
	 \node (a) at (0,0){};
	 \node[right = 0.25cm of a] (o){};
	 \node[left = 0.25cm of a] (w){};
	 \node[above = 0.05cm of a] (n){};
	 \node[below = 0.05cm of a] (s){};
     \draw[line width=0.5mm] (n) -- (s);
	 \draw[line width=0.5mm] (w) -- (o);
\end{tikzpicture}}}
\newcommand{\dd}{r}
\newcommand{\rr}{s}
\newcommand{\nTro}{n_{\mathrm{Tro}}}
\newcommand{\pp}{\kappa}
\newcommand{\ns}{n_L}
\newcommand{\kk}{\ell}
\newcommand{\WG}{\mathcal{W}(\vec J)}
\newcommand{\WGG}[2]{\mathcal{W}(\vec J)^{(#1 \times #2)}}
\newcommand{\WPfull}{W}
\newcommand{\WP}[2]{W^{(#1 \times #2)}}
\newcommand{\WC}[2]{W(\vec J)^{(#1 \times #2)}}
\newcommand{\WCfull}{W(\vec J)}
\newcommand{\EP}[2]{E^{(#1 \times #2)}}
\newcommand{\EC}[2]{E(\vec J)^{(#1 \times #2)}}
\DeclareMathOperator*{\prodl}{\overleftarrow{\prod}}
\DeclareMathOperator*{\prodr}{\overrightarrow{\prod}}
\newcommand{\av}{\mathrm{av}}
\newcommand{\AV}{\mathrm{AV}}
 \newcommand{\err}{\mathrm{err}}
\newcommand{\hhu}{Heinrich Heine University Düsseldorf, Faculty of Mathematics and Natural Sciences, Düsseldorf, Germany
}
\newcommand{\tuhh}{Hamburg University of Technology, Institute for Quantum Inspired and Quantum Optimization, Hamburg, Germany
}
\begin{document}

\title{General, efficient, and robust Hamiltonian engineering}

\author{Pascal Baßler}
\email{pascal.bassler@tuhh.de}
\affiliation{\tuhh}
\affiliation{\hhu}
\author{Markus Heinrich}
\email{markus.heinrich@uni-koeln.de}
\affiliation{\hhu}
\affiliation{Institute for Theoretical Physics, University of Cologne, Cologne, Germany}
\author{Martin Kliesch}
\email{martin.kliesch@tuhh.de}
\affiliation{\tuhh}

\begin{abstract}
Implementing the time evolution under a desired target Hamiltonian is critical for various applications in quantum science. 
Due to the exponential increase in the number of parameters with system size and experimental imperfections, this task can be challenging in quantum many-body settings. 

We introduce an efficient and robust scheme to engineer arbitrary local many-body Hamiltonians. To this end, our scheme applies single-qubit $\pi$ or $\pi/2$ pulses to an always-on system Hamiltonian, which we assume to be native to a given platform. These sequences are constructed by efficiently solving a linear program (LP) which minimizes the total evolution time. In this way, we can engineer target Hamiltonians that are only limited by the locality of the interactions in the system Hamiltonian. Based on average Hamiltonian theory and using robust composite pulses, we make our schemes robust against errors, including finite pulse time errors and various control errors.

To demonstrate the performance of our scheme, we provide numerical simulations. In particular, we solve the Hamiltonian engineering problem on a laptop for arbitrary two-local Hamiltonians on a 2D square lattice with $196$ qubits in only $60$ seconds. Moreover, we simulate the engineering of general Heisenberg Hamiltonians from Ising Hamiltonians using imperfect single-qubit pulses for smaller system sizes and achieve a fidelity exceeding $99.9\%$, which is orders of magnitude better than non-robust implementations.
\end{abstract}
\maketitle

%%%========== PDF meta data =====================
% metadata for hyperref
 \hypersetup{
	     pdfsubject = {Quantum computing},
	     pdfkeywords = {quantum, compiling, negative time, dynamical, decoupling, NMR, nuclear magnetic resonance, analog, analogue, engineering,
	     	 global, interaction, entangling, multi-qubit, many-body, gate, synthesis, unitary, all-to-all, restricted, connectivity, always-on, drift, product formula,
		     ion trap, Ising, Hamiltonian, superconducting, neutral atoms, Rydberg, simulation,
		     convex, optimization, linear program, LP, phase transition,
		     digital-analog, DAQC, SCROFULOUS, SCROBUTUS, conjugation, interaction picture, toggling-frame, twirl,
		     qubit, efficient, Wendel, convex analysis, stochastic geometry, polytope,
		     robust, pulses, average Hamiltonian theory, robustness, composite, finite pulse time, finite pulse duration
	     }
	    }

% ===================
\section{Introduction}
\label{sec:intro}
% ===================
Simulating a target Hamiltonian on a quantum system is a central problem in quantum computing, with applications in general gate-based and digital-analog quantum computing, as well as quantum simulations \cite{Somma2002,Blatt12,Wecker2015,Bernien2017,Zhang2017,Defenu2024} and quantum chemistry \cite{Kassal2011,Wecker2014,Olson2017}.
It is commonly believed that simulating the dynamics of a quantum system is one of the most promising tasks for showing a practical advantage of quantum computers over classical computers \cite{Kassal2011,Childs2018,Daley22PracticalQuantumAdvantage}, perhaps already on \ac{NISQ} devices \cite{Preskill18,Trivedi2022}.
Especially for the latter, it is essential to have an efficient and fast implementation of the target Hamiltonian due to short coherence times.

The idea to engineer a target Hamiltonian by interleaving the evolution under a fixed system Hamiltonian with single-qubit pulses originated in the \ac{NMR} community.
However, the first approaches require additional single-qubit pulses to decouple interactions, rescale them, and couple them again, resulting in long pulse sequences \cite{Leung2000,Leung2002,Dodd2002,Nielsen2002,Votto23UniversalQuantumProcessors}.
Recently, there has been a renewed interest in designing pulse sequences to change the effective dynamics governed by a given Hamiltonian.
In particular, there has been impressive progress in the design of robust global pulses, identical pulses on each qubit, to change the global properties of a given Hamiltonian \cite{Choi2020}. 
This approach has already been generalized to qudit systems \cite{Choi2017,zhou2024} and implemented in experiments with ultracold atomic Rydberg gas and NV centers in diamond \cite{Geier2021,Hengyun2020}.
However, Hamiltonian engineering schemes utilizing global pulses, such as Floquet Hamiltonian engineering methods \cite{Geier2021,Scholl2022}, cannot modify individual interaction terms.
In our work, we design robust sequences of local pulses to change any interaction coupling in a given Hamiltonian, with similar robustness properties as for global pulses \cite{Choi2020}.
Moreover, due to the local pulses in our method we are able to implement the dynamics of a much larger family of possible target Hamiltonians.
We believe, that neutral atom architectures based on different operating zones or in the weak coupling regime are well-suited for digital-analog methods as presented in our work \cite{Barredo2015,Bluvstein2022,Bluvstein_2023}.

So far, other approaches utilizing local pulses have certain limitations.
They either require NP-hard classical pre-processing to find the pulse sequence \cite{garcia2024}, rely on specific structures in the system Hamiltonian \cite{Hayes2014}, or require an infinite single-qubit gate set which might be a problem for the fast control electronics in an experiment \cite{figgatt_parallel_2019,lu_global_2019,grzesiak_efficient_2020}.
We provide a comparison of previous methods in \cref{tab:compare}.
These works are also limited to two-body interactions, and to the best of our knowledge, no general scheme has been developed for efficiently engineering many-body interactions individually.
Since there has been an increasing effort in realizing the latter in experiments \cite{Pachos2004,Buechler2007,Peng2009,Bernien2017,Zhang2022}, such a general scheme would allow to simulate quantum chemistry Hamiltonians with genuine many-body interactions.
Moreover, some recent Hamiltonian learning schemes rely on ``reshaping'' unknown many-body Hamiltonians to diagonal Hamiltonians which can be done efficiently and robust to errors with our proposed method~\cite{huang2023learning,ma2024learningkbodyhamiltonianscompressed,hu2025}.

\begin{table*}
	\begin{tabular}{|l|c|c|c|c|c|}		 \hline
		\textbf{Method}                     & \textbf{Target Hamiltonians} & \textbf{Runtime} & \textbf{Memory} & \textbf{Depth}   & \textbf{Robust} 	 \\\hline
		\citet{Choi2020}       & cannot modify individual interactions & $\LandauO (1)$ & $\LandauO (1)$             & $\LandauO (1)$   & yes \\
		DAQC \cite{garcia2024}              & 2-local, arbitrary     & $\LandauO (4^n)$ & $\LandauO (4^n)$           & $\LandauO (n^2)$ & no \\
		\citet{Hayes2014}       & 2-local, 1D spin chain & $\LandauO (n^3)$**  & $\LandauO (n^2)$    & $\LandauO (n^2)$ & no \\
		EASE gate \cite{grzesiak_efficient_2020} & 2-local, Ising type       & $\LandauO (n^3)$ &  $\LandauO (n^2)$     & $\LandauO (n)$*  & no \\
		\citet{Votto23UniversalQuantumProcessors} & 2-local, XY model & $\LandauO (n^3)$ & $\LandauO (n^3)$ & $\LandauO (n^3)$ & yes \\
		This work                 & arbitrary (locality preserving) & $\LandauO (\dd^3)$** & $\LandauO (\dd^2)$   & $\dd$            & yes \\ \hline
	\end{tabular}
	\\[.2em] \qquad
	\begin{minipage}{.68\linewidth}
	\flushleft
	* requires pulses of arbitrary angle \\
	** We have assumed a complexity of $\LandauO(m^3)$ for solving a linear program with $m$ variables \cite{Bach2025}.
	\end{minipage}
	\caption{
			 Comparison of Hamiltonian engineering methods for implementing a target Hamiltonian with $\dd$ interactions on $n$ qubits.
			 We compare approaches similar to ours with respect to classical runtime \& memory usage, the circuit depth (number of pulses) and the robustness.
			 For general $k$-local Hamiltonians we have $\dd = \LandauO (n^k)$. Thus, for a 2-local Hamiltonian we have $\dd = \LandauO (n^2)$ and for a 2-local Hamiltonian on a 1D spin chain we have $\dd = \LandauO (n)$.
			 Moreover, for a better comparison of methods we omit the overhead introduced by implementing non-commuting interactions, i.e.\ via product formulas, in the depth consideration.\newline
			 }
	\label{tab:compare}
\end{table*}

In previous works, we have proposed a Hamiltonian engineering method for Ising Hamiltonians based on linear programming and applied it to multi-qubit gate synthesis and compiling problems \cite{basler_synthesis_2023,basler_time-optimal_2024}.
A related linear program approach for commuting Ising and next-neighbour interactions with an efficient relaxation was introduced in Refs.~\cite{Bhole20RescalingInteractionsFor,Tsunoda20EfficientHamiltonianProgramming}.
In this work, we generalize the linear programming method to a significantly larger family of local Hamiltonians, make it substantially more efficient, and render it robust against dominant error sources.
More concretely, our method allows to efficiently engineer arbitrary target Hamiltonians, only limited by the locality of the system Hamiltonian.
To this end, free evolutions under the system Hamiltonian are interleaved with $\pi$ or $\pi/2$ pulses (i.e.\ Pauli or single-qubit Clifford gates).
With the \ac{LP} formulation, we find an \emph{exact} decomposition of the target Hamiltonian as a sum, where each term corresponds to single-qubit pulses and the corresponding evolution time under the system Hamiltonian.
With this exact decomposition, one can implement the target evolution utilizing standard Hamiltonian simulation methods such as Trotterization.
A crucial feature of our method is that it minimizes the total evolution time, leading to a fast implementation of the target Hamiltonian.
For larger systems, however, optimally solving the minimizing linear program is no longer efficiently possible.
We introduce a relaxation that only scales with the number of interaction terms and not directly with the system size, and thus, provides an efficient method to engineer Hamiltonians.
This relaxation still yields an exact decomposition of the target Hamiltonian, but the total evolution time may no longer be minimal.
The latter can, however, be decreased by expanding the search space for the relaxed problem, providing a trade-off between the runtime of the classical pre-processing and the evolution time of the implementation.

Concretely, for any target Hamiltonian with $\dd$ interaction terms and the same locality as the system Hamiltonian, our method efficiently finds an implementing single-qubit pulse sequence using a linear-in-$\dd$ number of layers.
To implement the target evolution with a product formula, we therefore also require $\LandauO(\dd)$ single-qubit pulse layers, where the implicit constant only depends on the chosen product formula.
Note, that previous approaches require $\LandauO(\dd^2)$ single-qubit pulse layers \cite{Leung2000,Leung2002,Dodd2002,Nielsen2002,Votto23UniversalQuantumProcessors} or are classically hard to solve \cite{garcia2024}.
Moreover, all of these methods except Ref.~\cite{Votto23UniversalQuantumProcessors} are not robust against any error source.
For our method we observe that the total evolution time to implement the dynamics under the target Hamiltonian scales only sublinear with the number of qubits.

We also investigate the effects of dominant experimental error sources and introduce a general framework to cancel them.
To this end, we generalize a method by \citet{Votto23UniversalQuantumProcessors} based on \ac{AHT} to mitigate these errors.
As a result, we are able to make our methods robust against finite pulse time and single-qubit rotation angle errors.
We leverage the generality of our method to combine it with robust composite pulses, making it robust against many experimental error sources.
This is especially beneficial in light of recent efforts which have demonstrated a remarkable single-qubit pulse fidelity of $10^{-6}$ utilizing fast robust composite pulses \cite{Leu2023}.

\begin{figure*}[ht]
	\centering
	\includegraphics[width=0.84\linewidth]{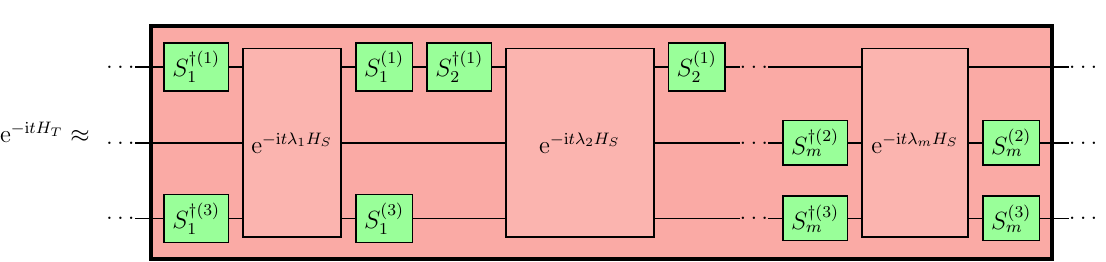}
	\caption{
			We engineer the target Hamiltonian $H_T$ by interleaving the natural dynamics of the quantum device, governed by the system Hamiltonian $H_S$, with layers of single-qubit $\pi$ or $\pi/2$ pulses, $\vec S_i = S_i^{(1)} \otimes \dots S_i^{(n)}$, as in \cref{eq:approximate_evolution}.
			The large red box highlights that $H_S$ is assumed to be always on.
			Our Hamiltonian engineering results are straightforward to implement:
			apply single-qubit pulse layers $\vec S_{i-1} \vec S_i^\dagger$, let the system evolve freely under $H_S$ for a duration $t \lambda_i$, and repeat.
			}
	\label{fig:circuit_1}
\end{figure*}

In summary, we propose a novel Hamiltonian engineering method, that is only limited by the locality of the native system interactions, the number of required pulses scales linear with the number of interactions, the total evolution time is minimized, and which is robust against common experimental imperfections.
Therefore, we believe that our method finds many applications in medium scale quantum simulation and gate based quantum computation.

In \cref{sec:main_results} we provide an overview of our general, efficient and robust methods.
In \cref{sec:num_sim} we show the generality, efficiency and robustness of our methods in numerical simulations of a realistic setting.
In \cref{sec:reshaping} we introduce our general framework for engineering Hamiltonians using a \ac{LP}.
In \cref{sec:pauli_conjugation,sec:clifford_conjugation} the efficient Hamiltonian engineering method by Pauli and Clifford conjugation are presented.
Finally, in \cref{sec:robustness} we provide methods to investigate and mitigate the effect of errors on both, the Pauli and Clifford conjugation methods.

% ===================
\section{Overview of main results}
\label{sec:main_results}
% ===================
Before describing the technical details of our Hamiltonian engineering method, we first provide an overview of our main contributions.
We begin by introducing the general framework, then highlight its generality, efficiency, and robustness using the example of an Ising-type system Hamiltonian.

In the following, we denote the $2 \times 2$ identity and Pauli matrices as $I, X, Y, Z$ and use $\1_n\equiv \1$ for the $n$-qubit identity matrix.
Tensor products of Pauli matrices are called \emph{multi-qubit Pauli operators} or \emph{Pauli strings}, the set of which is denoted as $\mathsf{P}^n = \{I,X,Y,Z\}^{\otimes n}$.
We note that such a Pauli operator, e.g.~$P = X \otimes Y \otimes X \otimes I$, can be both interpreted as a layer of simultaneous single-qubit $\pi$ pulses and as an interaction term in a local Hamiltonian.
This is because the Pauli operators forms a basis for the space of Hermitian operators, and we can thus write any Hamiltonian as
\begin{equation}
\label{eq:pauli_exp}
 H = \sum_{P\in \mathsf{P}^n} J_{P} P \, ,
\end{equation}
where $P$ represents the interaction and $J_{P} \in \RR$ is the corresponding interaction strength.
We collect all $J_{P}$ in a vector $\vec J \in \RR^{4^n}$.
We denote the \emph{locality} of an interaction $P$, i.e.\ the qubit indices on which $P$ acts non-trivially, as $\supp (P)$.
For any constant $k$, we call the Hamiltonian~\eqref{eq:pauli_exp} \emph{$k$-local} if $J_{P} = 0$ whenever $\abs{\supp(P)} > k$.

Our goal is to simulate the time evolution under a desired target Hamiltonian $H_T$ by using the native evolution of a system Hamiltonian $H_S$ with layers of (parallel) single-qubit pulses $\vec S$.
We express the system and target Hamiltonian as
\begin{equation}
 H_S = \sum_{P\in \mathsf{P}^n} J_{P} P
 \, , \qquad
 H_T = \sum_{P\in \mathsf{P}^n} A_{P} P \, .
\end{equation}
We consider two types of pulse layers: $\pi$ pulses (Pauli gates) and $\pi/2$ pulses (single-qubit Clifford gates).
By conjugating the system Hamiltonian with the pulses $\vec S$, we change its dynamics as
\begin{equation}
 \vec S^\dagger \e^{-\i t H_S} \vec S = \e^{-\i t \vec S^\dagger H_S \vec S} ,
\end{equation}
see \cref{fig:circuit_1}. 
This transformation is also known as the toggling-frame transformation.
Our objective is to find an exact decomposition of the target Hamiltonian such that
\begin{equation}
\label{eq:decomp_ex}
 H_T = \sum_{i=1}^{\dd} \lambda_i \vec S_i^\dagger H_S \vec S_i \,,
\end{equation}
where $\lambda_i > 0$ denotes the \emph{relative evolution time} under the system Hamiltonian, $\vec S_i$ denotes the corresponding single-qubit pulse layer and $\dd$ is the number of interaction terms that can be generated by the transformations.
Below \eqref{eq:LP}, we explain why $\dd$ coincides with the number of terms in the decomposition.
The transformation $\vec S_i^\dagger H_S \vec S_i$ linearly transforms the interaction strengths $\Set{J_{P}}$, and we capture this effect with a matrix $\WG \in \RR^{\dd \times \rr}$:
each entry $\WG_{P, i}$ specifies the contribution of the interaction term $P$ under the transformation with respect to the pulse layer $i$.
Specifically, we have
\begin{equation}
\label{eq:first_general_ideal_matrix}
\WG_{P, i} \coloneqq \frac{1}{2^n} \Tr \left(P \left(\vec S_i^\dagger H_S \vec S_i \right) \right) \,.
\end{equation}
Thus, $\rr$ is the number of considered single-qubit pulse layers.

Target Hamiltonians $H_T$ that allow for a decomposition as in \cref{eq:decomp_ex} need to have the same locality as the system Hamiltonian, i.e.~we cannot generate interaction terms of higher locality from lower locality ones (as we assume single-qubit pulses only).
Moreover, the set of single-qubit pulses also restricts the possible target Hamiltonians, see the discussion in the following subsection.

If it exists, an exact decomposition as in \cref{eq:decomp_ex} can be obtained by solving the following \ac{LP}, which minimizes the total relative evolution time
\begin{equation}
\tag{LP}
\label{eq:LP}
\begin{aligned}
 \mathrm{minimize} &&& \vec 1^T \vec \lambda  \\
 \mathrm{subject\ to}    &&& \WG \vec \lambda = \vec A \,, \; \vec \lambda \in \R^{\rr}_{\geq 0} \,,
\end{aligned}
\end{equation}
where $\vec A \in \RR^\dd$ represents the target interaction strengths and $\vec 1=(1,1,\dots,1)$ is the all-ones vector such that $\vec{1}^T \vec \lambda = \sum_{i}\lambda_{i}$.
The number of considered pulse layers $\rr$ equals the number of optimization variables $\lambda_i$, since each $\lambda_i$ is associated with a specific $\vec S_i$ in \cref{eq:decomp_ex}.
\eqref{eq:LP} is central for our approach to Hamiltonian engineering and its analysis plays an important role in this work.
For this purpose, we require some basic notions and properties from the theory of linear
programming \cite{boyd_convex_2004} which we will briefly introduce in the following.
A \emph{feasible solution} $\vec \lambda$ is a vector which satisfies all constraints of \eqref{eq:LP}.
An \emph{optimal solution} is a feasible solution that also minimizes the objective function and is indicated by an asterisk as $\vec \lambda^*$.
If \eqref{eq:LP} has a feasible solution, then there also exists a $\dd$-sparse optimal solution $\vec \lambda^*$ \cite{BARANY1982}, corresponding to a decomposition as in \cref{eq:decomp_ex} with only $\dd$ terms.
Such a $\dd$-sparse optimal solution can be found using the simplex algorithm which, in practice, has a runtime that scales polynomial in the problem size $\dd \times \rr$ \cite{spielman_smoothed_2004,Bach2025}.
In \cref{sec:reshaping}, we derive conditions for the existence of a solution to this \ac{LP}.
In \cref{sec:eff_pauli_heu,sec:eff_cliff_heu}, we further demonstrate that choosing pulse layers $\vec S_i$ at random is an effective strategy for obtaining feasible \acp{LP} with relatively few variables.

The resulting dynamics are then implemented as a sequence of evolutions under $H_S$ for time $t \lambda_i$ interleaved with single-qubit pulse layers $\vec S_i$
\begin{equation}
\label{eq:approximate_evolution}
 \e^{-\i t H_T} \approx \vec S_1^\dagger \e^{-\i t \lambda_1 H_S} \vec S_1 \dots \vec S_m^\dagger \e^{-\i t \lambda_m H_S} \vec S_m \, ,
\end{equation}
where $m$ is the total number of applied evolution blocks.
If the conjugated terms in \eqref{eq:decomp_ex} mutually commute, then the target evolution is implemented exactly, and we have $m = \dd$.
Otherwise, the evolution can be approximated using, potentially higher-order, product formulas such as the Trotter formula, then we have $m > \dd$.
We want to highlight, that the only approximation of the target dynamics arises from the product formula, since our method provides an exact decomposition.

In the following discussion of this section we illustrate our method on a system with native Ising interactions of arbitrary coupling strengths
\begin{equation}
\label{eq:ising_sys_ex}
 H_S = \sum_{i \neq j} J_{ij} Z_i Z_j \, ,
\end{equation}
with $J_{ij} \in \R$.
Note, that such a system Hamiltonian contains $n(n-1)/2$ distinct interaction terms.

\subsection{Generality}
A natural question to ask is which target Hamiltonians $H_T$ can be generated by \cref{eq:decomp_ex}. In the following, we consider two different approaches:
Pauli conjugation via $\pi$ pulses (also known as refocusing pulses), and Clifford conjugation via $\pi/2$ pulses.
We also assume that we have local control over individual qubits, thus we can apply arbitrary $\pi$ or $\pi/2$ pulses to any subset of qubits.
This level of control is necessary if we want to engineer arbitrary target coupling strengths.
A lower level of control will restrict the possible target couplings but might still be sufficient in certain situations.

\paragraph*{Pauli Conjugation.}
This method allows the modification of individual coupling strengths in the system Hamiltonian.
For an Ising system Hamiltonian (general Hamiltonians are discussed in \cref{sec:pauli_conjugation}),
we can engineer target Hamiltonians of the form
\begin{equation}
 H_T = \sum_{i \neq j} A_{ij} Z_i Z_j \, ,
\end{equation}
with arbitrary interaction strength $A_{ij} \in \R$.
This enables the individual control over all $\dd = n/2 (n-1)$ Ising interactions.
Moreover, it is possible to modify interactions of unknown strength or even cancel them, see \cref{sec:unknown_hermitian_op}.
We illustrate this capability in \cref{sec:2D_lattice_model}, where pulse sequences are constructed to engineer an Ising Hamiltonian in the presence of additional, unknown three-body interaction terms.
It is also possible to suppress certain long-range interactions in an Ising Hamiltonian to implement local multi-qubit gates.

\paragraph*{Clifford Conjugation.}
By using $\pi/2$ pulses, we can not only modify interaction strengths but also change interaction types, which we discuss in \cref{sec:clifford_conjugation} in detail. 
In particular, Ising system Hamiltonian can be engineered into arbitrary 2-local Hamiltonians of the form
\begin{equation}
 H_T = \sum_{\substack{P\in \mathsf{P}^n \\ |\supp (P)| = 2}} A_{P} P \, ,
\end{equation}
with arbitrary interaction strength $A_{P} \in \R$.
This yields individual control over all $\dd = 3^2 n (n-1)/2$ distinct 2-local interactions.
The prefactor $3^2$ stems from the enlarged family of reachable target Hamiltonians.
In general, for any $k$-body interaction in the system Hamiltonian, we can engineer $3^k$ distinct interaction terms with individually tunable strengths.

\subsection{Efficiency}
The \ac{LP} described above involves $\dd$ equality constraints, one per target interaction term, but the number of variables $\rr$ required for an optimal solution scales exponentially with the number of qubits, i.e.\ $\rr = 4^n$ for $\pi$ pulses or $\rr = 12^n$ for $\pi/2$ pulses.
To address this, we propose an efficient relaxation that reduces the number of variables needed.
In particular, it suffices to choose
\begin{equation}
 \rr > 2 \dd
\end{equation}
variables to obtain a solution of the \ac{LP}.
Due to the relaxation the solution does not minimize the total evolution time.
However, it provides a low evolution time solution that still yields an exact decomposition as in \cref{eq:decomp_ex}.
This reduced \ac{LP} allows a flexible trade-off between classical runtime and resulting quantum evolution time, increasing $\rr$ reduces evolution time at the cost of increased classical runtime.
For more details on the efficient \ac{LP}, see \cref{sec:eff_pauli_heu,sec:eff_cliff_heu}, and for a numerical investigation of the efficiency we refer to \cref{sec:efficiency_LP_approach}.

For Ising interactions, this leads to an exponential reduction in \ac{LP} size, e.g.\ $\rr > n(n-1)$ for $\pi$ pulses and $\rr > 3^2 n(n-1)$ for $\pi/2$ pulses.

\subsection{Robustness}
In practice, single-qubit pulses have some non-zero duration $t_p > 0$. 
Depending on the platform, the system Hamiltonian $H_S$ remains active during that time.
As a result, the implemented pulse and its inverse becomes
\begin{equation}
 \vec S_{\err} = \e^{- \i t_p (H_S + H_c)} \quad \text{and} \quad \vec S_{\err}^\prime = \e^{- \i t_p (H_S - H_c)}\, ,
\end{equation}
where $H_c$ denotes the control pulse Hamiltonian.
Using \acf{AHT}, we approximate the overall finite pulse time effect as
\begin{equation}
 \vec S_{\err}^\prime \e^{-\i t H_S} \vec S_{\err} \approx \e^{-\i t \vec S^\dagger H_S \vec S + H_{\err}} \, .
\end{equation}
A key insight is, that the error term $H_{\err}$ has the same locality as $H_S$ in first order Magnus expansion and can, hence, be incorporated into the efficient \ac{LP} framework.
Therefore, the error $H_{\err}$ can be cancelled exactly, see \cref{sec:robustness} for more details.

Moreover, our framework supports the use of robust composite pulses in place of $\pi/2$ pulses.
This enables mitigation of various control errors, including rotation angle error (Rabi frequency errors) and off-resonance errors \cite{Masamitsu2013,kukita2021}.
We provide numerical simulations validating the robustness of our approach under realistic error models in \cref{sec:num_sim}.
\begin{figure*}[ht]
	\centering
	\includegraphics{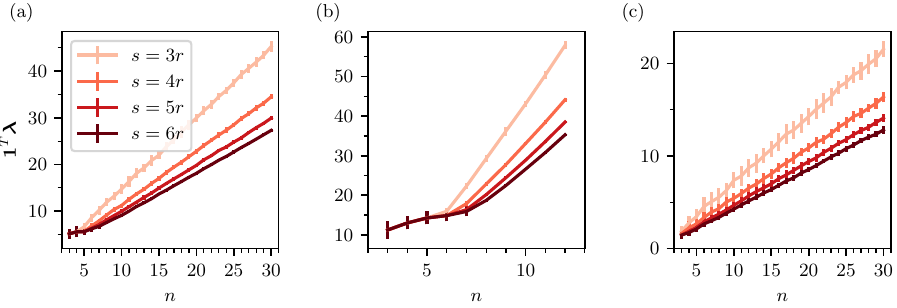}
	\includegraphics{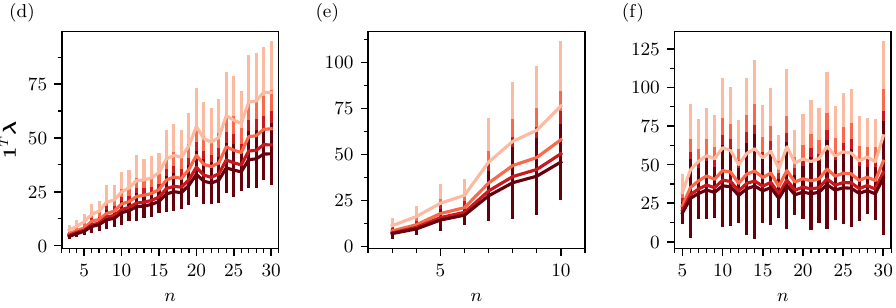}
	\caption{
			Scaling of total evolution time $\sum_i \lambda_i = \vec 1^T \vec \lambda$ with system size $n$ for randomly sampled target Hamiltonians.
			Each data point represents the average and error bars represent the sample standard deviation over $50$ uniform random samples of the target interaction strengths $\vec A$ from $[-1 , 1]^\dd$.
			The numbers of variables in the \ac{LP} is varied from $\rr = 3\dd$ to $6\dd$ to demonstrate the reduction of the total evolution time when increasing the search space.
			\textbf{Top (a-c)}: Results for the efficient Pauli conjugation \ac{LP} with $\pi$ pulses, using 2-local (a), 3-local (b), and random many-body (c) system Hamiltonians.
			\textbf{Bottom (d-f)}: Results for the Clifford conjugation \ac{LP} with $\pi/2$ pulses, using similar 2-local (d) and 3-local (e) system Hamiltonians, and a randomly sampled 5-local Hamiltonian in (f).
			Moreover, for (d-e) randomly selected interaction terms in the system Hamiltonian are set to zero.
	}
	\label{fig:eff_heu_pauli_LP}
\end{figure*}

% ===================
\section{Numerical results}
\label{sec:num_res}
% ===================
In this section, we assess the performance of our method by numerical simulations.
We divide the analysis into two parts.
First, we demonstrate the efficiency of our \ac{LP} approach in terms of the total evolution time and classical runtime.
This includes the scaling of the total evolution time for engineering general Hamiltonians (2-local, 3-local and random many-body Hamiltonians), and the scaling of the classical runtime for 2-local Hamiltonians on a 2D square lattice.
Second, we simulate the time evolution under a target Hamiltonian, evaluating the robustness of our approach.
These simulations include Ising interactions on a 2D lattice, unwanted 3-body interactions, and control imperfections, as well as general Heisenberg Hamiltonians with imperfect pulse control.

Our robust \ac{LP} methods can be replaced by a \acf{MILP}, which we define in \cref{sec:MILP}, to reduce the number of required pulses at the cost of additional integer valued constraints.
All \acp{LP} and \acp{MILP}, in our work are modelled with the Python package CVXPY \cite{agrawal2018rewriting,diamond2016cvxpy} and, if not otherwise stated, solved with the MOSEK solver \cite{mosek} on a workstation with $130\,$GB RAM and the AMD Ryzen Threadripper PRO 3975WX 32-Cores processor.
Furthermore, we made the code to reproduce all figures publicly available on GitHub \cite{GitHub}.

\begin{figure*}[ht]
	\centering
	\includegraphics{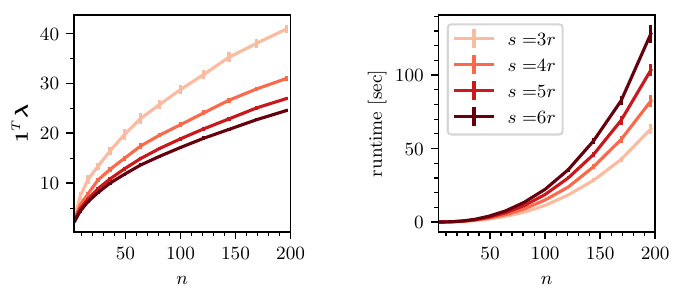}
	\caption{
			A 2-local system Hamiltonian on a 2D square lattice with $n = 4, \dots , 196$ qubits as in \cref{eq:2D_lattice} and a constant interaction strenght $J$.
			Each data point represents the average and error bars represent the sample standard deviation over $50$ uniform random samples of the target interaction strengths $\vec A$ from $[-1 , 1]^\dd$.
			\textbf{Left}: The evolution time over the number of qubits $n$ is shown.
			\textbf{Right}: The runtime for solving the Pauli conjugation \ac{LP} over the number of qubits $n$ is shown.
	}
	\label{fig:2D_lattice_ham}
\end{figure*}

% -------------------
\subsection{Generality and efficiency of the linear program approach}
\label{sec:efficiency_LP_approach}
% -------------------
\subsubsection{Scaling of the total evolution time}
We solve the efficient Pauli and Clifford conjugation \acp{LP}, which utilizes $\pi$ and $\pi/2$ pulses and are defined in \cref{sec:pauli_conjugation,sec:clifford_conjugation}, respectively.
Both of which aim to reduce the total evolution time $\vec 1^T \vec \lambda = \sum_i \lambda_i$.
Increasing the number of considered single-qubit pulse layers $\rr$ yields a larger search space for the \ac{LP}, which typically finds a lower optimum, thus decreases the total evolution time.
We investigate how the total evolution time scales with the number of variables $\rr$ and the system size $n$, considering various families of Hamiltonians with all-to-all connectivity,
which are well-suited to explore the generality of our approach.
Note, that a $k$-local Hamiltonian with all-to-all connectivity contains $\sum_{i=2}^k 3^i \binom{n}{i}$ interaction terms.
For instance, an Ising Hamiltonian includes $\binom{n}{2} = n(n - 1)/2$ terms, while a general 2-local Hamiltonian has $3^2 \binom{n}{2} = (9/2) n(n - 1)$ terms.

In the Pauli conjugation \ac{LP}, the number of constraints $\dd$ equals the number of non-zero terms in the system Hamiltonian.
We solve instances of this \ac{LP} with system Hamiltonians containing: all possible 2-local interactions ($\dd = 3^2 \binom{n}{2}$), all possible 3-local interactions ($\dd = 3^2 \binom{n}{2} + 3^3 \binom{n}{3}$) and randomly selected many-body interactions with $\dd = n^2$.
Results are shown in \cref{fig:eff_heu_pauli_LP}(a–c).
In all cases, the total evolution time scales linearly with system size, even with the polynomial scaling in the number of interaction terms.

For the Clifford conjugation \ac{LP}, the number of constraints $\dd$, i.e.\ the number of interactions that can be generated, depends on both the number and locality of the interactions in the system Hamiltonian.
We again consider 2-local and 3-local system Hamiltonians, as well as a random 5-local Hamiltonian.
The random 5-local Hamiltonian is constructed by choosing $10$ random localities $\operatorname{supp} (P)$ for each $i = 2, \dots , 5$ and including all $3^i$ interaction terms per locality, resulting in $\dd = 10 \sum_{i=2}^5 3^i$ total terms which is independent of the system size.
To add further variability, we randomly set some interaction strengths $J_{P}$ to zero, ensuring that at least one term per locality remains non-zero.
Non-zero interaction strengths are drawn independently from the uniform distribution, $J_{P} \overset{\text{i.i.d.}}{\sim} \operatorname{unif}([-1,+1])$.
So, the number, index, and values of the non-zero interactions in the system Hamiltonians are random.
\Cref{fig:eff_heu_pauli_LP}(d-f) shows the scaling results for these systems.
The evolution time again scales roughly linearly for 2-local systems, slightly super-linearly for 3-local systems, and appears nearly constant for the random 5-local case with constantly many interactions.
Due to the randomization, these simulations exhibit greater sample standard deviation compared to the Pauli conjugation approach.

\subsubsection{Evolution time vs.\ classical runtime}
\label{sec:2D_lattice}
To further assess practical efficiency, we evaluate the performance of our Pauli conjugation method on 2-local Hamiltonians defined on a 2D square lattice.
Such system Hamiltonians are in a broad sense motivated by experimental quantum platforms such as interacting superconducting qubits or cold atoms in optical lattices \cite{Arute2019QuantumSupremacy_short_auth,Schmid_2023,Henriet2020quantumcomputing}.

We consider the following system and target Hamiltonians
\begin{equation}
\label{eq:2D_lattice}
 H_S = J \sum_{\substack{P\in \mathsf{P}^n \\ \supp (P) \in E}} P
 \, , \quad
 H_T = \sum_{\substack{P\in \mathsf{P}^n \\ \supp (P) \in E}} A_{P} P \, ,
\end{equation}
where each $P$ represents a 2-body interaction term on a 2D square lattice with the set of edges $E$.
For a $d \times d$ lattice the number of interactions is $\dd = 3^2 \abs{E} = 3^2 2d(d-1)$.
The system interaction strength $J$ is uniform across all interactions.
This system Hamiltonian includes all possible interactions on the edges, i.e.\ interactions of the form $Z_i Y_j$, $X_i Z_j$ or $Y_i Y_j$ with $(i,j) \in E$.

To emphasize scalability even on commodity hardware, we solve the efficient Pauli conjugation \ac{LP} with the MOSEK solver \cite{mosek} on a laptop with Intel Core i7 Processor ($8\,$x$\,1.8\,$GHz) and $16\,$GB RAM.
As shown in \cref{fig:2D_lattice_ham}, our method can engineer a target Hamiltonian $H_T$ with arbitrary couplings $\vec A$ on $196$ qubits in about $60$ seconds of classical runtime using $\pi$ pulses.
Notably, the total evolution time scales sublinear with system size, allowing a fast implementation of the target Hamiltonian on platforms with restricted connectivity.

% -------------------
\subsection{Numerical Hamiltonian simulations}
\label{sec:num_sim}
% -------------------
To benchmark our methods, we perform numerical simulations that model the most relevant error sources occurring in practice.
We consider a device with 2D lattice interactions, motivated by superconducting qubit platforms, and one with all-to-all connectivity modelling an ion trap.
However, the presented methods are also applicable to other platforms such as cold atoms in the weak coupling regime.

For simplicity, we describe the transformation of the system Hamiltonian $H_S$ with respect to a single pulse layer.
However, our simulations employ transformations of $H_S$ containing multiple pulse layers, as detailed in the general discussion in \cref{sec:robust_general}.
Let the ideal control Hamiltonian for a single pulse layer on $n$ qubits be given by
\begin{equation}
 H_c = \frac{1}{t_p} \sum_{i=1}^n h_i \, ,
\end{equation}
where $h_i = (\pi/2) P_i$ for $\pi$ pulses and $h_i = (\pi/4) P_i$ for $\pi/2$ pulses, with $P_i = I^{\otimes (i-1)} \otimes P \otimes I^{\otimes (n-i)}$ and $P \in \{I,X,Y,Z\}$.
To model realistic errors, we introduce two sources of pulse imperfections by
\begin{equation}
\label{eq:rot_err_model}
 H_{c, (\varepsilon, f)} = \frac{1}{t_p} \sum_{i=1}^n \left( (1+\varepsilon_{i}) h_{i} + f_i Z_i \right) \, ,
\end{equation}
where $\varepsilon_{i}\overset{\text{i.i.d.}}{\sim} \operatorname{unif}([0, \varepsilon])$ is the \emph{relative angle error} and $f_{i}\overset{\text{i.i.d.}}{\sim} \operatorname{unif}([0, f])$ is the \emph{off-resonance error} on the $i$-th qubit.
Both errors are sampled once and represent faulty and inhomogeneous control of the single-qubit pulses.

\begin{figure*}[ht]
	\begin{minipage}{0.3\linewidth}
	 \includegraphics{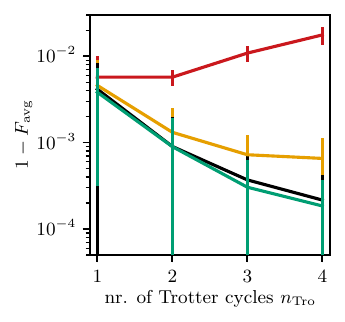}
	\end{minipage}
	\begin{minipage}{0.6\linewidth}
	\begin{tabular}{l|c|c|c}
	  \multicolumn{1}{r|}{Error: } & finite pulse time & rotation angle & three-body \\
	  \hline
	  \tikzLabel{green} \textsf{robust Pauli}    & \textsf{X} & \textsf{X} & \textsf{X} \\
	  \hline
	  \tikzLabel{myRed} \textsf{naive}           & \textsf{X} & \textsf{X} & \textsf{X} \\
	  \hline
	  \tikzLabel{myOrange} \textsf{angle error}     &   & \textsf{X} & \textsf{X} \\
	  \hline
	  \tikzLabel{black} \textsf{exact (Trotter)} & & & \textsf{X} \\
	\end{tabular}
	\includegraphics{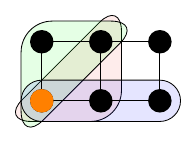}
	\end{minipage}
	\caption{\textbf{Left}: The sample mean and standard deviation of the average gate infidelities \cref{eq:avg_infidelity} for implementing the time evolution of $\e^{-\i t H_T}$ with $H_T$ from \cref{eq:ising_target} and $t=1\, \second$ over the number of Trotter cycles $\nTro$ is shown.
			The first-order Trotter formula from \cref{eq:first_order_trotter_sim} is used to approximate the time evolution.
			The sample mean and standard deviation are calculated over $50$ random samples of the target Hamiltonians $H_T$.
			Note, that each of the $\nTro$ Trotter cycles contains $\pp \dd$ evolution blocks $U(t\lambda_{\vec c})$, where $\pp = 8$ and $\rr = 17$, with different rotation directions for the $\pi$ pulses, see \cref{sec:robust_pauli} for more details.
			\textbf{Top Right}: Table indicating which error types are present for the different simulations.
			\textbf{Bottom right}: Example $2 \times 3$ 2D square lattice on $n=6$ qubits with the two-body interactions, solid black lines, and the three-body interactions for the lower left qubit, colored areas.
			}
	\label{fig:2D_lattice}
\end{figure*}

We also take into account the finite pulse time effects by modeling such single-qubit pulse layers
\begin{equation}
\vec S_{\err} = \e^{- \i t_p (H_S + H_{c, (\varepsilon, f)})} \quad \text{and} \quad \vec S_{\err}^\prime = \e^{- \i t_p (H_S - H_{c, (\varepsilon, f)})} \, ,
\end{equation}
which capture the joint evolution under the system Hamiltonian $H_S$ and the imperfect control pulse Hamiltonian $H_{c, (\varepsilon, f)}$.
In the absence of $H_S$, these two operators are exact inverses.
We define the resulting erroneous \emph{evolution block} as
\begin{equation}
\label{eq:evo_block_finite_pulse_time_sim}
 U(t \lambda) = \vec S_{\err}^\prime \e^{-\i t \lambda H_S} \vec S_{\err} \, ,
\end{equation}
which can be generalized to the expression \eqref{eq:evo_block_finite_pulse_time} below.

We numerically simulate the time evolution given by the (robust) Pauli and Clifford conjugation methods by explicitly computing products of matrix exponentials.
More precisely, the decomposition of the target Hamiltonian is implemented using the first- or second-order Trotter formula,
\begin{equation}
\label{eq:first_order_trotter_sim}
 U_\mathrm{sim} = \left(\prodr_{\vec c} U\left(\frac{t \lambda_{\vec c}}{\nTro}\right) \right)^{\nTro}
\end{equation}
or
\begin{equation}
\label{eq:sec_order_trotter}
U_\mathrm{sim} = \left(\prodl_{\vec c} U\left(\frac{t\lambda_{\vec c}}{2 \nTro}\right) \prodr_{\vec c} U\left(\frac{t\lambda_{\vec c}}{2 \nTro}\right) \right)^{\nTro} \, ,
\end{equation}
where a single evolution block $U(t\lambda_{\vec c})$ consists of the time evolution under the system Hamiltonian conjugated by single-qubit pulses as in \cref{eq:evo_block_finite_pulse_time}, and thus explicitly models finite pulse time errors.
Throughout our work we use the convention $\prodr_{i=1}^L A_i \coloneqq A_1\dots A_L$ and $\prodl_{i=1}^L A_i \coloneqq A_L\dots A_1$ to indicate the order of the products of non-commuting operators.

To capture the quality of the implementation, $U_\mathrm{sim}$ is compared to the target evolution $U_T = \e^{-\i t H_T}$.
As a measure of quality, we use the \emph{average gate infidelity}
\begin{equation}
\label{eq:avg_infidelity}
 1 - F_{\mathrm{avg}} (U_\mathrm{sim} , U_T) = 1 - \frac{\Tr (U_T^\dagger U_\mathrm{sim}) +1}{d+1} \, ,
\end{equation}
where $d$ is the Hilbert space dimension.

% -------------------
\subsubsection{Simulation of a 2D lattice model}
\label{sec:2D_lattice_model}
% -------------------
We consider a quantum platform with a native $2 \times 3$ lattice Hamiltonian with $n=6$ qubits as an abstract model for interacting superconducting qubits,
\begin{equation}
 H_S = J \sum_{ij} Z_i Z_j + \sum_{ijk} E_{ijk} X_i X_j X_k \, .
\end{equation}
This Hamiltonian has $\dd=17$ non-zero interaction terms.
We assume that there are unwanted three-body interactions of unknown strength, where we consider all possible nearest neighbor three-body interactions.
An example of the considered two- and three-body interactions are depicted in \cref{fig:2D_lattice} as solid black lines and colored areas respectively.
The two-body coupling coefficients are constant, $J = 10^3 \, \Hz$.
The three-body coupling coefficients are uniformly sampled $E_{ijk} \overset{\text{i.i.d.}}{\sim} \operatorname{unif}([-10^{2} , 10^{2}]\cdot \Hz)$ \emph{after} the design of the pulse sequences and therefore considered as unknown.
The finite $\pi$ pulse time is $t_p= 10^{-7} \, \second$.
We want to implement the Ising Hamiltonian
\begin{equation}
\label{eq:ising_target}
 H_T = \sum_{ij} A_{ij} Z_i Z_j \, ,
\end{equation}
with random coupling coefficients $A_{ij} \overset{\text{i.i.d.}}{\sim} \operatorname{unif}([10^{-1} , 1]\cdot \Hz)$.
The pulse errors are modeled as in \cref{eq:rot_err_model} with $\varepsilon = 10^{-1}$ and no off-resonance error.
\begin{figure*}[ht]
	\centering
    \includegraphics{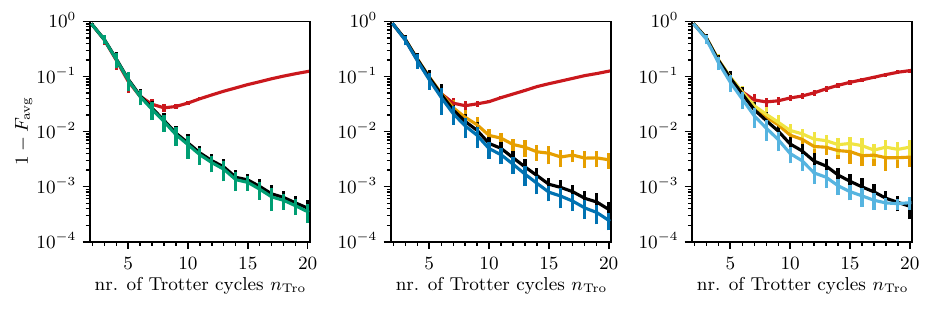}
\footnotesize
\textsf{
	\begin{tabular}{l|c|c|c}
	  \multicolumn{1}{r|}{\textnormal{Error:} }& \textnormal{finite pulse time} & \textnormal{rotation angle} & \textnormal{off-resonance} \\
	  \hline
	  \tikzLabel{green} robust Clifford          & X &   &  \\
	  \hline
	  \tikzLabel{darkBlue} CP$_\mathrm{SCROFULOUS}$ & X & X &  \\
	  \hline
	  \tikzLabel{lightBlue} CP$_\mathrm{SCROBUTUS}$  & X & X & X \\
	  \hline
	  \phantom{\tikzLabel{myRed}} naive (left)             & X &   &  \\
	  \tikzLabel{myRed} naive (middle)           & X & X &  \\
	  \phantom{\tikzLabel{myRed}} naive (right)            & X & X & X \\
	  \hline
	  \tikzLabel{myOrange} angle error              &   & X &  \\
	  \hline
	  \tikzLabel{myYellow} off-resonance error      &   &   & X \\
	  \hline
	  \tikzLabel{black} exact (Trotter)          & & &\\
	\end{tabular}}
	\caption{
			\textbf{Top}: The sample mean and standard deviation of the average gate infidelities \cref{eq:avg_infidelity} for implementing the time evolution of $\e^{-\i t H_T}$ with $H_T$ from \cref{eq:heisenberg} and $t=1\, \second$ over the number of Trotter cycles $\nTro$ is shown.
			The second-order Trotter formula from \cref{eq:sec_order_trotter} is used to approximate the time evolution.
			The sample mean and standard deviation are calculated over $50$ random samples of the Heisenberg Hamiltonians $H_T$ on $n=8$ qubits.
			\textbf{Top Left}: The Clifford conjugation robust against finite pulse time errors (dark green) is compared to the non-robust Clifford conjugation (red).
			\textbf{Top Middle}: The Clifford conjugation in combination with the SCROFULOUS pulse sequence \cite{Cummins2002} robust against finite pulse time errors and rotation angle errors (dark blue) is compared to the non-robust Clifford conjugation (red).
			\textbf{Top Right}: The Clifford conjugation in combination with the SCROBUTUS pulse sequence \cite{kukita2021} robust against finite pulse time errors, rotation angle errors and off-resonance errors (light blue) is compared to the non-robust Clifford conjugation (red).
			\textbf{Bottom}: Table indicating which error types are present for the different simulations.
	}
	\label{fig:noisy_hamiltonian}
\end{figure*}

In \cref{fig:2D_lattice}, we compare the \textsf{naive} Pauli conjugation from \cref{sec:pauli_conjugation} against the \textsf{robust Pauli} conjugation from \cref{sec:robust_pauli}.
For the sake of a clear presentation we moved details to \cref{app:details_2D_lattice_model}.
In the \textsf{angle error} and \textsf{exact (Trotter)} data we apply the same sequences as for the \textsf{naive} but with different errors, see table in \cref{fig:2D_lattice}.
To approximate the time evolution under $H_T$ we use the first-order Trotter formula from \cref{eq:first_order_trotter_sim} for all sequences.
Increasing the number of Trotter cycles $\nTro$ improves the accuracy of the Trotter approximation in absence of other errors.
However, an increased number of Trotter cycles $\nTro$ also yields an increased number of evolution blocks $U(t\lambda_{\vec c})$.
From \cref{fig:2D_lattice} it is clear that the finite pulse time errors and angle errors in the \textsf{naive} sequences quickly accumulate even for small number of Trotter cycles.
However, the \textsf{robust Pauli} sequences converge as quickly as the Trotter sequences without any errors.

% -------------------
\subsubsection{Simulation of Heisenberg Hamiltonians with an ion trap model}
\label{sec:ion_trap_model}
% -------------------
We consider an ion trap with Ytterbium ions in an external magnetic field gradient \cite{Piltz2016VersatileMicrowaveDriven}.
The effective system Hamiltonian is
\begin{equation}
 H_S = \sum_{i \neq j}^n J_{ij} Z_i Z_j \, ,
\end{equation}
where the coupling coefficients $J_{ij}$ are calculated for a harmonic trapping potential affecting the equilibrium positions of the ions in the magnetic field gradient (cf.~Ref.~\cite[{App.~A}]{basler_synthesis_2023}).
We consider all-to-all connectivity, thus the number of reachable interactions is $\dd = \frac{n}{2} (n-1)$ for the Pauli conjugation and $\dd = 3^2 \frac{n}{2} (n-1)$ for the Clifford conjugation.
The coupling coefficients $J_{ij}$ are proportional to $(B_1 / \omega)^2$, with a magnetic field gradient of $B_1 = 40\, T/m$ and a trap frequency of $\omega = 2\pi 500\, k\Hz$.
The finite $\pi$ pulse time is $t_p= 2 \, \mu \second$, which is proportional to $\pi / \Omega$ with the Rabi frequency $\Omega = \omega / 2$.
Moreover, we consider the rotation angle errors of strength $\varepsilon = 10^{-1}$ and off-resonance errors of strength $f = 10^{-1}$, which are modeled as in \cref{eq:rot_err_model}.
As a target Hamiltonian, we consider the general Heisenberg Hamiltonian
\begin{equation}
\label{eq:heisenberg}
 H_T = \sum_{i \neq j}^n \left(A^x_{ij} X_i X_j + A^y_{ij} Y_i Y_j + A^z_{ij} Z_i Z_j \right) \, ,
\end{equation}
with random coupling coefficients $A^x_{ij}, A^y_{ij}, A^z_{ij} \overset{\text{i.i.d.}}{\sim} \operatorname{unif}([10^{-1} , 1]\cdot \Hz)$ and all-to-all connectivity.

In \cref{fig:noisy_hamiltonian} we compare the \textsf{naive} Clifford conjugation from \cref{sec:clifford_conjugation} to the \textsf{robust Clifford}, \textsf{CP$_\mathrm{SCROFULOUS}$} and \textsf{CP$_\mathrm{SCROBUTUS}$} Clifford conjugations from \cref{sec:robust_cliff}.
Again, we moved details to \cref{app:details_ion_trap_model}.
The \textsf{CP$_\mathrm{SCROFULOUS}$} and \textsf{CP$_\mathrm{SCROBUTUS}$} conjugation implement the robust composite pulses SCROFULOUS \cite{Cummins2002} and SCROBUTUS \cite{kukita2021}, respectively.
These robust composite pulses are designed to compensate for angle errors (SCROFULOUS) or both angle errors and off-resonance errors (SCROBUTUS).
For the \textsf{angle error}, \textsf{off-resonance error} and \textsf{exact (Trotter)} data we apply the same sequences as for the \textsf{naive} data but with different errors, see table in \cref{fig:noisy_hamiltonian}.
To approximate the time evolution under $H_T$ we use the second-order Trotter formula from \cref{eq:sec_order_trotter} for all sequences.
As for the Pauli conjugation method in \cref{sec:2D_lattice_model}, we can observe that the finite pulse time errors and angle errors in the \textsf{naive} sequences quickly accumulate even for moderate number of Trotter cycles.
However, the \textsf{robust Clifford}, \textsf{CP$_\mathrm{SCROFULOUS}$} and \textsf{CP$_\mathrm{SCROBUTUS}$} sequences converge as quickly as the Trotter sequences without any errors, and seem to be even more accurate at moderate numbers of Trotter cycles.

Note, that the rotation angle error robustness in the \textsf{CP$_\mathrm{SCROFULOUS}$} Clifford conjugation is stronger than in the \textsf{robust Pauli} conjugation in \cref{sec:2D_lattice_model}. Since each $\pi/2$ pulse (replaced by robust composite pulses) is robust against different angle errors whereas in the \textsf{robust Pauli} conjugation the angle errors are assumed to be constant over several implementations of $U(t\lambda_{\vec c})$.

% -------------------
\subsection{Summary of numerical results}
\label{sec:num_sum}
% -------------------
This section presents a comprehensive numerical demonstration of the performance of our Hamiltonian engineering framework.
We show that our methods achieve linear or sublinear scaling in evolution time with system size, particularly for systems with constrained connectivity.
We construct pulse sequences for engineering a 2D lattice Hamiltonian with $196$ qubits in about $60$ seconds on a laptop, highlighting the efficiency of our approach.
Moreover, we show how our methods can be tailored for robustness against a variety of realistic control errors, including finite pulse durations, angle errors, and off-resonance effects.
In summary, these numerical results underpin our claim of a general, efficient and robust method to engineer Hamiltonians.

% ===================
\section{Hamiltonian engineering by linear programming}
\label{sec:reshaping}
% ===================
In \cref{sec:main_results} we already introduced our \ac{LP} approach for Hamiltonian engineering.
Now, we provide the technical details in full generality.

We begin, by introducing a useful representation of Pauli operators as binary strings.
$\FF_2$ is the finite field over two elements, i.e.\ the set $\{0, 1\}$, where multiplication and addition are performed modulo $2$.
We index each $n$-qubit Pauli operator $P_{\vec a} \in \mathsf{P}^n$ by a binary vector $\vec a=(\vec a_x, \vec a_z) \in \FF_2^{2n}$ such that
\begin{equation}
 P_{\vec a} = P_{(\vec a_x, \vec a_z)} = \i^{\vec a_x \cdot \vec a_z} X(\vec a_x) Z(\vec a_z) \, .
\end{equation}
Here, $X(\vec x) \coloneqq X^{x_1}\otimes\dots\otimes X^{x_n}$ and $Z(\vec z) \coloneqq Z^{z_1}\otimes\dots\otimes Z^{z_n}$ where $X$, $Z$ are the single-qubit Pauli matrices.
Sometimes, we also call $P_{\vec a}$ a \emph{Pauli string}.
For example, $P_{\vec a} = X \otimes Y \otimes X \otimes I$ has the binary representation $\vec a = ((1,1,1,0),(0,1,0,0)) \equiv (1,1,1,0,0,1,0,0)$.
The Pauli strings satisfy the following commutator relation
\begin{equation}
\label{eq:pauli_conjugation}
 P_{\vec a} P_{\vec b} = (-1)^{\langle \vec a,\vec b \rangle} P_{\vec b} P_{\vec a} \, ,
\end{equation}
for any $\vec a, \vec b \in \FF_2^{2n}$
with the binary symplectic form on $\FF_2^{2n}$ defined by $\langle \vec a,\vec b \rangle \coloneqq \vec a_x \cdot \vec b_z + \vec a_z \cdot \vec b_x$, where the sum is taken in $\FF_2$, i.e., modulo $2$.
With this notation, the locality of a Pauli operator $P_{\vec a}$ is defined as $\supp (P_{\vec a}) \coloneqq \supp (\vec a) \coloneqq \{ i \in [n] | a_{x_i} \neq 0 \text{ or } a_{z_i} \neq 0 \}$.

We write $H_S$ and $H_T$ in their Pauli decompositions, utilizing the binary notation,
\begin{equation}
\label{eq:H_S_H_T}
 H_S = \sum_{\vec a \in \FF_2^{2n}\setminus \{ \vec 0 \}} J_{\vec a} P_{\vec a}
 \, , \qquad
 H_T = \sum_{\vec a \in \FF_2^{2n}\setminus \{ \vec 0 \}} A_{\vec a} P_{\vec a} \, ,
\end{equation}
with $J_{\vec a}, A_{\vec a} \in \RR$.
We exclude the term $P_{\vec 0} = P_{(0, \dots, 0)} = I^{\otimes n}$, leading to a global and thus unobservable phase.
Our goal is to simulate the time evolution under $H_T$ by the one under $H_S$ interleaved with layers of single-qubit pulses.
Recall, that we utilize the linear transformation $\e^{-\i t U^\dagger H U} = U^\dagger \e^{-\i t H} U$, which holds for any Hamiltonian $H$ and unitary $U$.
Hence, we seek a decomposition of the form in \cref{eq:decomp_ex}, with terms $\lambda_i \vec S_i^\dagger H_S \vec S_i$, where $\lambda_i > 0$ and $\vec S_i = S_i^{(1)} \otimes \dots \otimes S_i^{(n)}$ representing a layer of single-qubit pulses, which we later set to be either $\pi$ or $\pi/2$ pulses (i.e.\ Pauli or Clifford gates).
To this end, we consider the effect of $H_S \mapsto \vec S_i^\dagger H_S \vec S_i$, similar as in \cref{eq:first_general_ideal_matrix}, on the Pauli coefficients $J_{\vec a}$ which can be captured by a column vector $\WG_{i} \in \RR^\dd$
with elements
\begin{equation}
\label{eq:general_ideal_matrix}
\WG_{\vec a, i} \coloneqq \frac{1}{2^n} \Tr \left(P_{\vec a} \left(\vec S_i^\dagger H_S \vec S_i \right) \right) \,.
\end{equation}
Here, $\dd$ is the number of interaction terms $\{P_{\vec a}\}$ we can generate from the ones in $H_S$ by any pulse layer $\vec S_i$.
Let $\rr$ be the number of possible pulse layers on $n$ qubits, and define the matrix $\WG \in \RR^{\dd \times \rr}$ by taking all vectors $\WG_i$ as columns.
Then, a decomposition \eqref{eq:decomp_ex} can be found by solving \eqref{eq:LP} minimizing the total \emph{relative evolution time}. 
Conjugation of $H_S$ with arbitrary single-qubit pulses yields a $\dd$ which is only limited by the locality of the interactions in $H_S$.
Below, we restrict ourselves to $\pi/2$ pulses achieving the same flexibility.

The first main contribution of our work is a hierarchy of relaxations of \eqref{eq:LP} with drastically reduced size that still allows for an exact decomposition as in \cref{eq:decomp_ex}.
For this, we need general sufficient conditions on the matrix $\WG$ such that \eqref{eq:LP} has a feasible solution 
for any $\vec A \in \R^\dd$.
\begin{definition}
\label{def:feasibleW}
 We say that a matrix $\WG \in \RR^{\dd \times \rr}$ is \emph{feasible} if for each $\vec A \in \RR^\dd$ there exists a $\vec \lambda \in \RR^\rr_{\geq 0}$ such that $\WG \vec \lambda = \vec A$.
\end{definition}
This definition captures the constraints in \eqref{eq:LP}. 
There is a simple sufficient condition for the feasibility of the matrix $\WG$:

\begin{proposition}
\label{thrm:feasibility}
 Let $\WG \in \RR^{\dd \times \rr}$ such that $\operatorname{ker} (\WG^T) = \{ \vec 0 \}$. If there exists $ \vec x \in \RR^\rr$ such that $\WG \vec x = \vec 0$ and $\vec x > \vec 0$, then for any $\vec A \in \RR^\dd$ there exists $ \vec x^\prime \in \RR^\rr$ such that $\WG \vec x^\prime = \vec A$ and $\vec x^\prime \geq \vec 0$.
\end{proposition}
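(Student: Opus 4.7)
The plan is to prove the proposition by contradiction, combining the two alternative-type results (Farkas and Stiemke) stated just above. Concretely, I would assume that for some $\vec A \in \RR^\dd$ there is no $\vec x' \geq \vec 0$ with $\WG \vec x' = \vec A$, and then derive a contradiction from the two hypotheses (the kernel condition on $\WG^T$ and the existence of a strictly positive vector in $\ker(\WG)$).

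First I would apply \cref{lem:farkas} to the failing system $\WG \vec x' = \vec A$, $\vec x' \geq \vec 0$. Its infeasibility puts us in the second alternative of Farkas, i.e., there exists $\vec y \in \RR^\dd$ with $\WG^T \vec y \geq \vec 0$ and $\vec A^T \vec y < 0$. Next, I would turn to \cref{lem:stiemke}. By assumption there exists $\vec x \in \RR^\rr$ with $\WG \vec x = \vec 0$ and $\vec x > \vec 0$, so the first alternative of Stiemke holds; consequently the second alternative fails, meaning there is no $\vec y$ with $\WG^T \vec y \gneq \vec 0$. Combined with $\WG^T \vec y \geq \vec 0$ from Farkas, this forces $\WG^T \vec y = \vec 0$, i.e., $\vec y \in \ker(\WG^T)$.

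At this point the kernel hypothesis $\ker(\WG^T) = \{\vec 0\}$ kicks in to give $\vec y = \vec 0$, which immediately contradicts $\vec A^T \vec y < 0$. Thus the assumed infeasibility cannot occur, and the desired $\vec x'$ exists for every $\vec A \in \RR^\dd$.

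I do not expect any real obstacle here: the two alternative theorems are tailor-made for this situation, and the proof is essentially a bookkeeping exercise of chaining them together. The only mildly delicate point is distinguishing the two flavors of nonnegativity ($\WG^T \vec y \geq \vec 0$ from Farkas versus $\WG^T \vec y \gneq \vec 0$ from Stiemke), but these fit together precisely so that Stiemke rules out any strictly positive component, collapsing Farkas's inequality into an equation and letting the injectivity of $\WG^T$ finish the argument.
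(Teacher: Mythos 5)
Your proof is correct and follows essentially the same route as the paper: rule out the second Farkas alternative by using Stiemke to exclude $\WG^T \vec y \gneq \vec 0$, so that $\WG^T \vec y \geq \vec 0$ collapses to $\WG^T \vec y = \vec 0$, whence the kernel hypothesis forces $\vec y = \vec 0$ and contradicts $\vec A^T \vec y < 0$. Your write-up is in fact slightly cleaner than the paper's, which redundantly discusses a case $\WG^T \vec y < \vec 0$ that is already excluded by Farkas's second alternative.
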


This proposition follows directly from two well-known results in convex analysis.

\begin{lemma}[Farkas \cite{farkas1902}]\label{lem:farkas}
 Let $\WG \in \RR^{\dd \times \rr}$ and $\vec A \in \RR^\dd$. Then exactly one of the following assertions is true:
 \begin{enumerate}
  \item $\exists \vec x \in \RR^\rr$ such that $\WG \vec x = \vec A$ and $\vec x \geq \vec 0$.
  \item $\exists \vec y \in \RR^\dd$ such that $\WG^T \vec y \geq \vec 0$ and $\vec A^T \vec y < \vec 0$.
 \end{enumerate}
\end{lemma}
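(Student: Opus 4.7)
The plan is to prove this via the separating hyperplane theorem, after verifying that the two alternatives are mutually exclusive. The ``at most one'' direction is a one-line calculation: if both held, with $\vec x \geq \vec 0$ satisfying $\WG \vec x = \vec A$ and $\vec y$ satisfying $\WG^T \vec y \geq \vec 0$ and $\vec A^T \vec y < 0$, then
\begin{equation}
\vec A^T \vec y = (\WG \vec x)^T \vec y = \vec x^T (\WG^T \vec y) \geq 0,
\end{equation}
contradicting $\vec A^T \vec y < 0$.

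For the ``at least one'' direction, I would assume that assertion~(1) fails and construct a $\vec y$ witnessing assertion~(2). The natural object is the finitely generated convex cone
\begin{equation}
 C \coloneqq \cone(\operatorname{col}(\WG)) = \{\WG \vec x : \vec x \in \RR^\rr_{\geq 0}\} \subseteq \RR^\dd,
\end{equation}
and the hypothesis is exactly $\vec A \notin C$. If one can show that $C$ is a closed convex set, then the separating hyperplane theorem yields a $\vec y \in \RR^\dd$ and $\alpha \in \RR$ with $\vec y^T \vec A < \alpha \leq \vec y^T \vec z$ for every $\vec z \in C$. Taking $\vec z = \vec 0 \in C$ forces $\alpha \leq 0$, so $\vec y^T \vec A < 0$. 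Using that $C$ is a cone, the inequality $\vec y^T (t \vec z) \geq \alpha$ for all $t > 0$ and $\vec z \in C$ implies $\vec y^T \vec z \geq 0$ on $C$; specializing to the columns of $\WG$ gives $\WG^T \vec y \geq \vec 0$, which is precisely assertion~(2).

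The main obstacle is showing that $C$ is closed; closedness of finitely generated cones is the classical (and nontrivial) input to Farkas. My preferred route is a Carathéodory-style argument: every $\vec z \in C$ can be written as a nonnegative combination of a \emph{linearly independent} subset of the columns of $\WG$, and the number of such subsets is finite. Hence $C$ is a finite union of sets of the form $\WG_I \RR^{|I|}_{\geq 0}$, where $\WG_I$ has linearly independent columns. Each such set is the image of the closed cone $\RR^{|I|}_{\geq 0}$ under the injective linear map $\WG_I$ (which is a homeomorphism onto its range), hence closed in $\RR^\dd$. A finite union of closed sets is closed, so $C$ is closed and the separation argument applies. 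With the closedness step in hand, the rest of the proof is bookkeeping.
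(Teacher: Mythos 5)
Your argument is correct. Note, however, that the paper does not prove this lemma at all: it is imported as a classical result with a citation to Farkas (1902), so there is no in-paper proof to compare against. What you have written is the standard modern proof --- mutual exclusivity by the one-line computation $\vec A^T\vec y = \vec x^T(\WG^T\vec y)\ge 0$, and existence via strict separation of $\vec A$ from the closed convex cone $C=\cone(\operatorname{col}(\WG))$, with the cone property used to push the separating threshold to $0$. You correctly identify the only nontrivial ingredient, namely closedness of finitely generated cones, and your route through the conic Carath\'eodory theorem (every point of $C$ is a nonnegative combination of a linearly independent subset of the columns, so $C$ is a finite union of images of $\RR^{|I|}_{\ge 0}$ under injective linear maps, each of which is closed) is sound. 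The one step you assert rather than prove is the conic Carath\'eodory reduction itself; if you want the write-up to be fully self-contained, add the two-line argument that a representation with a minimal number of nonzero coefficients must use linearly independent columns, since otherwise a linear dependence among them lets you shift coefficients until one more vanishes while all remain nonnegative. With that, the proof is complete and entirely consistent with the role the lemma plays in the paper.
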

\begin{lemma}[Stiemke \cite{stiemke1915}]\label{lem:stiemke}
 Let $\WG\in \RR^{\dd \times \rr}$. Then exactly one of the following assertions is true:
 \begin{enumerate}
  \item $\exists \vec x \in \RR^\rr$ such that $\WG \vec x = \vec 0$ and $\vec x > \vec 0$.
  \item $\exists \vec y \in \RR^\dd$ such that $\WG^T \vec y \gneq \vec 0$.
 \end{enumerate}
\end{lemma}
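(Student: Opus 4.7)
The plan is to mirror the structure of the Farkas proof: first verify that the two alternatives are mutually exclusive by a simple orthogonality argument, then apply \cref{lem:farkas} once for each coordinate $i \in [\rr]$ to conclude that at least one alternative must hold.

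For the exclusivity, I would suppose both assertions held: there exist $\vec x > \vec 0$ with $\WG \vec x = \vec 0$ and $\vec y$ with $\WG^T \vec y \geq \vec 0$, $\WG^T \vec y \neq \vec 0$. Then $0 = \langle \WG \vec x, \vec y\rangle = \langle \vec x, \WG^T \vec y\rangle$, but the inner product of a strictly positive vector with a nonzero nonnegative vector is strictly positive, which is a contradiction.

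For the main direction I would argue by contrapositive: assuming (2) fails, so that every $\vec y \in \RR^\dd$ satisfying $\WG^T \vec y \geq \vec 0$ already satisfies $\WG^T \vec y = \vec 0$, I will produce $\vec x > \vec 0$ with $\WG \vec x = \vec 0$. The construction is to exhibit, for each coordinate $i \in [\rr]$, a nonnegative kernel vector $\vec x^{(i)} \in \RR^\rr$ of $\WG$ whose $i$-th entry is at least $1$, and then take $\vec x^* \coloneqq \tfrac{1}{\rr}\sum_i \vec x^{(i)}$, which is automatically strictly positive. To obtain $\vec x^{(i)}$ I would apply \cref{lem:farkas} to the system $\WG \vec x = \vec 0$, $\vec e_i^T \vec x \geq 1$, $\vec x \geq \vec 0$, suitably reformulated with a nonnegative slack variable so as to fit the standard Farkas form. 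The Farkas alternative then says this system is infeasible iff there exist $\vec y$ and $\alpha < 0$ with $\WG^T \vec y + \alpha \vec e_i \geq \vec 0$; rescaling by $|\alpha|$ yields $\vec y'$ with $\WG^T \vec y' \geq \vec e_i$, so $\WG^T \vec y' \geq \vec 0$ and $(\WG^T \vec y')_i \geq 1 > 0$. This would contradict the assumed failure of (2), hence the Farkas system is feasible and the desired $\vec x^{(i)}$ exists.

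The main obstacle I anticipate is the careful bookkeeping of signs and slack variables when reducing the mixed equality/inequality system to the exact form of \cref{lem:farkas}. Once the stacked constraint matrix is written out explicitly, however, the dual alternative furnished by Farkas fits precisely the hypothesis $\WG^T \vec y \geq \vec 0 \Rightarrow \WG^T \vec y = \vec 0$ needed to derive the contradiction, so the rest of the argument is mechanical.
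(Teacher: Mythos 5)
The paper does not prove this lemma at all: it is quoted as a classical theorem of Stiemke with a citation to the 1915 original, exactly as \cref{lem:farkas} is quoted for Farkas. Your proposal therefore goes beyond what the paper does, and it is correct. The exclusivity argument is the standard orthogonality computation $0 = \langle \WG\vec x,\vec y\rangle = \langle \vec x, \WG^T\vec y\rangle$, which is indeed strictly positive when $\vec x>\vec 0$ and $\WG^T\vec y\gneq\vec 0$. For the existence direction, your reduction is sound: the system $\WG\vec x=\vec 0$, $x_i\geq 1$, $\vec x\geq\vec 0$ becomes a standard Farkas system after adding one slack variable, and its dual alternative produces $\vec y$ and $\alpha<0$ with $\WG^T\vec y\geq -\alpha\,\vec e_i$, hence after rescaling a $\vec y'$ with $\WG^T\vec y'\gneq\vec 0$, contradicting the assumed failure of assertion~2; averaging the $\rr$ resulting nonnegative kernel vectors $\vec x^{(i)}$ with $x^{(i)}_i\geq 1$ gives a strictly positive kernel vector. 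This is one of the textbook derivations of Stiemke's transposition theorem from Farkas' lemma, and its only benefit over the paper's treatment is self-containedness: the paper already assumes Farkas as a black box, so deriving Stiemke from it removes one independent citation. Nothing in the rest of the paper depends on which proof is used.
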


\begin{proof}[Proof of Prop.~\ref{thrm:feasibility}]
We show that the second assertion of Farkas lemma is not possible.
By \cref{lem:stiemke}, there does not exist a $\vec y$ such that $\WG^T \vec y \gneq \vec 0$.
From $\operatorname{ker} (\WG^T) = \{ \vec 0 \}$ it follows that if $\WG^T \vec y = \vec 0$, then $\vec y = \vec 0$ which contradicts $\vec A^T \vec y < \vec 0$ in the second assertion of \cref{lem:farkas}.
Finally, the case $\WG^T \vec y < \vec 0$ directly contradicts the second assertion of \cref{lem:farkas}.
\end{proof}
To provide a geometric interpretation of \cref{thrm:feasibility} we first define the \emph{convex hull of the column vectors} of $\WG$
\begin{equation}
\begin{aligned}
 &\operatorname{conv} (\WG) \coloneqq \\
 &\Set*{ \vec u \in \R^\dd \given \WG \vec x = \vec u, \vec x \geq \vec 0, \sum_i x_i = 1} \, ,
\end{aligned}
\end{equation}
and the \emph{interior of a polytope} $P$
\begin{equation}
\begin{aligned}
 & \operatorname{int} (P) \coloneqq \\
 &\Set*{ \vec u \in P \given \exists \varepsilon > 0 \text{ s.t. } \| \vec u - \vec x \| < \varepsilon \,\, \forall \vec x \in \R^\dd \Rightarrow \vec x \in P } \, .
\end{aligned}
\end{equation}
\eqref{eq:LP} is feasible if the convex hull of the column vectors of $\WG$ has an interior that contains the origin, $\vec 0 \in \operatorname{int} (\operatorname{conv} (\WG))$.
The conditions of \cref{thrm:feasibility} can be efficiently verified:
If the \ac{LP}
\begin{mini}
{}{\vec 1^T \vec x}{}{}
\label{eq:LP1}
\addConstraint{\WG \vec x} {= \vec 0}{}{}
\addConstraint{\vec x}{\geq \vec 1}{}{}
\end{mini}
has a feasible solution and $\WG$ has full rank, then $\WG$ is feasible.
Note that \cref{thrm:feasibility} applies to general matrices and will be useful for the efficient relaxations.

% -------------------
\subsection{Pauli conjugation}
\label{sec:pauli_conjugation}
% -------------------
Conjugation of a system Hamiltonian $H_S$ with a Pauli string $P_{\vec b}$, i.e.\ a $\pi$ pulse layer, leads to
\begin{equation}
\label{eq:pauli_conjugation_system}
 P_{\vec b}^\dagger H_S P_{\vec b} = \sum_{\vec a \in \FF_2^{2n}\setminus \{ \vec 0 \}} (-1)^{\langle \vec a,\vec b \rangle} J_{\vec a} P_{\vec a} \, ,
\end{equation}
which follows from the commutation relations \eqref{eq:pauli_conjugation}.
We seek a decomposition of $H_T$ of the form
\begin{equation}\label{eq:H_TandW}
\begin{aligned}
	H_T &= \sum_{\vec a \in \FF_2^{2n}\setminus \{ \vec 0 \}} A_{\vec a} P_{\vec a} \\
	&= \sum_{\vec b \in \FF_2^{2n}} \lambda_{\vec b} P_{\vec b}^\dagger H_S P_{\vec b} \\
	&= \sum_{\vec a \neq \vec 0, \vec b } \lambda_{\vec b} (-1)^{\langle \vec a,\vec b \rangle} J_{\vec a} P_{\vec a}
 \, , \quad
 (\lambda_{\vec b} \geq 0) \,.
\end{aligned}
\end{equation}
\begin{figure*}[ht]
	\centering
	\includegraphics{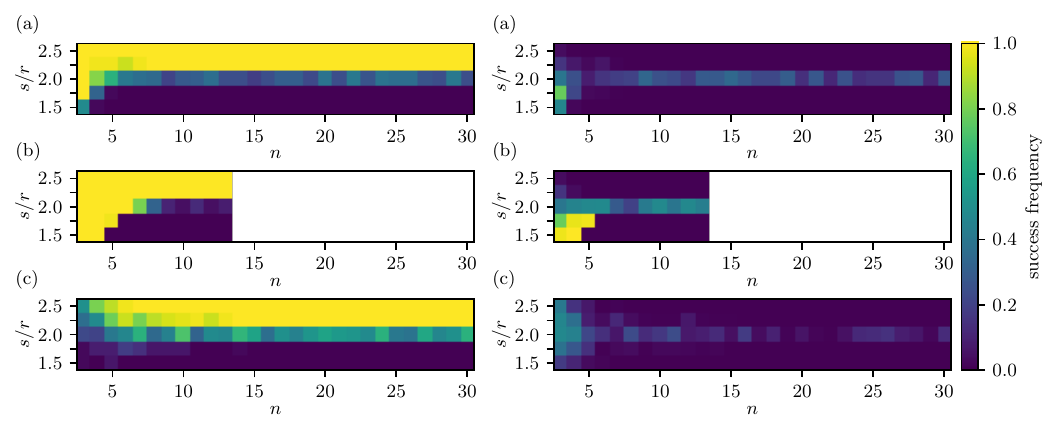}
	\caption{
 			We select $\dd$ rows of the Walsh-Hadamard matrix as determined by the considered interactions in the system Hamiltonian, to obtain the partial Walsh-Hadamard matrix $\WP{\dd}{4^n}$. 
 			Then we implement the subsampling of the possible pulse layers by randomly sampling $\rr$ columns to obtain $\WP{\dd}{\rr}$.
 			\textbf{Left:} The plots show the success frequency for ``$\WP{\dd}{\rr}$ is feasible'' (yellow) over $50$ samples for $n = 3, \dots , 30$ and $\rr/\dd = 1.5, \dots , 2.5$.
 			There is a sharp transition at $\rr/\dd=2$ in all cases.
 			\textbf{Right:} The difference of our numerical observation to Wendel's formula \eqref{eq:Wendel_p}.
            \textbf{(a)} and
 			\textbf{(b)}: Same 2- and 3-local system Hamiltonian as in \cref{fig:eff_heu_pauli_LP} (a) and (b).
 			\textbf{(b)}: We stopped the numerical experiments at $n=13$ due to the large amount of all-to-all interactions with locality $k \leq 3$.
 			\textbf{(c)}: Same random system Hamiltonian as in \cref{fig:eff_heu_pauli_LP} (c).
	}
	\label{fig:eff_heu_pauli}
\end{figure*}
The $4^n \times 4^n$ matrix with entries $\WPfull_{\vec a \vec b} = (-1)^{\langle \vec a,\vec b \rangle}$ is called the \emph{Walsh-Hadamard matrix}.
Here, we only need a submatrix $\WP{\dd}{\rr} \in \{-1,1\}^{\dd \times \rr}$ defined by choosing $\dd$ row-indices $\vec a\in \FF_2^{2n}$ and $\rr$ column-indices $\vec b\in \FF_2^{2n}$.
We call $\WP{\dd}{\rr}$ a \emph{partial Walsh-Hadamard matrix}.
By \cref{eq:H_TandW}, it is clear that only the interaction terms with $J_{\vec a} \neq 0$ contribute to the target Hamiltonian $H_T$.
Therefore, we define $\operatorname{nz}(\vec J) \coloneqq \Set*{\vec a \in \FF_2^{2n}\setminus \{ \vec 0 \} \given J_{\vec a} \neq 0}$, and require that the target interaction strengths satisfy
\begin{equation}
\label{eq:Pauli_nz_requirement}
 \operatorname{nz}(\vec A)\subseteq\operatorname{nz}(\vec J) \, .
\end{equation}
This requirement can be eliminated with single-qubit Clifford conjugation, which we will discuss in detail in \cref{sec:clifford_conjugation}.
In addition to the restriction $\vec a \in \operatorname{nz}(\vec J)$ given by the system Hamiltonian, we may also restrict the set of Pauli strings ($\pi$ pulse layers) $\vec b \in \mathcal{F} \subseteq \FF_2^{2n}$, as long as there is still a solution to \cref{eq:H_TandW}.
Comparing the target interaction strengths in \cref{eq:H_TandW}, we can write this as
\begin{equation}
 A_{\vec a} = J_{\vec a} \sum_{\vec b \in \mathcal{F}} \WP{\dd}{\rr}_{\vec a \vec b} \lambda_{\vec b} \quad \forall \vec a \in \operatorname{nz}(\vec J) \, ,
\end{equation}
with the number of non-zero system interaction strengths $\dd \coloneqq  \abs{\operatorname{nz}(\vec J)}$ and the number of considered Pauli strings $\rr = \abs{\mathcal{F}}$.
Hence, the constraint in \eqref{eq:LP} becomes $\vec A = \vec J \odot (\WP{\dd}{\rr} \vec \lambda)$, where $\odot$ denotes element-wise multiplication and $\vec A , \vec J \in \R^\dd$ are restricted to $\vec a \in \operatorname{nz}(\vec J)$.
For the following analysis, it is convenient to set $\vec M \coloneqq \vec A \oslash \vec J$ using the element-wise division $(\vec A \oslash \vec J)_{\vec a} \coloneqq A_{\vec a} / J_{\vec a}$ for all $\vec a \in \operatorname{nz}(\vec J)$.
This results in the following \ac{LP}:
\begin{equation}
\tag{PauliLP}
\label{eq:PauliLP}
\begin{aligned}
 \mathrm{minimize} &&& \vec 1^T \vec \lambda  \\
 \mathrm{subject\ to}    &&& \WP{\dd}{\rr} \vec \lambda = \vec M \,, \; \vec \lambda \in \R^{\rr}_{\geq 0} \, .
\end{aligned}
\end{equation}
The runtime of a \acs{LP} scales with its size, i.e.~the number of constraints $\dd$ and the number of variables $\rr$.
We discuss in detail the existence of a solution $\vec \lambda \in \R^{4^n}_{\geq 0}$ and bounds on the total relative evolution time $\vec 1^T \vec \lambda$ in \cref{app:solution_exist_bounds}.
Here, $\dd = \abs{\operatorname{nz}(\vec J)} \leq 4^n-1$ is fixed by the system Hamiltonian and a priori $\rr = 4^n$, thus solving \eqref{eq:PauliLP} becomes computationally intensive already at moderate system size.
To overcome this, we propose a simple and efficient relaxation of \eqref{eq:PauliLP} in \cref{sec:eff_pauli_heu}: 
For any $\dd = \abs{ \operatorname{nz}(\vec J)}$, we sample $\rr \geq 2\dd$ Pauli strings at random, leading to a feasible \ac{LP} with high probability.
As we have $\dd=\poly(n)$ for local Hamiltonians, this indeed yields an efficient relaxation in the system size $n$.
Note that this relaxation still leads to an exact decomposition; however, the relative evolution time $\vec 1^T \vec \lambda$ may not be minimal anymore.

% ~~~~~~~~~~~~~~~~~~~
\subsubsection{Efficient relaxation}
\label{sec:eff_pauli_heu}
% ~~~~~~~~~~~~~~~~~~~
We provide an efficient method to construct a feasible matrix $\WP{\dd}{\rr}$ with $\dd = \abs{ \operatorname{nz}(\vec J)}$ rows and $\rr\geq 2\dd$ columns.
The construction is rather simple and consists of sampling $\rr$ vectors $\vec b \in \FF_2^{2n}$ uniformly at random and taking the corresponding partial Walsh-Hadamard matrix with entries $\WP{\dd}{\rr}_{\vec a \vec b} = (-1)^{\langle \vec a,\vec b \rangle}$ where $\vec a \in \operatorname{nz}(\vec J)$.
Thus, engineering a Hamiltonian with $\dd$ non-zero interaction terms leads to a \eqref{eq:PauliLP}, whose number of constraints and variables both scale linearly with $\dd$.

To ensure feasibility of the sub-sampled matrix, we invoke \cref{thrm:feasibility}.
Thus, we have to check that $\conv (\WP{\dd}{\rr})$ has a non-empty interior and $\vec 0 \in \conv (\WP{\dd}{\rr})$.
First, we state general results for i.i.d.\ copies $\vec x_1, \dots, \vec x_\rr$ of a random vector $\vec x$.
Then, we relate the results to the partial Walsh-Hadamard matrix $\WP{\dd}{\rr}$.
Here, we assume that $\rr$ is large enough and that $\vec x_1, \dots, \vec x_\rr$ are in general position such that $\conv (\vec x_1, \dots, \vec x_\rr)$ always has non-empty interior.
Thus, we focus on the condition $\vec 0 \in \conv (\vec x_1, \dots, \vec x_\rr)$.

\begin{definition}\label{def:p_rx}
 Let $\vec x_1, \dots, \vec x_\rr$ be i.i.d.\ copies of an arbitrary random vector $\vec x$ in $\R^\dd$ and define
 \begin{equation}\label{eq:r_rx}
  p_{\rr,\vec x} \coloneqq \PP (\vec 0 \in \conv (\vec x_1, \dots, \vec x_\rr)) \, .
 \end{equation}
\end{definition}

Intuitively, one would expect that the probability should increase quickly with the number of samples $\rr$, and taking $\rr$ of the order of $\dd$ should make $p_{\rr,\vec x}$ reasonably large.

\begin{lemma}[\cite{Hayakawa2023}, Proposition 4] \label{prop:monotonicity}
 Let $\vec x \in \R^\dd$ be an arbitrary random vector with $\mathbb{E} [\vec x] = 0$ and $\PP (\vec x \neq 0) > 0$. Then we have
 \begin{equation}
  0 < p_{\dd+1,\vec x} < p_{\dd+2,\vec x} < \dots < p_{\dd+l,\vec x} \rightarrow 1 \quad \text{for } l\rightarrow \infty\, ,
 \end{equation}
 and $p_{\rr , \vec x} = 0$ if $\rr \leq \dd$.
\end{lemma}

Indeed, letting $\vec w$ be the $\dd$-dimensional random vector drawn uniformly from the columns of the 
partial Walsh-Hadamard matrix $\WP{\dd}{4^n}$, we can readily verify $\mathbb{E} [\vec w] = 0$ and thus \cref{prop:monotonicity} applies.
However, the question of how large $\rr$ shall be taken still remains.
In the convex geometry literature, we find an elegant solution, at least for spherically symmetric random vectors, in the form of \textbf{Wendel's theorem} \cite{wendel1962}:
If the random vector $\vec x$ has a spherically symmetric distribution around $\vec 0$,
then the probability \eqref{eq:r_rx} is given by
 \begin{equation}\label{eq:Wendel_p}
  p_{\rr,\vec x} = 1 - \frac{1}{2^{\rr-1}} \sum_{k=0}^{\dd-1} \binom{\rr-1}{k} \, .
 \end{equation}
This distribution shows a sharp transition from $\approx 0$ to $\approx 1$ at $\rr = 2 \dd$. 
Wendel's theorem does however not apply to the random columns $\vec w$ of $\WP{\dd}{4^n}$ because of the lack of spherical symmetry (see \cref{app:eff_pauli_heu} for details).
Instead, the result \eqref{eq:Wendel_p} of Wendel's theorem provides an upper bound on the probabilities $p_{\rr,\vec w}$ \cite{Wagner2001,Hayakawa2023}, and adequate lower bounds prove to be tricky to derive.

Nevertheless, we numerically observe that the random columns $\vec w$ of $\WP{\dd}{4^n}$ follow the behavior that we would expect from Wendel's theorem:
We observe a sharp transition of $p_{\rr,\vec w}$ at the same position as for spherical symmetric distributions, approximating the upper bound \eqref{eq:Wendel_p}.
In this sense, $p_{\rr,\vec w}$ is optimal, maximizing the success probability for finding a feasible $\WP{\dd}{\rr}$.

\begin{observation}[\cref{fig:eff_heu_pauli}]
\label{conj:feasibility}
 Let $\WP{\dd}{4^n}$ be a partial Walsh-Hadamard matrix with $\dd = \abs{ \operatorname{nz}(\vec J)}$, and let $\vec w$ be a $\dd$-dimensional random vector drawn uniformly from $\operatorname{col} (\WP{\dd}{4^n})$.
 Then, we numerically observe that Wendel's statement \eqref{eq:Wendel_p} approximately holds, i.e.~the random submatrix $\WP{\dd}{\rr}$ is feasible with high probability provided we take $\rr \geq 2\dd$ samples of $\vec w$.
\end{observation}

Let $\WP{\dd}{\rr}$ be the partial Walsh-Hadamard matrix obtained from $\WP{\dd}{4^n}$ by randomly sampling $\rr$ columns $\vec b \overset{\text{i.i.d.}}{\sim} \operatorname{unif}(\FF_2^{2n})$ with $\rr\geq 2\dd$.
As  $\WP{\dd}{\rr}$ is feasible with high probability, it can be used in \eqref{eq:PauliLP} to engineer any target Hamiltonian with a Pauli decomposition compatible with the system Hamiltonian, i.e.~with $\operatorname{nz}(\vec A)\subseteq\operatorname{nz}(\vec J)$.
One $\WP{\dd}{\rr}$ can be reused for different Hamiltonians with a Pauli decomposition with $\dd$ terms, since the construction of $\WP{\dd}{\rr}$ is independent of the choice of $\vec a \in \FF_2^{2n}\setminus \{ \vec 0 \}$.
As $\dd = \poly(n)$ for local Hamiltonians, the relaxed \eqref{eq:PauliLP} can be solved efficiently, and it provides an exact decomposition of the target Hamiltonian.
The evolution time $\vec 1^T \vec \lambda$ may however not be minimal anymore.
The quality of the relaxation can be improved by increasing $\rr$, thus expanding the search space.
This provides a flexible trade-off between the runtime of \eqref{eq:PauliLP} and the optimality of $\vec \lambda$, see \cref{fig:eff_heu_pauli_LP} (a-c).

% -------------------
\subsection{Clifford conjugation}
\label{sec:clifford_conjugation}
% -------------------
In \cref{sec:pauli_conjugation}, we introduced the Hamiltonian engineering method based on conjugation with $\pi$ pulses.
In this section, we extend this method to a certain set of single-qubit Clifford gates, consisting of $\pi/2$ pulses.
This extension allows us to change the interaction terms in the system Hamiltonian, such the only restriction in engineering a target Hamiltonian comes from the locality of the interactions in the system Hamiltonian $H_S$.
The same restriction also holds for conjugation with arbitrary single-qubit pulses, making Clifford conjugation a powerful Hamiltonian engineering method.
The simplest set of $\pi/2$ pulses with full expressivity are also called the square-root Pauli gates
\begin{equation}
\begin{aligned}
 \sqrt{X} &=\frac12 \begin{pmatrix} 1+\i & 1-\i \\ 1-\i & 1+\i\end{pmatrix}\, , \quad
 \sqrt{Y} =\frac12 \begin{pmatrix} 1+\i & -1-\i \\ 1+\i & 1+\i\end{pmatrix}\, , \\
 \sqrt{Z} &= \begin{pmatrix} 1 & 0 \\ 0 & \i \end{pmatrix} \, .
\end{aligned}
\end{equation}
\begin{table}[H]
	\centering
 \begin{tabular}{|c||c|c|c|c|c|c|}\hline
 $S^\dagger P S$ & $\sqrt{X}$ & $\sqrt{Y}$ & $\sqrt{Z}$ & $\sqrt{X}^\dagger$ & $\sqrt{Y}^\dagger$ & $\sqrt{Z}^\dagger$ \\\hline\hline
 $X$ &   $X$ &  $Z$ & $-Y$ &  $X$ & $-Z$ & $Y$  \\\hline
 $Y$ &  $-Z$ &  $Y$ &  $X$ &  $Z$ &  $Y$ & $-X$ \\\hline
 $Z$ &   $Y$ & $-X$ &  $Z$ & $-Y$ &  $X$ &  $Z$ \\\hline
\end{tabular}
	\caption{Conjugation of Pauli operators $P \in \mathsf{P}$ with square-root Pauli gates ($\pi/2$ pulses) $S \in \{ \sqrt{X}, \sqrt{Y}, \sqrt{Z}, \sqrt{X}^\dagger, \sqrt{Y}^\dagger, \sqrt{Z}^\dagger\}$.}
	\label{tab:sqrt_pauli_conjugation}
\end{table}
As displayed in \cref{tab:sqrt_pauli_conjugation}, the linear transformation  $S^\dagger P S$ of a Pauli operator with a $\pi/2$ pulse does not only flip the sign but also changes the interaction type (rotation axis).
The change of the interaction type increases the set of Hamiltonians reachable by Clifford conjugation compared to the Pauli conjugation.
In the following, we consider a gate set that behaves very similar to the conjugation with $\pi$ pulses:
\begin{equation}
 \mathcal{C}_{XY} \coloneqq \{Z\} \cup \Set*{QD \given Q,D \in \{\sqrt{X}, \sqrt{Y}, \sqrt{X}^\dagger, \sqrt{Y}^\dagger\}} \, .
\end{equation}
Likewise, we can define gate sets $\mathcal{C}_{ZY}$ and $\mathcal{C}_{XZ}$, for which conclusions analogous to the following ones may be drawn.
For example, the transformation given by the conjugation of interaction terms with the Clifford gates $\sqrt{X}\sqrt{Y}$ (two $\pi/2$ pulses) or $\sqrt{Y}^\dagger\sqrt{X}^\dagger$ changes the interaction type and leaves the sign of the Pauli coefficient unchanged.
The signs of the conjugated interaction terms depend on the rotation direction, i.e.\ the placement of ``$\dagger$'', of the square-root Pauli gates; see \cref{tab:cliff_conjugation} for examples.
Therefore, we label an element in $\mathcal{C}_{XY}$ by $c \coloneqq (p, \vec b) \in \FF_3 \times \FF_2^{2}$, where $p \in \FF_3$ represents the changes in the interaction type and $\vec b \in \FF_2^2$ captures the sign flips similar to the Pauli conjugation.
Denote $\mathcal{C}_{XY}^{\otimes n}$ a string of single-qubit gates on $n$ qubits from the gate set $\mathcal{C}_{XY}$.
We label each $S_{\vec c} \in \mathcal{C}_{XY}^{\otimes n}$ with $ \vec c = (c_1, \dots, c_n) = (\vec p, \vec b) \in \FF_{3}^n \times \FF_{2}^{2n}$, where $c_i \in \FF_{3} \times \FF_{2}^{2}$ represents the single-qubit Clifford gate from $\mathcal{C}_{XY}$ on the $i$-th qubit.
In \cref{tab:cliff_conjugation}, we show the effect of conjugating an interaction term $P_{\vec a} \in \mathsf{P}$ with $S_{c} \in \mathcal{C}_{XY}$.

\begin{figure*}[ht]
	\centering
	\includegraphics{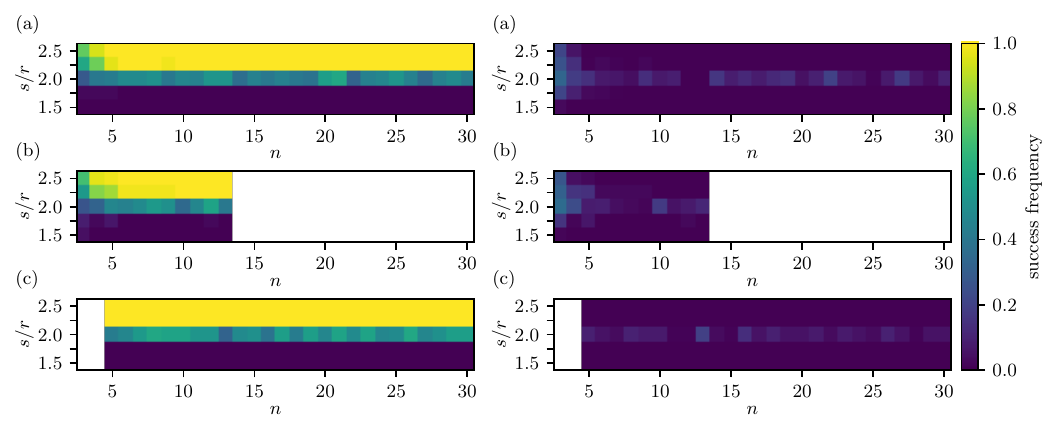}
	\caption{
			Select $\dd$ rows of $\WCfull$, determined by the considered interactions in the system Hamiltonian, to obtain the partial martrix $\WC{\dd}{12^n}$.
			Then we implement the subsampling of the possible pulse layers by randomly sampling $\rr$ columns to obtain $\WC{\dd}{\rr}$.
 			\textbf{Left:} The plots show the success frequency over $50$ samples for ``$\WC{\dd}{\rr}$ is feasible'' (yellow) for each $n = 3, \dots , 30$ and $\rr/\dd = 1.5, \dots , 2.5$.
 			There is a sharp transition at $\rr/\dd=2$ in all cases.
 			\textbf{Right:} The difference of our numerical observation from Wendel's statement \eqref{eq:Wendel_p}.
 			\textbf{(a)} and \textbf{(b)}:
              Same 2- and 3-local system Hamiltonian as in \cref{fig:eff_heu_pauli_LP} (d) and (e), respectively.
             \textbf{(b)}: We stopped the numerical experiments at $n=13$ due to the large amount of all-to-all interactions with locality $k \leq 3$.
 			\textbf{(c)}: Same random 5-local system Hamiltonian as in \cref{fig:eff_heu_pauli_LP} (f).
}
	\label{fig:eff_heu_cliff}
\end{figure*}
\begin{table}
	\centering
\setlength\tabcolsep{0pt}
\begin{tabular}{|@{\hspace{0.1cm}}c@{\hspace{0.1cm}}|c|c||c|c|c|c|}\hline
\multicolumn{2}{|@{\hspace{0.1cm}}c@{\hspace{0.1cm}}|}{\diagbox[width=2cm,height=1.2\line]{$\vec c$}{$\vec a$}} & $(a_x , a_z)$ & $(0,0)$ & $(1,0)$ & $(1,1)$ & $(0,1)$ \\\hline
$p$           & $\vec b$ & \diagbox[width=2cm,height=1.2\line]{$S_{\vec c}$}{$P_{\vec a}$} &  $I$ &  $X$ &  $Y$ &  $Z$ \\\hline\hline
\multirow{4}{*}{$0$} & $(0,0)$ &            $I$ &  \cellcolor{blue!15}$I$ & \cellcolor{blue!15} $X$ & \cellcolor{blue!15} $Y$ & \cellcolor{blue!15} $Z$ \\\hhline{|~*{6}{|-}|}
                   & $(1,0)$ &              $X$ &  \cellcolor{blue!15}$I$ & \cellcolor{blue!15} $X$ & \cellcolor{blue!15}$-Y$ & \cellcolor{blue!15}$-Z$ \\\hhline{|~*{6}{|-}|}
                   & $(1,1)$ &              $Y$ &  \cellcolor{blue!15}$I$ & \cellcolor{blue!15}$-X$ & \cellcolor{blue!15} $Y$ & \cellcolor{blue!15}$-Z$ \\\hhline{|~*{6}{|-}|}
                   & $(0,1)$ &              $Z$ &  \cellcolor{blue!15}$I$ & \cellcolor{blue!15}$-X$ & \cellcolor{blue!15}$-Y$ & \cellcolor{blue!15} $Z$ \\\hline\hline
\multirow{4}{*}{$1$}
 & $(0,0)$ & $\sqrt{X}\sqrt{Y}$                 &  \cellcolor{blue!15}$I$ & \cellcolor{blue!15} $Z$ & \cellcolor{blue!15} $X$ & \cellcolor{blue!15} $Y$ \\\hhline{|~*{6}{|-}|}
 & $(1,0)$ & $\sqrt{X}^\dagger\sqrt{Y}$         &  \cellcolor{blue!15}$I$ & \cellcolor{blue!15} $Z$ & \cellcolor{blue!15}$-X$ & \cellcolor{blue!15}$-Y$ \\\hhline{|~*{6}{|-}|}
 & $(1,1)$ & $\sqrt{X}^\dagger\sqrt{Y}^\dagger$ &  \cellcolor{blue!15}$I$ & \cellcolor{blue!15}$-Z$ & \cellcolor{blue!15} $X$ & \cellcolor{blue!15}$-Y$ \\\hhline{|~*{6}{|-}|}
 & $(0,1)$ & $\sqrt{X}\sqrt{Y}^\dagger$         &  \cellcolor{blue!15}$I$ & \cellcolor{blue!15}$-Z$ & \cellcolor{blue!15}$-X$ & \cellcolor{blue!15} $Y$ \\\hline\hline
\multirow{4}{*}{$2$}
 & $(0,0)$ & $\sqrt{Y}^\dagger\sqrt{X}^\dagger$ &  \cellcolor{blue!15}$I$ & \cellcolor{blue!15} $Y$ & \cellcolor{blue!15} $Z$ & \cellcolor{blue!15} $X$ \\\hhline{|~*{6}{|-}|}
 & $(1,0)$ & $\sqrt{Y}\sqrt{X}$                 &  \cellcolor{blue!15}$I$ & \cellcolor{blue!15} $Y$ & \cellcolor{blue!15}$-Z$ & \cellcolor{blue!15}$-X$ \\\hhline{|~*{6}{|-}|}
 & $(1,1)$ & $\sqrt{Y}\sqrt{X}^\dagger$         &  \cellcolor{blue!15}$I$ & \cellcolor{blue!15}$-Y$ & \cellcolor{blue!15} $Z$ & \cellcolor{blue!15}$-X$ \\\hhline{|~*{6}{|-}|}
 & $(0,1)$ & $\sqrt{Y}^\dagger\sqrt{X}$         &  \cellcolor{blue!15}$I$ & \cellcolor{blue!15}$-Y$ & \cellcolor{blue!15}$-Z$ & \cellcolor{blue!15} $X$ \\\hline
\end{tabular}
	\caption{The coloured cells show the conjugated interaction terms $S_{c}^\dagger P_{\vec a} S_{c}$ for an interaction $P_{\vec a} \in \mathsf{P}$ conjugated by the single-qubit Clifford gate $S_{c} \in \mathcal{C}_{XY}$.
	We label an element in $\mathcal{C}_{XY}$ by $c \coloneqq (p, \vec b) \in \FF_3 \times \FF_2^{2}$, where $p \in \FF_3$ captures the permutation of the conjugated interaction terms, and $\vec b$ captures the sign flips similar to the Pauli conjugation.
	This can be easily verified with \cref{tab:sqrt_pauli_conjugation}.
	}
	\label{tab:cliff_conjugation}
\end{table}
The transformation of $H_S$ with respect to $S_{\vec c} \in \mathcal{C}_{XY}^{\otimes n}$ leads to
\begin{equation}
\begin{aligned}
 S_{\vec c}^\dagger H_S S_{\vec c} &= \sum_{\vec a \in \FF_2^{2n}\setminus \{ \vec 0 \}} J_{\vec a} S_{\vec c}^\dagger P_{\vec a} S_{\vec c} \\
 &= \sum_{\vec a \in \FF_2^{2n}\setminus \{ \vec 0 \}} (-1)^{\langle \pi_{\vec p}(\vec a),\vec b \rangle} J_{\pi_{\vec p}(\vec a)} P_{\vec a} \, ,
\end{aligned}
\end{equation}
with $\vec c  = (\vec p, \vec b) \in \FF_{3}^n \times \FF_{2}^{2n}$ and the permutation $\pi_{\vec p} : \FF_2^{2n} \rightarrow \FF_2^{2n}$ with $\pi_{\vec p} (\vec a) \coloneqq \left( \pi_{p_1}(a_1), \dots , \pi_{p_n}(a_n) \right)$ given by the local permutations $\pi_{0}, \pi_{1}, \pi_{2}: \FF_2^2 \rightarrow \FF_2^2$ in the two-line notation
\begin{equation}
\begin{aligned}
 \pi_{0} &\coloneqq \begin{pmatrix} (0,0) & (1,0) & (1,1) & (0,1) \\
								   (0,0) & (1,0) & (1,1) & (0,1) \end{pmatrix}\, , \\
 \pi_{1} &\coloneqq \begin{pmatrix} (0,0) & (1,0) & (1,1) & (0,1) \\
								   (0,0) & (1,1) & (0,1) & (1,0) \end{pmatrix}\, , \\
 \pi_{2} &\coloneqq \begin{pmatrix} (0,0) & (1,0) & (1,1) & (0,1) \\
								   (0,0) & (0,1) & (1,0) & (1,1) \end{pmatrix}\, .
\end{aligned}
\end{equation}
We want to find a decomposition of $H_T$ with $\lambda_{\vec c} \geq 0$ such that
\begin{equation}
 H_T = \sum_{\vec a \in \FF_2^{2n}\setminus \{ \vec 0 \}} A_{\vec a} P_{\vec a} = \sum_{\vec c \in \FF_{3}^n \times \FF_{2}^{2n}} \lambda_{\vec c} S_{\vec c}^\dagger H_S S_{\vec c} \, .
\end{equation}
We define the matrix $\WCfull \in \R^{(4^n - 1) \times 12^n}$ with entries
\begin{equation}
\label{eq:cliff_W_entries}
 \WCfull_{\vec a \vec c} \coloneqq (-1)^{\langle \pi_{\vec p}(\vec a),\vec b \rangle} J_{\pi_{\vec p}(\vec a)} \, ,
\end{equation}
for $\vec a \in \FF_2^{2n}\setminus \{ \vec 0 \}$, excluding the identity term $P_{\vec a} = I^{\otimes n}$ with $\vec a = \vec 0$, and $\vec c \in \FF_{3}^n \times \FF_{2}^{2n}$.
Similar to the Pauli conjugation we define the submatrix $\WC{\dd}{\rr} \in \R^{\dd \times \rr}$ with entries given in \cref{eq:cliff_W_entries} for $\dd$ row-indices $\vec a \in \FF_2^{2n}\setminus \{ \vec 0 \}$ and $\rr$ column-indices $\vec c \in \FF_{3}^n \times \FF_{2}^{2n}$.
Up to the permutation of $\vec J$, the same Walsh-Hadamard matrix structure as in $\WP{\dd}{\rr}$ from \cref{sec:pauli_conjugation} is present in $\WC{\dd}{\rr}$.
In terms of the matrix $\WCfull$ it follows that
\begin{equation}
\label{eq:cliff_H_S_to_H_T}
 H_T = \sum_{\vec a \in \FF_2^{2n}\setminus \{ \vec 0 \}} \sum_{\vec c \in \FF_{3}^n \times \FF_{2}^{2n}} \lambda_{\vec c} \WCfull_{\vec a \vec c} P_{\vec a} \, .
\end{equation}
Due to the permutation in \cref{eq:cliff_W_entries} we are no longer restricted by the non-zero coefficients as in the Pauli conjugation.
From \cref{eq:cliff_H_S_to_H_T} it follows that for each $A_{\vec a} \neq 0$ there has to be at least one $J_{\hat{\vec a}} \neq 0$ such that $\operatorname{supp}(\hat{\vec a}) = \operatorname{supp}(\vec a)$.
Therefore, we define
\begin{equation}
\begin{aligned}
 \operatorname{suppnz} (\vec J) \coloneqq \left\{\vec a \in \FF_2^{2n}\setminus \{ \vec 0 \} \right| &\exists \hat{\vec a} \in \operatorname{nz}(\vec J), \\
 & \left. \operatorname{supp}(\hat{\vec a}) = \operatorname{supp}(\vec a)\right\} \, ,
\end{aligned}
\end{equation}
and require that the Pauli coefficients satisfy
\begin{equation}
\label{eq:cliff_nz_requirement}
\operatorname{nz}(\vec A) \subseteq \operatorname{suppnz} (\vec J) \, .
\end{equation}
In physical terms, this means that we are only restricted by the locality of the interactions in the system Hamiltonian $H_S$ and have full flexibility in the kind of interactions $P_{\vec a}$ and the interaction strength $A_{\vec a}$.
In addition to the restriction $\vec a \in \operatorname{suppnz} (\vec J)$, given by the system Hamiltonian, we can also consider a restricted set of conjugating Clifford strings $\vec c \in \mathcal{F} \subseteq \FF_{3}^n \times \FF_{2}^{2n}$, as long as there still exists a solution.
Then, the restricted \cref{eq:cliff_H_S_to_H_T} can be reformulated as
\begin{equation}
 A_{\vec a} = \sum_{\vec c \in \mathcal{F}} \WC{\dd}{\rr}_{\vec a \vec c} \lambda_{\vec c} \quad \forall \vec a \in \operatorname{suppnz} (\vec J) \, ,
\end{equation}
where $\dd \coloneqq \abs{\operatorname{suppnz} (\vec J)}$ denotes the number interactions that can be generated and $\rr = \abs{\mathcal{F}}$ denotes the number of considered Clifford strings.
The constraint in \eqref{eq:LP} then reads $\vec A = \WC{\dd}{\rr} \vec \lambda$, with $\vec A , \vec J \in \R^\dd$ restricted to $\vec a \in \operatorname{suppnz}(\vec J)$.
With that, we define the following \ac{LP}
\begin{equation}
\tag{CliffLP}
\label{eq:LPcliff}
\begin{aligned}
 \mathrm{minimize} &&& \vec 1^T \vec \lambda  \\
 \mathrm{subject\ to}    &&& \WC{\dd}{\rr} \vec \lambda = \vec A \,, \; \vec \lambda \in \R^{\rr}_{\geq 0} \, .
\end{aligned}
\end{equation}
There always exists a feasible solution of \eqref{eq:LPcliff} with $\WC{\dd}{12^n}$, which follows similarly to \cref{cor:full_W_feasibility} from the Walsh-Hadamard structure in $\WC{\dd}{12^n}$.
In general, we have $\rr \leq 12^n$ and $\dd \leq 4^n - 1$.
In the next section, we propose a simple and efficient relaxation of \eqref{eq:LPcliff}, where $\rr \geq 2 \dd$ can be chosen for arbitrary $\dd = \abs{ \operatorname{suppnz} (\vec J)}$.

% ~~~~~~~~~~~~~~~~~~~
\subsubsection{Efficient relaxation}
\label{sec:eff_cliff_heu}
% ~~~~~~~~~~~~~~~~~~~
The size of \eqref{eq:LPcliff} scales as $\LandauO (12^n)$ if all possible Clifford layers are considered.
Therefore, a relaxation is necessary to solve this \ac{LP} in practice.
Analogous to the Pauli conjugation method, \cref{sec:eff_pauli_heu}, sampling $\rr\geq 2\dd$ columns at random from $\WC{\dd}{12^n}$ results in a feasible matrix $\WC{\dd}{\rr}$ with high probability.
As in \cref{sec:eff_pauli_heu}, we test this statement numerically by checking the feasibility conditions of \cref{thrm:feasibility} for random submatrices.

\begin{observation}[\cref{fig:eff_heu_cliff}]
\label{conj:feasibility_cliff}
 Let $\WC{\dd}{12^n}$ be a matrix with $\dd = \abs{\operatorname{suppnz} (\vec J)}$ and entries as in \cref{eq:cliff_W_entries}, and let $\vec w$ be a $\dd$-dimensional random vector drawn uniformly from $\operatorname{col} (\WC{\dd}{12^n})$.
 Then, we numerically observe that Wendel's statement \eqref{eq:Wendel_p} approximately holds, i.e.~the random submatrix $\WC{\dd}{\rr}$ is feasible with high probability provided we take $\rr \geq 2\dd$ samples of $\vec w$.
\end{observation}

From the definition of feasibility, we know that \eqref{eq:LPcliff} always has a solution for arbitrary $\vec A$, i.e.\ arbitrary $\vec J, \vec A$ such that $\operatorname{nz}(\vec A) \subseteq \operatorname{suppnz} (\vec J)$.
Furthermore, the quality of the relaxation can be improved by increasing $\rr$, which leads to an expanded search space, see \cref{fig:eff_heu_pauli_LP} (d-f).
This provides again a flexible trade-off between the runtime of \eqref{eq:LPcliff} and the optimality of $\vec \lambda$.

% ~~~~~~~~~~~~~~~~~~~
\subsection{Hamiltonian engineering with unknown Hamiltonians}
\label{sec:unknown_hermitian_op}
% ~~~~~~~~~~~~~~~~~~~
In practice, sometimes not all coupling coefficients $J_{\vec a}$ in  the system Hamiltonian might be known.
For example, when an experiment aims to realize two-body interactions but also acquires unwanted three-body terms of unknown strength.
Such system Hamiltonians with unknown or only partially known coupling strengths can still be used for engineering based on the Pauli conjugation method from \cref{sec:pauli_conjugation}.
Solving \eqref{eq:PauliLP} for an $\vec M \in \R^\dd$ yields the target Hamiltonian
\begin{equation}
 \sum_{\vec a \in \FF_2^{2n}\setminus \{ \vec 0 \}} M_{\vec a} J_{\vec a} P_{\vec a} = H_T \, .
\end{equation}
The potentially unknown coefficients $J_{\vec a}$ are modified by an element-wise multiplication $\vec M \odot \vec J$.
Interesting choices for $M_{\vec a}$ are $-1$ and $0$, inverting the sign of or canceling the interaction term $P_{\vec a}$, respectively, without the knowledge of $J_{\vec a}$.
For each term in the system Hamiltonian, a different $M_{\vec a}$ can be chosen.
Therefore, engineering known terms $M_{\vec a} = A_{\vec a} / J_{\vec a}$ in the Hamiltonian while canceling or inverting other (potentially unknown) terms $M_{\vec a} = 0$ or $M_{\vec a} = -1$ is possible.
Using such a $\vec M$ in \eqref{eq:PauliLP} with dummy values for the unknown $J_{\vec a}$ inverts the signs or cancels the interactions.

In \cref{sec:2D_lattice_model} we applied the approach to cancel unknown 3-body terms in a 2D lattice Hamiltonian.

% ===================
\section{Error robustness and mitigation techniques}
\label{sec:robustness}
% ===================
To successfully apply the Pauli or Clifford conjugation in practice it is necessary to make the resulting pulse sequences robust against dominant errors.
In this section, we provide mitigation techniques for our efficient conjugation methods, which come with little overhead.
The simultaneous action of the single-qubit pulse and system Hamiltonian is called the finite pulse time error.
It has been shown that this error is detrimental to approaches similar to ours \cite{Vicente2023}.
Therefore, our focus lies on the error due to a finite single-qubit pulse duration.
Furthermore, we provide a modification to combine our Clifford method with robust composite pulses, making it robust against many different errors occurring in experiments.
We achieve robustness against finite pulse time effects and rotation angle errors using a similar approach as in \citet{Votto23UniversalQuantumProcessors}.
Their study focuses on specifically designed $\pi$ pulse sequences (i.e.~Pauli gates), so-called Walsh sequences, for engineering XY-Hamiltonians.
In our work, we generalize their approach to general local system Hamiltonians and arbitrary single-qubit pulses. 

\Acf{AHT} is a well-known approach in \ac{NMR} that allows to investigate the dynamics under a time-dependent Hamiltonian by approximating it with the one of a time-dependent Hamiltonian \cite{Haeberlen1968,Haeberlen1976}.
This approximation is done by a low order Magnus expansion \cite{Magnus1954}.
We utilize \ac{AHT} to investigate the error due to a finite single-qubit pulse time.
We will heavily use that the error term in the average Hamiltonian has the same locality as the system Hamiltonian.

In the following, we give an explicit form of the finite pulse time error in the average Hamiltonian when interleaving a system Hamiltonian with arbitrarily many layers of single-qubit pulses.
For the Clifford conjugation method, we find that the finite pulse time error can be exactly cancelled by a slight modification of \eqref{eq:LPcliff}.
This general investigation enables us to also mitigate finite pulse time error in combination with other pulse errors by replacing the single-qubit Clifford pulses in our Clifford conjugation method with robust composite pulses.
The latter are designed to compensate for experimental errors by implementing a gate with a specific pulse sequence.
In this way, dominant error sources can be cancelled or suppressed, which includes rotation angle error (Rabi frequency errors), off-resonance errors \cite{Masamitsu2013,kukita2021}, phase errors \cite{Torosov2019}, pulse shape errors \cite{Genov2014,Genov2020,Wu2023}, or non-stationary, non-Markovian noise \cite{Kabytayev2014} or crosstalk \cite{Torosov2020}.
If needed, remaining errors might then be mitigated using software-based methods \cite{Garcia-Molina24MitigatingNoiseIn}. 

For the Pauli conjugation method, the finite pulse time error can only be partially mitigated by modifying \eqref{eq:PauliLP}.
However, the freedom in choosing the rotation direction in the $\pi$ pulse can be leveraged in addition to the modified \eqref{eq:PauliLP} to completely cancel the finite pulse time error term in the average Hamiltonian.
Moreover, this simultaneously cancels first order effects of rotation angle errors in the single-qubit pulses.

We introduce basic concepts required for the following chapters in \cref{sec:robust_preliminaries}.
Then, in \cref{sec:robust_general}, we start with the investigation of the finite pulse time error when conjugating a system Hamiltonian with arbitrary many general single-qubit pulses.
In \cref{sec:robust_cliff}, we apply the general results to present our robust Clifford conjugation methods.
Finally, in \cref{sec:robust_pauli}, we present our robust Pauli conjugation method.

% ~~~~~~~~~~~~~~~~~~~
\subsection{Preliminaries}
\label{sec:robust_preliminaries}
% ~~~~~~~~~~~~~~~~~~~
For our robust methods, we require two well-known approaches to approximate the time evolution under non-commuting Hamiltonians.
First, we present general Trotter product formulae to approximate the time evolution under a linear combination of time-independent Hamiltonians.
Recently, the performance of the Suzuki-Trotter approximation has been greatly improved \cite{Morales2022}.
Second, we present the Magnus expansion to approximate the time evolution under a time-dependent Hamiltonian by a time-independent Hamiltonian.

\subsubsection{Product formula}
The time evolution under a Hamiltonian $H = \sum_{i=1}^L H_i$ can generally be approximated by a \emph{product formula}
\begin{equation}
 \e^{-\i t H} \approx \e^{-\i \alpha_q H_{i_q}} \dots \e^{-\i \alpha_1 H_{i_1}} \, ,
\end{equation}
with the number of evolution steps $q$ and $i_1,\dots , i_q \in [L]$.
We call a product formula \emph{deterministic} if $i_1,\dots , i_q$ can be found by a deterministic algorithm.
For our methods we do not require, that $\alpha_1,\dots, \alpha_q \in \R$ are chosen deterministically.
The best known deterministic product formulae are the Suzuki-Trotter formulae \cite{Trotter1959,Suzuki1992}.
The first- and second order approximations are given by
\begin{equation}
\label{eq:first_order_trotter}
 \e^{-\i t H} \approx \left(\prodr_{i=1}^L \e^{-\i \frac{t}{\nTro} H_{i}} \right)^{\nTro} \eqqcolon S_1 (t/\nTro)^{\nTro}
\end{equation}
and
\begin{equation}
\begin{aligned}
\e^{-\i t H} &\approx \left(\prodl_{i=1}^L \e^{-\i \frac{t}{2 \nTro} H_{i}} \prodr_{i=1}^L \e^{-\i \frac{t}{2 \nTro} H_{i}} \right)^{\nTro} \\
&\eqqcolon S_2 (t/\nTro)^{\nTro} \, ,
\end{aligned}
\end{equation}
respectively, with the \emph{number of Trotter cycles} $\nTro$.
The number of Trotter cycles can be increased to improve the accuracy.
The $2k$-th order Suzuki-Trotter formula for $k>1$ is defined recursively by
\begin{equation}
 S_{2k}(t) \coloneqq S_{2k-2} (u_{k} t)^2 S_{2k-2} ((1-4u_{k}) t) S_{2k-2} (u_{k} t)^2 \, ,
\end{equation}
with $u_k \coloneqq (4-4^{(2k-1)^{-1}})^{-1}$.
For a $2k$-th order Suzuki-Trotter formula the approximation error in spectral norm is bounded by
\begin{equation}
\begin{aligned}
\label{eq:trotter_error_bound}
 \| S_{2k}(t/\nTro)^{\nTro} &- \e^{-\i t H} \| \\
 &\leq \LandauO \left(\left(t \sum_{i=1}^L \|H_i\| \right)^{2k+1} {\nTro}^{-2k} \right) \, ,
\end{aligned}
\end{equation}
with the spectral norm $\| H_i \|$ \cite{Childs2019}.

\subsubsection{Average Hamiltonian theory and the Magnus expansion}
Given a time-dependent Hamiltonian $H(t)$, it is sometimes useful to consider a time-independent effective or \emph{average Hamiltonian} $H_{\av}$ satisfying
\begin{equation}
 \mathcal{U}(T) \approx \e^{-\i T H_{\av}} \, ,
\end{equation}
where $\mathcal{U}(T)$ is the time evolution operator defined by the differential equation
\begin{equation}
  \frac{d \mathcal{U}(t)}{dt} =  - \i H(t) \mathcal{U}(t) \, , \quad \text{and} \quad \mathcal{U}(0) = \1 \, .
\end{equation}
This average Hamiltonian can be expressed by the Magnus expansion as follows
\begin{equation}
 H_{\AV} = H_{\av}^{(1)} + H_{\av}^{(2)} + \dots \, ,
\end{equation}
where the first and second order terms are explicitly given by
\begin{equation}
 H_{\av}^{(1)} \coloneqq \frac{1}{T} \int_0^T H(\tau) \rmd \tau
\end{equation}
and
\begin{equation}
H_{\av}^{(2)} \coloneqq \frac{1}{2 \i T} \int_0^T \int_0^\tau [H(\tau), H(\tau^\prime)] \rmd \tau^\prime \rmd \tau\, .
\end{equation}
The Magnus expansion converges if $\int_0^T \norm{ H(\tau) } \rmd \tau < \pi$ \cite{Moan2008}.
As a rule of thumb, the Magnus expansion converges rapidly if
\begin{equation}
\label{eq:magnus_approx_error}
 \max_{\tau \in [0,T]} \norm{ H(\tau) } T \ll 1 \, ,
\end{equation}
where the spectral norm is used \cite{Brinkmann2016}.
Throughout this work we only consider the first order approximation and write $H_{\av} = H_{\av}^{(1)}$.

\begin{figure*}[ht]
	\centering
	\includegraphics{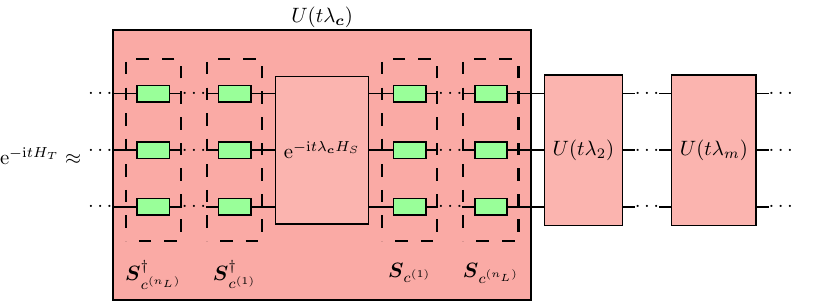}
	\caption{
			Exemplary quantum circuit for approximating the target evolution.
			We only implement simple single-qubit pulses in the presence of an always-on system Hamiltonian $H_S$.
	}
	\label{fig:circuit}
\end{figure*}

% ~~~~~~~~~~~~~~~~~~~
\subsection{General robust conjugation method}
\label{sec:robust_general}
% ~~~~~~~~~~~~~~~~~~~
We start with a general discussion of the finite pulse time error when conjugating a system Hamiltonian $H_S$ with multiple layers of general single-qubit pulses.
To this end, we first define the Hamiltonian of a general single-qubit pulse layer and the operators relevant for the following discussion.
In the following we define different compositions of single-qubit pulse layers, which allow a clear and compact presentation of the main results.
In particular, we define a single pulse layer, multiple consecutive pulse layers, which we call a pulse block, and a partial pulse block.

\begin{restatable}{restatableDef}{DefSingleGateLayer}
\label{def:operators_general_robust}
Let $\mc G \subset \Herm(\CC^2)$ be a set of \emph{single-qubit pulse generators}.
A \emph{single-qubit pulse layer} is given by a \emph{local generator} $h_i$ chosen from $\mc G$ for each qubit $i \in [n]$, the \emph{(single-qubit) rotation directions} $\vec s\in \FF_2^n$ and the \emph{finite pulse time} $t_p>0$.
The Hamiltonian generating the single-qubit pulse layer on $n$ qubits is given by
\begin{equation}
 H(t_p, \vec s, \vec h)
 \coloneqq
 \frac{1}{t_p} \sum_{i=1}^n (-1)^{s_i} h_i\, ,
\end{equation}
with $\vec h \coloneqq (h_1 , \dots , h_n)$.
The single-qubit pulse layer Hamiltonian is completely specified by the tuple $c \coloneqq (t_p, \vec s, \vec h)$, and we write $H_c \coloneqq H(t_p, \vec s, \vec h)$.
This generates the evolution operator for one single-qubit pulse layer
 \begin{equation}
  \vec S_c(t) \coloneqq \e^{- \i t H_c} \, , \quad \text{ and } \quad \vec S_c \coloneqq \vec S_c(t_p) \, .
 \end{equation}
 More generally, we consider a sequence of $\ns$ single-qubit pulse layers and introduce the layer index $\kk \in [\ns]$.
 The evolution for the $\kk$-th single-qubit pulse layer is specified by the tuple $c^{(\kk)} \coloneqq (t_p^{(\kk)}, \vec s^{(\kk)}, \vec h^{(\kk)})$, and we define the tuple $\vec c \coloneqq (c^{(1)}, \dots , c^{(\ns)})$.
 Moreover, we define the \emph{single-qubit pulse block} and the \emph{partial single-qubit pulse block} as
\begin{equation}
  \vec S_{\vec c} \coloneqq \prodr_{\kk=1}^{\ns} \vec S_{c^{(\kk)}}  \, ,
\end{equation}
and
\begin{equation}
\vec S_{\vec c^{\geq D}} \coloneqq \begin{cases}
		\prodr_{\kk=D}^{\ns} \vec S_{c^{(\kk)}} \, , & 1 \leq D \leq \ns \\
		\1 \, , & \text{otherwise} \, ,
	\end{cases}
\end{equation}
respectively.
Similarly, we define the pulse time for the single-qubit pulse block $T_p \coloneqq \sum_{\kk=1}^{\ns} t_p^{(\kk)}$ and for the partial single-qubit pulse block $T_p^{\leq D} \coloneqq \sum_{\kk=1}^{D} t_p^{(\kk)}$.
\end{restatable}

To illustrate \cref{def:operators_general_robust} we provide the Hamiltonian for the single-qubit Pauli pulses from \cref{sec:pauli_conjugation}.
The Hamiltonian for a single-qubit Pauli pulse layer is given by the $\pi$ pulse time $t_p$, an arbitrary rotation direction $\vec s \in \FF_2^n$ and the generators $h_i = \frac{\pi}{2} P_i$, with the Pauli operator $P_i$ given by $P_{\vec b_i}$ from \cref{eq:pauli_conjugation_system}.

Recall from \cref{sec:reshaping} that ideally, we would like to implement the conjugation $\vec S_{\vec c}^{\dagger} \e^{-\i t \lambda_{\vec c} H_S} \vec S_{\vec c} = \e^{-\i t \lambda_{\vec c} \vec S_{\vec c}^{\dagger} H_S \vec S_{\vec c}}$, with the system Hamiltonian $H_S$ and the \emph{relative evolution time} $\lambda_{\vec c}$ associated with a single-qubit pulse block $\vec S_{\vec c}$.
However, due to the finite duration of the single-qubit pulse and the always-on system Hamiltonian, we get the time evolution of the single-qubit pulse blocks
\begin{equation}
\vec S_{\err, \vec c} = \prodl_{\kk=1}^{\ns} \e^{- \i t_p^{(\kk)} (H_S - H_{c^{(\kk)}})}
\end{equation}
and
\begin{equation}
 \vec S_{\err, \vec c}^\prime = \prodr_{\kk=1}^{\ns} \e^{- \i t_p^{(\kk)} (H_S + H_{c^{(\kk)}})} \, ,
\end{equation}
respectively, with the finite duration of the $\kk$-th single-qubit pulse layer $t_p^{(\kk)}>0$.
In the absence of $H_S$, these two operators are exact inverses.
Then, the resulting evolution block with the finite pulse time error has the form
\begin{equation}
\label{eq:evo_block_finite_pulse_time}
 U(t \lambda_{\vec c}) \coloneqq \vec S_{\err, \vec c}^\prime \e^{-\i t \lambda_{\vec c} H_S} \vec S_{\err, \vec c} \, ,
\end{equation}
similar as in \cref{eq:evo_block_finite_pulse_time_sim}, and is depicted in \cref{fig:circuit}.

First, we provide the average Hamiltonian for one conjugation with a single-qubit pulse block to investigate the effect of the finite pulse duration.
\begin{restatable}{restatableLem}{LemHav}
 \label{lem:robust_finite_pulse_time}
 The approximation of $U(t \lambda_{\vec c})$ in first order Magnus expansion is given by $\e^{-\i H_{\av, \vec c} (t)}$ with
 \begin{equation}
 H_{\av, \vec c} (t) = t \lambda_{\vec c} \vec S_{\vec c}^{\dagger} H_S \vec S_{\vec c} + H_{\err, \vec c} ,
\end{equation}
where
\begin{equation}
\begin{aligned}
 &H_{\err, \vec c} \coloneqq \\
 &2 \sum_{\kk=1}^{\ns} \vec S_{\vec c^{\geq (\kk + 1)}}^{\dagger} \int_0^{t_p^{(\kk)}} \vec S_{c^{(\kk)}}^{\dagger} (t) H_S \vec S_{c^{(\kk)}} (t) \rmd t \, \vec S_{\vec c^{\geq (\kk + 1)}} \, .
\end{aligned}
\end{equation}
The average Hamiltonian $H_{\av, \vec c} (t)$ has the same locality as $H_S$.
Moreover, the error for truncating the Magnus expansion after the first order can be bounded in spectral norm by
\begin{equation}
\label{eq:magnus_trunc_error}
 \| U(t \lambda_{\vec c}) - \e^{-\i H_{\av, \vec c} (t)} \| \leq \LandauO((2T_p + t \lambda_{\vec c})^2 \| H_S \|^2) \, .
\end{equation}
\end{restatable}
The proof of \cref{lem:robust_finite_pulse_time} can be found in \cref{app:robust_general}.

Our goal is to formulate a \ac{LP} that cancels the finite pulse time error in the average Hamiltonian (in first order).
To this end, we define the matrices capturing the effect of the ideal dynamics $\vec S_{\vec c}^{\dagger} H_S \vec S_{\vec c}$ and the finite pulse time effect $H_{\err, \vec c}$.
The ideal dynamics is captured by the matrix $\WGG{\dd}{\rr} \in \R^{\dd \times \rr}$ with the same elements as in \cref{eq:general_ideal_matrix}.
The effect of the finite pulse time error is captured by the matrix $\EC{\dd}{\rr} \in \R^{\dd \times \rr}$ with the elements
\begin{equation}
\EC{\dd}{\rr}_{\vec a \vec c} \coloneqq \frac{1}{2^n} \Tr \left(P_{\vec a} H_{\err, \vec c} \right) \, ,
\end{equation}
and can be efficiently calculated for any local system Hamiltonian $H_S$, see \cref{lem:robust_LP} in the appendix.
Let $\WGG{\dd}{\rr}$ be feasible, i.e.\ let $\sum_{\vec c} \lambda_{\vec c} \vec S_{\vec c}^{\dagger} H_S \vec S_{\vec c}$ be able to modify all interaction terms with the same locality as the interactions in $H_S$.
 Then, the \ac{LP}
 \begin{equation}
\tag{robustLP}
\label{eq:robustLP}
\begin{aligned}
 \mathrm{minimize} &&& \vec 1^T \vec \lambda  \\
 \mathrm{subject\ to}    &&& \WGG{\dd}{\rr} \vec \lambda  + \EC{\dd}{\rr} \vec 1 = \vec A \,, \; \\
 &&& \vec \lambda \in \R^{\rr}_{\geq 0}
\end{aligned}
\end{equation}
always has a solution for any target interaction strength $\vec A \in \R^{\dd}$.
This follows directly from \cref{def:feasibleW}.
Now, we are ready to state the main result of this section.
\begin{restatable}{restatableThm}{MainThm}
\label{thrm:robust_finite_pulse_time}
The target time evolution $\e^{- \i t H_T}$ can be efficiently approximated by a deterministic product formula implementing a product of $U(t \lambda_{\vec c})$, with $\lambda_{\vec c}$ and the corresponding pulse block $\vec S_{\vec c}$ from \eqref{eq:robustLP}.
Moreover, this approximation is robust against the finite pulse time effect.
The only approximation errors are given by the ones from the Magnus approximation (\cref{lem:robust_finite_pulse_time}) and the approximation error from the deterministic product formula.
\end{restatable}
The proof of \cref{thrm:robust_finite_pulse_time} can be found in \cref{app:robust_general}.

\begin{remark}
The Trotter approximation error can be bounded as in \cref{eq:trotter_error_bound} and depends on the evolution time $t \sum_{\vec c}\lambda_{\vec c}$.
The approximation error for truncating the Magnus expansion increases with the finite pulse duration, see \cref{eq:magnus_trunc_error}.
Note, that the truncation error bound in \cref{eq:magnus_trunc_error} is not tight and might be improved \cite{Sharma2024}.
\end{remark}
\begin{remark}
The \eqref{eq:robustLP} is efficiently solvable with the relaxation from \cref{sec:eff_cliff_heu} if $\rr \geq 2 \dd$ with $\dd$ the number of interaction terms that can be modified.
To approximate the evolution under the target Hamiltonian $H_T$ with $\nTro$ Trotter cycles we have to implement at most $\nTro \rr$ evolution blocks $U(t\lambda_{\vec c})$, even if $\lambda_{\vec c} = 0$, since the finite pulse time errors for all single-qubit pulse layers are taken into account in \eqref{eq:robustLP}.
Each evolution block contains $2 \ns$ single-qubit pulse layers (possibly being part of a robust composite pulse sequence).
Then, the total number of single-qubit pulse layers is at most $2 \ns \nTro \rr$.
The number of single-qubit pulse layers can be reduced from $2 \ns \nTro \rr$ to $\approx 2 \ns \nTro \dd$ by formulating a \acf{MILP}, which we explain in \cref{sec:MILP}.

To summarize, our efficient relaxation plays a central role for solving the \ac{LP} formulation Hamiltonian engineering problems.
\end{remark}

% ~~~~~~~~~~~~~~~~~~~
\subsection{Robust Clifford conjugation method}
\label{sec:robust_cliff}
% ~~~~~~~~~~~~~~~~~~~
In this section, we leverage the general robust conjugation method to robustify the Clifford conjugation method.
We have two single-qubit pulse layers $\ns = 2$ of $\pi/2$ pulses, splitting each Pauli $\pi$ pulse in the gate set $\mathcal{C}_{XY}$ into two $\pi/2$ pulses.
Therefore, the ideal evolution in \eqref{eq:robustLP} is captured by the matrix $\WGG{\dd}{\rr} = \WC{\dd}{\rr}$.
\begin{definition}[Single-qubit Clifford pulse block]
\label{def:cliff_pulse_block}
 The \emph{single-qubit Clifford pulse block} $\vec S_{\vec c}$ is specified by the tuple $\vec c = ((t_p^{(1)}, \vec s^{(1)}, \vec h^{(1)}), (t_p^{(2)}, \vec s^{(2)}, \vec h^{(2)}))$, with the finite pulse times of one $\pi/2$ pulse, and we have $t_p^{(1)} = t_p^{(2)}$.
The rotation direction $\vec s^{(\kk)}$ is fixed, and we set
$s^{(\kk)}_i = 1$ if on the $i$-th qubit and the $\kk$ layer we have $\sqrt{(\argdot)}^\dagger$ and $s^{(\kk)}_i = 0$ if we have $\sqrt{(\argdot)}$.
The generators are $h_i = \frac{\pi}{4} P^{(\kk)}_i$, with the Pauli operators from the square-root Pauli gates $P^{(\kk)}_i$.
\end{definition}

The Clifford method can be made robust against the finite pulse time error by direct application of \cref{thrm:robust_finite_pulse_time}.
Our results for the efficient relaxation of the Clifford conjugation method in \cref{sec:eff_cliff_heu} ensures that \eqref{eq:robustLP} is feasible and that all interaction terms with the same locality as the interactions in $H_S$ can be modified.
\begin{corollary}[Robustness against finite pulse time errors]
\label{cor:finite_pulse_time_cliff}
Let there be two layers of $\pi/2$ pulses, representing the single-qubit pulses for the Clifford conjugation as in \cref{def:cliff_pulse_block}.
Then, the finite pulse time effect can be suppressed using \cref{thrm:robust_finite_pulse_time}.
\end{corollary}

The Clifford conjugation method can also be made robust by taking advantage of the rich field of robust composite pulses.
The $\pi/2$ pulses in each of the two layers in the Clifford conjugation method can be made robust by replacing each pulse with robust composite pulses of length $\ns/2$.
Then, the finite pulse time error is different but can still be corrected using \cref{thrm:robust_finite_pulse_time}.
\begin{corollary}[Robustness against pulse errors]
\label{cor:finite_pulse_time_gate_cliff}
Let there be a robust composite pulse sequence for $\pi/2$ pulses of length $\ns/2$. 
Replacing each $\pi/2$ pulse in the Clifford conjugation method by such robust composite pulses yields $\ns$ single-qubit pulse layers.
Then, the finite pulse time effect is different from \cref{cor:finite_pulse_time_cliff} but still can be suppressed by \cref{thrm:robust_finite_pulse_time}.
\end{corollary}
The ability to modify any interaction with the same locality as $H_S$ and the combination with robust composite pulses makes our Clifford conjugation robust against a wide range of different errors in experiments.
Note, that composite pulses with a short overall duration are more beneficial due to the faster convergence of the Magnus expansion \cref{eq:magnus_approx_error}.
In \cref{sec:ion_trap_model} we combined the SCROFULOUS pulses \cite{Cummins2002} and the SCROBUTUS pulses \cite{kukita2021} with the robust Clifford conjugation to implement arbitrary Heisenberg Hamiltonians.

% ~~~~~~~~~~~~~~~~~~~
\subsection{Robust Pauli conjugation method}
\label{sec:robust_pauli}
% ~~~~~~~~~~~~~~~~~~~
The Pauli conjugation can only change the non-zero interaction strengths in the system Hamiltonian but not modify the type of interactions.
Consequently, the general robust conjugation method is not directly applicable.
Therefore, we generalize the robustness conditions of \citet{Votto23UniversalQuantumProcessors} to arbitrary local Hamiltonians.
\begin{definition}[Single-qubit Pauli pulse layer]
\label{def:pauli_pulse_block}
 The \emph{single-qubit Pauli pulse layer} $\vec S_{\vec c}$ is specified by the tuple $\vec c = (t_p , \vec s , \vec h)$, with the finite pulse time $t_p$ of one $\pi$ pulse.
The rotation direction is $\vec s \in \FF_2^n$ and can be chosen freely for $\pi$ pulses.
The generators are $h_i = \frac{\pi}{2} P_i$, with the Pauli operator $P_i$ given by $P_{\vec b_i}$ from \cref{eq:pauli_conjugation_system}.
\end{definition}
For the Pauli conjugation $U(t\lambda_{\vec c})$ simplifies to \cref{eq:evo_block_finite_pulse_time_sim}.
Next, we provide the average Hamiltonian for the conjugation of the system Hamiltonian with a single-qubit Pauli pulse layer to investigate the effect of the finite pulse time.
\begin{restatable}{restatableLem}{LemHerr}
\label{lem:pauli_finite_pulse_error}
 Consider all labels $\vec a \in \FF_2^{2n}$ for the non-zero interactions $J_{\vec a} \neq 0$ in the system Hamiltonian.
 Then, the approximation of $U(t\lambda_{\vec c})$ for the Pauli conjugation in first order Magnus expansion is given by $\e^{-\i H_{\av, \vec c} (t)}$ with
 \begin{equation}
 H_{\av, \vec c} (t) = t \lambda_{\vec c} \vec S_{\vec c}^{\dagger} H_S \vec S_{\vec c} + H_{\err, \vec c} \, ,
\end{equation}
where
 \begin{equation}
 \label{eq:pauli_finite_pulse}
  H_{\err, \vec c} = \sum_{\vec a \in \FF_2^{2n}\setminus \{ \vec 0 \}} 
  	\left(J_{\vec a} \EP{\dd}{\rr}_{\vec a, \vec c} P_{\vec a} + R_{\vec a, \vec c} \right) \, .
 \end{equation}
 We call the first term in \cref{eq:pauli_finite_pulse} the \emph{interaction term}, with
 \begin{equation}
 \begin{aligned}
 &\EP{\dd}{\rr}_{\vec a, \vec c} \coloneqq \\
 &\frac{4 t_p}{\pi} \int_0^{\frac{\pi}{2}} \left( \prod_{i \in \supp (\vec a)} (\cos^2 (\theta) + (-1)^{\langle \vec a_i , \vec b_i \rangle} \sin^2 (\theta)) \right) \rmd \theta \, ,
 \end{aligned}
\end{equation}
 which we collect as entries of the matrix  $\EP{\dd}{\rr} \in \R^{\dd \times \rr}$.
 We call the second term $R_{\vec a, \vec c}$ the \emph{rest term}, and it is proportional to $(-1)^{\vec e_{\vec a} \cdot \vec s}$, with $\vec s \in \FF_2^n$ representing the chosen rotation direction of the $\pi$ pulses and $\vec e_{\vec a} \in \FF_2^{n}$ encodes the sign flips due to the finite pulse time error such that $e_{\vec a, i} = 0$ for all $i \notin \supp (\vec a)$.
\end{restatable}

The proof of \cref{lem:pauli_finite_pulse_error} can be found in \cref{app:robust_pauli}.
We define the \ac{LP} similar to \eqref{eq:robustLP}
\begin{equation}
\tag{robustPauliLP}
\label{eq:robustPauliLP}
\begin{aligned}
 \mathrm{minimize} &&& \vec 1^T \vec \lambda  \\
 \mathrm{subject\ to}    &&& \WP{\dd}{\rr} \vec \lambda + \EP{\dd}{\rr} \vec 1 = \vec M \,, \; \\
 &&& \vec \lambda \in \R^{\rr}_{\geq 0} \, ,
\end{aligned}
\end{equation}
with $\WP{\dd}{\rr} \in \R^{\dd \times \rr}$ and $\vec M \in \R^{\dd}$ from \cref{sec:pauli_conjugation} and $\EP{\dd}{\rr} \in \R^{\dd \times \rr}$ from \cref{lem:pauli_finite_pulse_error}.
Similar to \cref{thrm:robust_finite_pulse_time} we can cancel the effect of the interaction terms in \cref{eq:pauli_finite_pulse} by implementing $U(t\lambda_{\vec c})$, with $\lambda_{\vec c}$ from \eqref{eq:robustPauliLP}.
However, there still remains the rest terms $R_{\vec a, \vec c}$ in \cref{eq:pauli_finite_pulse} which can be cancelled by selecting the $\pi$ pulse direction appropriately.
To this end, we define the set of \emph{robust rotation directions} of $\pi$ pulses
\begin{equation}
\begin{aligned}
 \mathcal{S}_{\vec J} \coloneqq \left\{\vec s \in \FF_2^n \right. |& \sum_{\vec s} (-1)^{\vec e_{\vec a} \cdot \vec s} = 0, \, \forall \vec e_{\vec a} \in \FF_2^{n} \text{ with } e_{\vec a, i} = 0 \, \\
 & \left. \forall i \notin \supp (\vec a) \, \forall \vec a \in \FF_2^{2n} \text{ with }J_{\vec a} \neq 0 \right\} \, .
\end{aligned}
\end{equation}
This definition requires that the sum of all potential sign flips caused by the finite pulse time error of the non-zero interactions cancels over all rotation directions in $\mathcal{S}_{\vec J}$.

\begin{restatable}{restatableProp}{MainProp}
\label{prop:robust_pauli}
The target time evolution $\e^{- \i t H_T}$ can be approximated by a deterministic product formula implementing $U(t\lambda_{\vec c})$, with $\lambda_{\vec c}$ from \eqref{eq:robustPauliLP}, and choosing the robust rotation directions $\vec s \in \mathcal{S}_{\vec J}$ of the $\pi$ pulses.
The only approximation errors are given by the ones from the Magnus approximation (\cref{lem:robust_finite_pulse_time}) and the approximation error from the deterministic product formula.
\end{restatable}
The proof of \cref{prop:robust_pauli} can be found in \cref{app:robust_pauli}.

Note, that even if $\lambda_{\vec c} = 0$ the evolution block $U(t\lambda_{\vec c})$ still has to be implemented with zero free evolution time.
The number of evolution blocks can be reduced by formulating a \ac{MILP}, which we explain in \cref{sec:MILP}. 

The robustness against finite pulse time errors simultaneously implies robustness against the first order effects of rotation angle errors.
This means, that for a set of robust rotation directions $\vec s \in \mathcal{S}_{\vec J}$ implies the cancelation rotation angle errors in the first order Taylor approximation, see \cref{prop:rotation_err_robust} in the appendix.
Given a system Hamiltonian with only two-body interactions, i.e.\ only interactions $P_{\vec a}$ with $|\supp (\vec a)|=2$, a good choice for the robust rotation directions of the $\pi$ pulses $\vec s_{(j)} \in \FF_2^n$ for $j=1, \dots , \pp$ can be found by utilizing the orthogonality property of Walsh-Hadamard matrices.
\begin{restatable}{restatableProp}{PropPiRotationsChoice}
\label{prop:rot_dir_choice}
   Let  $\pp = 2^{\lceil \log_2 (n+1) \rceil} \leq 2n$ and let $\WP{\pp}{\pp}$ be the $\pp \times \pp$ dimensional Walsh-Hadamard matrix.
 Choose $n$ distinct columns from $\WP{\pp}{\pp}$ without the first column and define the resulting partial Walsh-Hadamard matrix as $\WP{\pp}{n}$.
 Let $(-1)^{\vec s_{(j)}}$ be the $j$-th row of $\WP{\pp}{n}$.
 Then, for any non-zero two-body interaction $J_{\vec a} \neq 0$ with $\abs{\supp (\vec a)}=2$ we have $\vec s_{(j)} \in \mathcal{S}_{\vec J}$ for all $j=1, \dots , \pp$.
\end{restatable}
The proof of \cref{prop:rot_dir_choice} can be found in \cref{app:robust_pauli}.

% ~~~~~~~~~~~~~~~~~~~
\subsection{Reducing the number of single-qubit pulses}
\label{sec:MILP}
% ~~~~~~~~~~~~~~~~~~~
Let the matrices $W, E \in \R^{\dd \times \rr}$ be either $\WGG{\dd}{\rr}, \EC{\dd}{\rr}$ or $\WP{\dd}{\rr}, \EP{\dd}{\rr}$ as in \eqref{eq:robustLP} or \eqref{eq:robustPauliLP} respectively.
In these \acp{LP} the sum of the finite pulse time effects for all considered evolution blocks $U(t\lambda_{\vec c})$ is given by the vector $E \cdot \vec 1 \in \R^{\dd}$. 
Moreover, the solution $\vec \lambda$ in the \eqref{eq:robustLP} and \eqref{eq:robustPauliLP} is sparse \cite{BARANY1982}.
However, when mitigating the finite pulse time errors as in \cref{thrm:robust_finite_pulse_time,prop:robust_pauli} then all the single-qubit pulse conjugations have to be implemented (with zero free evolution time if $\lambda_{\vec c}=0$).
The \acp{LP} can be extended with additional binary variables $\vec z \in \{0,1 \}^s$ and an additional constraint to only consider the finite pulse time errors for the single-qubit pulses with non-zero free evolution time.
This yields the \acf{MILP} \cite{karahanoglu_mixed_2013}
\begin{equation}
\tag{MILP}
\label{eq:MILP}
\begin{aligned}
 \mathrm{minimize} &&& \alpha \vec 1^T \vec \lambda + (1-\alpha) \vec 1^T \vec z \\
 \mathrm{subject\ to}    &&& W \vec \lambda + E \vec z = \vec M \,, \\
 &&& c_l \vec z \leq \vec \lambda \leq c_u \vec z \, , \\
 &&& \vec \lambda \in \R^{\rr}_{\geq 0}\, , \quad \vec z \in \{0,1\}^{\rr}  \, .
\end{aligned}
\end{equation}
The free parameter $\alpha \in [0,1]$ assigns weights to the minimization of the free evolution times $\alpha=1$ or the minimization of the number of single-qubit pulse layers $\alpha=0$.
$0 \leq c_l < c_u$ are lower and upper bounds on the entries of $ \vec \lambda$.
The interval $[c_l, c_u]$ has to be large enough such that \eqref{eq:MILP} has a solution.

For our robust method \eqref{eq:robustLP}, this \eqref{eq:MILP} reduces the number of evolution blocks in one Trotter cycle from $\rr \geq 2 \dd$ to $\rr \approx \dd$.
Although \eqref{eq:MILP} is in general hard to solve, there are many powerful heuristics and software packages to solve such optimizations \cite{mosek}.
Moreover, the size of \eqref{eq:MILP} can be drastically reduced with to our efficient relaxation, and solving it for small instances is still feasible.
The \eqref{eq:MILP} is feasible only for small instances whereas the \eqref{eq:robustLP} is efficiently solvable at the cost of more single-qubit pulse layers.

If not otherwise stated, we used the parameters $c_l = 10^{-6}$, $c_u = 10^{3}$ and $\alpha = 10^{-2}$.
We used MOSEK to solve \eqref{eq:MILP} with parameter {\tt MSK\_DPAR\_MIO\_TOL\_REL\_GAP} set to $1.0$ \cite{mosek}.

% ===================
\section{Discussion and outlook}
\label{sec:outlook}
% ===================
We consider a quantum computing or quantum simulation platform that has one entangling Hamiltonian as \emph{system Hamiltonian} and provide efficient, general and robust methods to engineer arbitrary Hamiltonians with the same locality from it.
Our methods only rely on the use of $\pi$ or $\pi/2$ pulses, i.e.\ Pauli or single-qubit Clifford gates, and explicitly allows for robust composite pulses.
They can be directly used in experiments by applying the pulse sequences generated by the provided Python code \cite{GitHub}.

The pulse sequences are obtained by solving a suitable \acf{LP}, the classical runtime of which depends polynomially only on the number of interaction terms that can be generated from the system Hamiltonian and can thus be efficiently solved for local Hamiltonians.
The Pauli conjugation method is even applicable if only partial knowledge of the system Hamiltonian is available, and can be used to cancel unwanted, but unknown interaction terms.
Moreover, the quantum simulation runtime can be reduced at the cost of a higher classical runtime, which provides a flexible trade-off.

Another major advantage of our methods is the robustness against experimental imperfections.
Simulation errors introduced by finite pulse times can be explicitly compensated in the computation of the pulse sequence.
The Clifford conjugation method can be combined with robust composite pulses, making it robust against major experimental errors.
We discuss in detail the effect of finite pulse time errors and rotation angle errors and show that these can be fully mitigated by modified pulse sequences in combination with robust composite pulses.
Due to their generality and efficiency, we expect that our methods will find many applications in quantum computation and quantum simulation, such as the fast synthesis of multi-qubit gates, or analogue quantum simulation for problems arising in many-body physics.
Furthermore, some recent Hamiltonian learning schemes rely on ``reshaping'' an unknown Hamiltonian to a diagonal Hamiltonian which can be done efficiently and in a robust way with our Pauli conjugation method \cite{huang2023learning,ma2024learningkbodyhamiltonianscompressed,hu2025}.

In the future, we would like to extend the efficient Hamiltonian engineering method to fermionic systems for digital and analogue quantum simulations.
Another promising future research direction could be the investigation of finite pulse time effects for continuous robust pulses similar to \cref{sec:robust_general}.
Moreover, the design of robust pulses tailored to the conjugation methods might be another interesting direction.

%%% =============================
\section*{Acknowledgements}
%%% =============================
We are grateful to Ivan Boldin, Patrick Huber, Markus Nünnerich and Rodolfo Muñoz Rodriguez for a productive dialogue on the ion trap platform, and to Matthias Zipper and Christopher Cedzich for fruitful discussions on gate designs for ion trap platforms.
Moreover, we thank Matteo Votto for a constructive exchange about his impressive work and the robust sequences therein.
We also thank Gaurav Bhole for making us aware of his results.
Furthermore, we want to thank Özgün Kum for invaluable comments on our manuscript.

This work has been funded by the German Federal Ministry of Research, Technology and Space (BMFTR) within the funding program ``Quantum Technologies -- from Basic Research to Market'' via the joint project MIQRO (grant number 13N15522);
the QuantERA II Programme that has received funding from the EU’s H2020 research and innovation programme under Grant Agreement No.\ 101017733, and with the Deutsche Forschungsgemeinschaft (DFG, German Research Foundation) under the grant number 532779266; 
the Hamburg Quantum Computing project that is co-financed by the ERDF of the European Union and the Fonds of the Hamburg Ministry of Science, Research, Equalities and Districts (BWFGB); 
and the Fujitsu Services GmbH as part of the endowed professorship ``Quantum Inspired and Quantum Optimization''.
Publishing fees supported by Funding Programme Open Access Publishing of Hamburg University of Technology (TUHH).

\newpage
\onecolumngrid
\section*{Appendix}
\appendix

% ~~~~~~~~~~~~~~~~~~~
\section{Details for the numerical simuations}
\label{app:details_num_sim}
% ~~~~~~~~~~~~~~~~~~~
In this section, we provide the details for the numerical simulations in \cref{sec:2D_lattice_model,sec:ion_trap_model}.

\subsection{Simulation of a 2D lattice model}
\label{app:details_2D_lattice_model}
In \cref{sec:2D_lattice_model} we engineered a 2D lattice Hamiltonian with Ising interactions and unwanted but unknown three-body interactions.
In \cref{fig:2D_lattice}, we compare the \textsf{naive} Pauli conjugation from \cref{sec:pauli_conjugation} against the \textsf{robust Pauli} conjugation from \cref{sec:robust_pauli}.
To reduce the number of required pulses we solved the \eqref{eq:MILP} for the \eqref{eq:robustPauliLP} to obtain the relative evolution times $\vec \lambda$.
We use multiple $\pi$ pulse rotation directions as in \cref{prop:rot_dir_choice}, to cancel the rest term of the finite pulse time error as explained in \cref{lem:pauli_finite_pulse_error}.
Therefore, for $n=6$ qubits $\pp = 2^{\lceil \log_2 (n+1) \rceil} = 8$ different rotation directions and thus $8$ evolution blocks are required within each Trotter cycle.
To cancel the unwanted and unknown three-body interaction terms we apply the results from \cref{sec:unknown_hermitian_op} to both \textsf{naive} and \textsf{robust Pauli} approaches.

\subsection{Simulation of Heisenberg Hamiltonians with an ion trap model}
\label{app:details_ion_trap_model}
In \cref{sec:ion_trap_model} we engineered a random Heisenberg Hamiltonian from a system Hamiltonian with Ising interactions.
In \cref{fig:noisy_hamiltonian} we compare the \textsf{naive} Clifford conjugation from \cref{sec:clifford_conjugation} to the \textsf{robust Clifford}, \textsf{CP$_\mathrm{SCROFULOUS}$} and \textsf{CP$_\mathrm{SCROBUTUS}$} Clifford conjugations from \cref{sec:robust_cliff}.
The \textsf{robust Clifford} method is only robust against the finite pulse time error.
The \textsf{CP$_\mathrm{SCROFULOUS}$} method is additionally robust against rotation angle errors.
It is given as a combination of our robust Clifford conjugation with the SCROFULOUS composite pulses \cite{Cummins2002}.
Finally, the \textsf{CP$_\mathrm{SCROBUTUS}$} method is robust against the finite pulse time error, rotation angle errors and off-resonance errors.
It is given as a combination of our robust Clifford conjugation with the SCROBUTUS composite pulses \cite{kukita2021}.
We again solved the \eqref{eq:MILP} for the \eqref{eq:robustLP} to obtain the relative evolution times $\vec \lambda$ and reduce the number of required pulses for the \textsf{robust Clifford}, \textsf{CP$_\mathrm{SCROFULOUS}$} and \textsf{CP$_\mathrm{SCROBUTUS}$} Clifford conjugation.

% ~~~~~~~~~~~~~~~~~~~
\section{Properties of solutions of \texorpdfstring{\eqref{eq:PauliLP}}{(PauliLP)}}
\label{app:solution_exist_bounds}
% ~~~~~~~~~~~~~~~~~~~
In this section, we show the existence of solutions, provide lower and upper bounds on the relative evolution time, and discuss the tightness of these bounds.
First, a direct consequence of \cref{thrm:feasibility} is that \eqref{eq:PauliLP} is feasible or arbitrary Hamiltonians $H_S$ and $H_T$ satisfying $\operatorname{nz}(\vec A)\subseteq\operatorname{nz}(\vec J)$.

\begin{corollary}[existence of solutions]
\label{cor:full_W_feasibility}
 Let
 \begin{equation}
  H_S = \sum_{\vec a \in \FF_2^{2n}\setminus \{ \vec 0 \}} J_{\vec a} P_{\vec a} \quad \text{ and } \quad H_T = \sum_{\vec a \in \FF_2^{2n}\setminus \{ \vec 0 \}} A_{\vec a} P_{\vec a} \, ,
 \end{equation}
 with $\operatorname{nz}(\vec A)\subseteq\operatorname{nz}(\vec J)$.
 Then the partial Walsh-Hadamard matrix $\WP{\dd}{4^n}$ with $\dd= \abs{\operatorname{nz}(\vec J)}$ leads to a feasible \eqref{eq:PauliLP}.
\end{corollary}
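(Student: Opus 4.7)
The plan is to verify the two hypotheses of \cref{thrm:feasibility} for the matrix $\WG = \WP{\dd}{4^n}$ and then invoke it directly. The right-hand side $\vec M = \vec A \oslash \vec J$ is well-defined exactly because of the assumption $\operatorname{nz}(\vec A)\subseteq\operatorname{nz}(\vec J)$, so feasibility of \eqref{eq:PauliLP} reduces to showing that for every $\vec M \in \RR^\dd$ there exists $\vec\lambda \in \RR^{4^n}_{\geq 0}$ with $\WP{\dd}{4^n}\vec\lambda = \vec M$.

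First I would establish $\ker\bigl((\WP{\dd}{4^n})^T\bigr) = \{\vec 0\}$, i.e., linear independence of the rows. The full $4^n\times 4^n$ Walsh–Hadamard matrix with entries $(-1)^{\langle \vec a,\vec b\rangle}$ is (up to a factor $2^{-n}$) orthogonal, because the characters $\vec b \mapsto (-1)^{\langle \vec a,\vec b\rangle}$ of $\FF_2^{2n}$ are orthonormal. Any subset of $\dd$ rows is therefore linearly independent, and in particular the rows indexed by $\operatorname{nz}(\vec J)$ are.

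Next I would exhibit a strictly positive vector in the kernel. Taking $\vec x = \vec 1 \in \RR^{4^n}_{>0}$ we get
\begin{equation}
 \bigl(\WP{\dd}{4^n}\vec 1\bigr)_{\vec a} = \sum_{\vec b \in \FF_2^{2n}} (-1)^{\langle \vec a,\vec b \rangle} = 0
\end{equation}
for every $\vec a \in \operatorname{nz}(\vec J) \subseteq \FF_2^{2n}\setminus\{\vec 0\}$, because for any nonzero $\vec a$ the symplectic form $\vec b \mapsto \langle \vec a,\vec b\rangle$ is a nontrivial $\FF_2$-linear functional, so exactly half of the $\vec b$'s contribute $+1$ and half contribute $-1$. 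This is the only place where excluding the row $\vec a = \vec 0$ (the identity term) is used.

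With both hypotheses in hand, \cref{thrm:feasibility} immediately yields a $\vec\lambda \geq \vec 0$ with $\WP{\dd}{4^n}\vec\lambda = \vec M$ for every $\vec M$, hence in particular for $\vec M = \vec A\oslash\vec J$, which is exactly feasibility of \eqref{eq:PauliLP}. I do not anticipate any real obstacle here: both conditions are standard facts about the Walsh–Hadamard matrix, and the main content of the corollary is really the packaging of \cref{thrm:feasibility} in the Pauli-conjugation setting.
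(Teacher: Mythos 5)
Your proposal is correct and follows essentially the same route as the paper's proof: verify the two hypotheses of \cref{thrm:feasibility} for $\WP{\dd}{4^n}$ by noting linear independence of the Walsh--Hadamard rows and that $\vec x = \vec 1$ lies in the kernel because each non-identity row sums to zero. Your version is slightly more careful in spelling out why the row sums vanish (nontriviality of the symplectic character for $\vec a \neq \vec 0$), which is exactly the point where excluding the identity row matters.
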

\begin{proof}
We know that all rows of the Walsh-Hadamard matrix are linearly independent.
Furthermore, each row of the Walsh-Hadamard matrix sums to zero, thus $\vec x = \vec 1 > \vec 0$ is a solution to $\WP{\dd}{4^n}\vec x = \vec 0$.
By \cref{thrm:feasibility} we know that \eqref{eq:PauliLP} always has a feasible solution even if we consider an arbitrary subset of rows.
\end{proof}
To bound the optimal solutions $\vec 1^T \vec \lambda^*$ of \eqref{eq:PauliLP} we define the dual \ac{LP}
\begin{maxi}
{}{\vec M^T \vec y}{}{}
\label{eq:dualLP}
\addConstraint{(\WP{\dd}{4^n})^T \vec y } {\leq \vec 1 \, ,}{\;}{\vec y \in \R^{\dd}} \, .
\end{maxi}
As in the previously considered case of two-body Ising interactions \cite{basler_time-optimal_2024}, we have the following bounds on the value of \eqref{eq:PauliLP}:
\begin{theorem}[bounds on solutions]\label{thm:gate-time}
 The optimal objective function value of \eqref{eq:PauliLP} with a partial Walsh-Hadamard matrix $\WP{\dd}{4^n}$ is bounded by
 \begin{equation}
  \linfnorm{ \vec M } \leq \vec 1^T \vec \lambda^* \leq \lonenorm{ \vec M} \, .
 \end{equation}
\end{theorem}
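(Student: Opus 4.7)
The plan is to establish the two bounds separately, using weak LP duality for the lower bound and an explicit primal construction (a kind of Pauli-twirl per target row) for the upper bound.

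For the lower bound $\linfnorm{\vec M} \leq \vec 1^T\vec\lambda^*$, I would invoke weak duality with \eqref{eq:dualLP}. Pick an index $\vec a^* \in \operatorname{nz}(\vec J)$ attaining $|M_{\vec a^*}| = \linfnorm{\vec M}$ and set $\vec y = \sign(M_{\vec a^*})\, \vec e_{\vec a^*}$. Because every entry of $\WP{\dd}{4^n}$ is $\pm 1$, we have $(\WP{\dd}{4^n})^T \vec y \in \{-1,+1\}^{4^n}$, so $\vec y$ is dual-feasible, with dual objective $\vec M^T \vec y = |M_{\vec a^*}| = \linfnorm{\vec M}$. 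Weak duality then gives $\vec 1^T \vec\lambda^* \geq \linfnorm{\vec M}$.

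For the upper bound $\vec 1^T \vec\lambda^* \leq \lonenorm{\vec M}$, I would construct, for each row $\vec a_0 \in \operatorname{nz}(\vec J)$, a nonnegative weight vector $\vec\lambda^{(\vec a_0)}$ that produces $M_{\vec a_0}$ on its own row and zero on every other row, with total weight $|M_{\vec a_0}|$. The recipe is to spread the mass uniformly over the coset of conjugation Paulis $\{\vec b : \langle \vec a_0, \vec b\rangle = s\}$, where $s=0$ if $M_{\vec a_0}>0$ and $s=1$ otherwise; concretely, $\lambda^{(\vec a_0)}_{\vec b} = |M_{\vec a_0}|/2^{2n-1}$ on this coset and $0$ elsewhere. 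On row $\vec a_0$ the signs $(-1)^{\langle \vec a_0, \vec b\rangle}$ are constant on the coset and yield $M_{\vec a_0}$; on any other row $\vec a \in \operatorname{nz}(\vec J)$, linear independence of $\vec a$ and $\vec a_0$ in $\FF_2^{2n}$ (they are distinct nonzero vectors) and nondegeneracy of the symplectic form imply that the joint map $\vec b\mapsto(\langle \vec a_0,\vec b\rangle,\langle \vec a,\vec b\rangle)$ is surjective, so the signs $(-1)^{\langle \vec a,\vec b\rangle}$ split the coset into two equal halves and the contribution vanishes. Summing $\vec\lambda = \sum_{\vec a_0}\vec\lambda^{(\vec a_0)}$ produces a feasible primal point with $\vec 1^T\vec\lambda = \sum_{\vec a_0}|M_{\vec a_0}| = \lonenorm{\vec M}$, giving the claimed upper bound.

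The only non-routine step is the cancellation on the other rows in the upper-bound construction; the argument reduces to the elementary fact that the symplectic form on $\FF_2^{2n}$ is nondegenerate and that any two distinct nonzero vectors are linearly independent over $\FF_2$, so I do not expect a serious obstacle. Everything else (dual feasibility, weak duality) is immediate from the $\pm 1$ structure of the partial Walsh-Hadamard matrix.
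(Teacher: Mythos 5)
Your proof is correct, and while the lower bound is in substance the same as the paper's (the paper observes directly that $\linfnorm{\vec M}=\linfnorm{\WP{\dd}{4^n}\vec\lambda^*}\leq \vec 1^T\vec\lambda^*$ from the $\pm1$ entries and $\vec\lambda^*\geq\vec 0$, which is exactly what your dual vector $\vec y=\sign(M_{\vec a^*})\vec e_{\vec a^*}$ certifies via weak duality), your upper bound takes a genuinely different route. The paper works entirely on the dual side: it shows that the dual feasible region of \eqref{eq:dualLP} embeds into a simplex $\mathcal{S}$ cut out by the full matrix $\WP{(4^n-1)}{4^n}$, that this simplex sits inside the hypercube $\linfnorm{\vec y}\leq 1$, and then invokes strong duality to transfer the bound $\max_{\vec y}\innerp{\vec M}{\vec y}\leq\lonenorm{\vec M}$ back to the primal. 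You instead exhibit an explicit primal feasible point: for each row $\vec a_0$ you spread mass $\abs{M_{\vec a_0}}/2^{2n-1}$ over the affine coset $\Set{\vec b : \langle\vec a_0,\vec b\rangle=s}$, and the cancellation on every other row follows correctly from the fact that distinct nonzero $\vec a,\vec a_0\in\FF_2^{2n}$ are linearly independent over $\FF_2$ and the symplectic form is nondegenerate, so the joint map $\vec b\mapsto(\langle\vec a_0,\vec b\rangle,\langle\vec a,\vec b\rangle)$ is onto $\FF_2^2$ with equal fibers. Your argument is more elementary (it needs only weak duality, never strong duality or the simplex/hypercube geometry) and is constructive: it produces an explicit decomposition achieving total time $\lonenorm{\vec M}$, which incidentally reproves the existence statement of Corollary~\ref{cor:full_W_feasibility}. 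What the paper's dual-geometric argument buys in exchange is the setup reused in Proposition~\ref{prop:worst_case}, where the dual feasible points $-\vec w_i$ certify that the upper bound is attained. The only cosmetic caveat in your write-up is the degenerate case $M_{\vec a_0}=0$ (skip that row in the construction, and note the lower bound is trivial when $\linfnorm{\vec M}=0$); this does not affect correctness.
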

\begin{proof}
The lower bound can be verified by the fact that $\WP{\dd}{4^n}$ in \eqref{eq:PauliLP} only has entries $\pm 1$ and that $\vec \lambda^*$ is non-negative.
Thus, it holds that $\linfnorm{ \vec M } = \linfnorm{ \WP{\dd}{4^n} \vec \lambda^* } \leq \vec 1^T \vec \lambda^*$.

To show the upper bound, we first define the set of feasible solutions for the dual \ac{LP}~\eqref{eq:dualLP}
\begin{equation}
 \mathcal{F} \coloneqq \Set*{ \vec y \in \R^\dd \given (\WP{\dd}{4^n})^T \vec y \leq \vec 1} \, .
\end{equation}
Next, consider the partial Walsh-Hadamard matrix $\WP{(4^n-1)}{4^n}$ without the first row, corresponding to the identity Pauli term $P_{\vec a} = I^{\otimes n}$ with $\vec a = \vec 0$.
We show that
\begin{equation}
 \mathcal{S} \coloneqq \Set*{ \vec y \in \R^{4^n-1} \given (\WP{(4^n-1)}{4^n})^T \vec y \leq \vec 1}
\end{equation}
is a simplex.
A simplex is formed by affine independent vectors.
Vectors of the form $(1, \vec v_i)^T$ are linearly dependent if and only if the vectors $\vec v_i$ are affine dependent.
From linear dependence it follows that there is a vector $\vec t \neq \vec 0$ such that $\sum_i t_i (1, \vec v_i)^T = \vec 0$.
Thus $\sum_i t_i = 0$ and $\sum_i \vec v_i = \vec 0$, and the vectors $\vec v_i$ are affine dependent.
For any $n$, the Walsh-Hadamard matrix $W=(\vec 1 , (\WP{(4^n-1)}{4^n})^T)$ has linearly independent rows and columns.
Therefore, $\WP{(4^n-1)}{4^n}$ has affine independent columns, and $\mathcal{S}$ forms a simplex.
With $\mathcal{H} = \{\vec y \in \R^{4^{n}-1} | \linfnorm{ \vec y } \leq 1 \}$ we denote the $4^{n}-1$ dimensional hypercube.
Note that the rows of $\WP{\dd}{4^n}$ are rows of the $4^{n} \times 4^{n}$ dimensional Walsh-Hadamard matrix.
We define the embedding $g: \R^\dd \rightarrow \R^{(4^{n})-1}$ by appending zeros, such that $(\WP{\dd}{4^n})^T \vec y = (\WP{(4^n-1)}{4^n})^T g(\vec y)$.
Therefore the objective value of the dual \ac{LP}~\eqref{eq:dualLP} can be upper bounded as follows
\begin{equation}
\begin{aligned}
 \max_{\vec y \in \mathcal{F}} \innerp{\vec M}{\vec y} &= \max_{g(\vec y) \in \mathcal{S}} \innerp{g(\vec M)}{g(\vec y)} \\
 &\leq \max_{\vec x \in \mathcal{S}} \innerp{g(\vec M)}{\vec x} \\
 &\leq \max_{\vec x \in \mathcal{H}} \innerp{g(\vec M)}{\vec x} = \lonenorm{ \vec M } \, .
\end{aligned}
\end{equation}
The upper bound for $\vec 1^T \vec \lambda^*$ follows by strong duality.
\end{proof}
Next, we show a (not complete) set of instances of \eqref{eq:PauliLP}, yielding solutions $\vec \lambda^*$ which satisfy the upper bound of \cref{thm:gate-time}.
Such $\vec M$ constitute the worst cases.
\begin{proposition}
\label{prop:worst_case}
  If $\vec M \in \Set*{ - \vec w_i \given \vec w_i \text{ is the i-th column of }\WP{(4^n-1)}{4^n}}$, then $\vec 1^T \vec \lambda^* = \lonenorm{ \vec M }$.
\end{proposition}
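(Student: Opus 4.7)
The plan is to saturate the upper bound from \cref{thm:gate-time} by exhibiting an explicit feasible primal solution whose objective value equals $\lonenorm{\vec M}$, which immediately forces optimality by sandwiching.

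First I would compute $\lonenorm{\vec M}$. Since $\vec M = -\vec w_i$ is the negative of a column of $\WP{(4^n-1)}{4^n}$, and every entry of a Walsh--Hadamard matrix is $\pm 1$, we get $\lonenorm{\vec M} = 4^n - 1$. So the target is to produce a feasible $\vec \lambda \in \RR^{4^n}_{\geq 0}$ for \eqref{eq:PauliLP} with $\WP{(4^n-1)}{4^n}\vec\lambda = -\vec w_i$ and $\vec 1^T \vec \lambda = 4^n - 1$.

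The construction I would use is $\vec \lambda \coloneqq \vec 1 - \vec e_i$, i.e.\ set the $i$-th coordinate to zero and all others to one. This is clearly non-negative and has $\ell_1$-norm exactly $4^n - 1$. For the equality constraint, I would use the key observation (already invoked in the proof of \cref{cor:full_W_feasibility}) that the rows of $\WP{(4^n-1)}{4^n}$ correspond to Pauli labels $\vec a \neq \vec 0$, and for any such $\vec a$,
\begin{equation}
 \sum_{\vec b \in \FF_2^{2n}} (-1)^{\langle \vec a, \vec b\rangle} = 0 \, .
\end{equation}
Hence $\WP{(4^n-1)}{4^n}\vec 1 = \vec 0$, and consequently
\begin{equation}
 \WP{(4^n-1)}{4^n} \vec \lambda = \WP{(4^n-1)}{4^n}(\vec 1 - \vec e_i) = \vec 0 - \vec w_i = \vec M \, ,
\end{equation}
so $\vec \lambda$ is feasible.

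Combining this with the upper bound $\vec 1^T \vec \lambda^* \leq \lonenorm{\vec M}$ from \cref{thm:gate-time} and the fact that any feasible point upper-bounds the optimum, we obtain $\vec 1^T \vec \lambda^* = \lonenorm{\vec M}$. There is no real obstacle here; the only subtlety is matching indexing conventions (the partial Walsh--Hadamard matrix has $4^n$ columns, $\vec \lambda$ lives in $\RR^{4^n}$, and $i$ must be a valid column index for which $\vec w_i$ makes sense), but this is exactly how the matrix was defined in \cref{sec:pauli_conjugation}.
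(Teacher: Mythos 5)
There is a genuine gap in the concluding step. Your primal construction $\vec\lambda = \vec 1 - \vec e_i$ is exactly the one the paper uses, and your verification of feasibility via $\WP{(4^n-1)}{4^n}\vec 1 = \vec 0$ (each row sums to zero since $\vec a \neq \vec 0$) is correct and cleaner than the paper's ``clearly''. But the final sandwich does not close: \eqref{eq:PauliLP} is a \emph{minimization}, so a feasible point only gives $\vec 1^T\vec\lambda^* \leq 4^n-1$, and the bound $\vec 1^T\vec\lambda^* \leq \lonenorm{\vec M}$ from \cref{thm:gate-time} is \emph{also} an upper bound. Two upper bounds cannot force equality --- a priori some other feasible point could achieve a strictly smaller objective. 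The only lower bound you have in hand from \cref{thm:gate-time} is $\linfnorm{\vec M} = 1$, which is far from $4^n-1$.

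What is missing is a matching lower bound, and this is where the paper does real work: it exhibits the dual feasible point $\vec y = -\vec w_i$ for \eqref{eq:dualLP}. Feasibility follows because the truncated columns satisfy $\vec w_j^T\vec w_i = 4^n\delta_{ij} - 1$ (orthogonality of the full Walsh--Hadamard columns minus the contribution of the deleted all-ones row), so
\begin{equation}
 \left((\WP{(4^n-1)}{4^n})^T(-\vec w_i)\right)_j = \begin{cases} -(4^n-1)\,, & j=i\\ 1\,, & j\neq i \end{cases} \;\leq\; 1\,,
\end{equation}
and the dual objective is $\vec M^T\vec y = \vec w_i^T\vec w_i = 4^n-1$. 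Weak duality then gives $\vec 1^T\vec\lambda^* \geq 4^n-1$, which together with your feasible point yields the claimed equality. Without this dual certificate (or an equivalent lower-bound argument), your proof only establishes $\vec 1^T\vec\lambda^* \leq \lonenorm{\vec M}$, which is already contained in \cref{thm:gate-time}.
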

\begin{proof}
Let $\vec y = - \vec w_i$ be a negative column of $\WP{(4^n-1)}{4^n}$.
Then, by orthogonality of the rows/columns of the Walsh-Hadamard matrix, we obtain
 \begin{equation}
  \left((\WP{(4^n-1)}{4^n})^T (- \vec w_i) \right)_j = \begin{cases}
		-(4^n-1) \, , & i=j \\
		1 \, , & \text{otherwise}\, .
	\end{cases}
 \end{equation}
 Therefore, $\vec y = - \vec w_i$ is a feasible solution $(\WP{(4^n-1)}{4^n})^T \vec y \leq \vec 1$ to the dual \ac{LP}~\eqref{eq:dualLP}.
The dual objective function value is $\vec M^T \vec y = \lonenorm{\vec w_i} = 4^n-1$.
Next, we show a feasible solution of \eqref{eq:PauliLP}.
For a $\vec M = - \vec w_i$ define
\begin{equation}
   \lambda_j = \begin{cases}
		0 \, , & i=j \\
		1 \, , & \text{otherwise}\, .
	\end{cases}
\end{equation}
Clearly, this satisfies $\WP{(4^n-1)}{4^n} \vec \lambda = - \vec w_i$.
Furthermore, the primal objective function value $\vec 1^T \vec \lambda = 4^n-1$ is the same as for the dual objective function value, which shows optimality.
It is easy to see that $\lonenorm{ \vec M } = 4^n-1$.
\end{proof}

% ~~~~~~~~~~~~~~~~~~~
\section{Comments on the efficient relaxation for the Pauli conjugation}
\label{app:eff_pauli_heu}
% ~~~~~~~~~~~~~~~~~~~
As mentioned in the main text, Wendel's theorem is applicable to spherical symmetric distributions.
This would imply that sampling a certain column from $\WP{\dd}{4^n}$ has the same probability as sampling the same column with a flipped sign.
Recall that for a partial Walsh-Hadamard matrix $\WP{\dd}{4^n}_{\vec a \vec b} = (-1)^{\langle \vec a,\vec b \rangle}$, with $\langle \vec a,\vec b \rangle = \vec a_x \cdot \vec b_z + \vec a_z \cdot \vec b_x$.
It holds that
\begin{equation}
 \WP{\dd}{4^n}_{\vec a \vec b} = - \WP{\dd}{4^n}_{\vec a \bar{\vec b}} \quad \forall \vec a, \vec b \in \FF_2^{2n} ,
\end{equation}
for $\abs{a} = 1 \operatorname{mod}2$ and with the binary complement $\bar{\vec b}$ is given element-wise given by $\bar x\coloneqq 1-x$ for any $x\in \Set{0,1}$.
If the decomposition of $H_S$ has only terms $J_{\vec a} P_{\vec a}$ with odd $\abs{a}$, then
we can apply Wendel's theorem directly.
In this case, we have a success probability of $1/2$ to find a feasible $\WP{\dd}{2 \dd}$ (with $\dd$ interactions) if we sample $2\dd$ many $\vec b \in \FF_2^{2n}$ uniformly random.
However, an example of such a Hamiltonian with two-body interactions is
\begin{equation}
 H = \sum_{i=1}^n \left(J_i^X X_i + J_i^Y Y_i\right) + \sum_{i \neq j}^n \left(J_{ij}^{XZ} X_i Z_j + J_{ij}^{YZ} Y_i Z_j \right) ,
\end{equation}
and with commuting interactions is
\begin{equation}
 H = \sum_{i=1}^n J_i^X X_i + \sum_{ijk}^n J_{ijk}^{XXX} X_i X_j X_k \, ,
\end{equation}
with arbitrary coupling constants.
Unfortunately, for a general Hamiltonian, we cannot use Wendel's theorem to construct a relaxation for \eqref{eq:PauliLP}.

Recently, lower bounds on $p_{\rr,\vec x}$ have been proposed for arbitrary distributions \cite{Hayakawa2023}.
However, their results rely on the halfspace depth (or Tukey depth), which is hard to compute.
It is a measure of how extreme a point is with respect to a distribution of random points.
\begin{definition}[halfspace depth]
 Let $\vec x$ be an arbitrary $\dd$-dimensional random vector.
 Then, the halfspace depth at the origin is defined as
 \begin{equation}
  \alpha_{\vec x} \coloneqq \inf_{\| \vec c \|=1} \PP (\vec c^T \vec x \leq \vec 0) \, .
 \end{equation}
\end{definition}
The halfspace depth $\alpha_{\vec x}$ is the minimum (fraction) number of points in a halfspace with the origin on the boundary.
\begin{theorem}[{\cite[Theorem~14]{Hayakawa2023}}]
 Let $\vec x$ be an arbitrary $\dd$-dimensional random vector.
 Then, for each positive integer $\rr \geq 3\dd/\alpha_{\vec x}$, we have
 \begin{equation}
  p_{\rr,\vec x} > 1-\frac{1}{2^\dd} \, .
 \end{equation}
\end{theorem}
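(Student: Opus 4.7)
The plan is to use the separating hyperplane theorem to convert non-inclusion of $\vec 0$ into the existence of a ``witness'' direction, and then bound the probability of such a witness using the halfspace depth, combined with a suitable union bound over the finitely many combinatorial types of witnesses.

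First I would reduce to showing $\PP(\vec 0 \notin \conv(\vec x_1,\dots,\vec x_\rr)) < 2^{-\dd}$. By the separating hyperplane theorem (in general position), $\vec 0 \notin \conv(\vec x_1,\dots,\vec x_\rr)$ if and only if there exists a unit vector $\vec c \in \sphere^{\dd-1}$ with $\vec c^T \vec x_i > 0$ for every $i$. For any fixed such $\vec c$, the definition of the halfspace depth directly gives $\PP(\vec c^T \vec x > 0) \leq 1 - \alpha_{\vec x}$, so by independence
\begin{equation*}
\PP\bigl(\vec c^T \vec x_i > 0 \text{ for all } i\bigr) \leq (1-\alpha_{\vec x})^\rr \leq \e^{-\alpha_{\vec x}\rr},
\end{equation*}
which evaluates to $\e^{-3\dd}$ at the threshold $\rr = 3\dd/\alpha_{\vec x}$.

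Second, I would pass from a single $\vec c$ to all of $\sphere^{\dd-1}$ via a combinatorial reduction. The sign pattern $(\sign(\vec c^T \vec x_i))_{i=1}^\rr$ is constant on open cells of the central hyperplane arrangement $\{\vec c : \vec c^T \vec x_i = 0\}_{i=1}^\rr$ in $\R^\dd$, and the number of such cells is bounded by $2\sum_{k=0}^{\dd-1}\binom{\rr-1}{k}$, a polynomial in $\rr$ of degree $\dd-1$. Equivalently, the class of halfspaces through the origin has VC-dimension $\dd$, so an $\epsilon$-net bound for this class (with $\epsilon=\alpha_{\vec x}$) guarantees that every halfspace of probability $\geq \alpha_{\vec x}$ is hit by some sample once $\rr$ is sufficiently large, and this is precisely the condition $\vec 0 \in \conv(\vec x_1,\dots,\vec x_\rr)$.

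The main obstacle is obtaining the clean constant: a naive union bound multiplies the cell count by $(1-\alpha_{\vec x})^\rr$, yielding at $\rr = 3\dd/\alpha_{\vec x}$ a bound of roughly $(\dd/\alpha_{\vec x})^{\dd} \e^{-3\dd}$, which is not uniformly below $2^{-\dd}$ for very small $\alpha_{\vec x}$. To reach the stated $3\dd/\alpha_{\vec x}$ threshold without a spurious $\log(1/\alpha_{\vec x})$ factor, one must exploit the coupling between the candidate direction and the random samples more carefully---either through a sharper $\epsilon$-net tailored to halfspaces through the origin, or via a peeling/iteration argument that successively reduces the effective dimension until Wendel-type exact bounds can be invoked. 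I would focus the bulk of the effort on this third step, as the first two are essentially routine.
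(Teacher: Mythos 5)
First, note that the paper does not prove this statement at all: it is imported verbatim as \cite[Theorem~14]{Hayakawa2023}, so there is no in-paper proof to compare against and your attempt has to stand on its own. On its own, it does not yet constitute a proof. Your first two steps are fine: the separation criterion ($\vec 0 \notin \conv(\vec x_1,\dots,\vec x_\rr)$ iff some $\vec c$ has $\vec c^T\vec x_i>0$ for all $i$), the single-direction bound $\PP(\vec c^T\vec x>0)\le 1-\alpha_{\vec x}$, and the estimate $(1-\alpha_{\vec x})^\rr\le \e^{-3\dd}<2^{-\dd}$ at $\rr=3\dd/\alpha_{\vec x}$ are all correct. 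But the entire content of the theorem is the passage from one fixed direction to all directions, and that is exactly the step you defer. As you yourself compute, the union bound over the $\LandauO(\rr^{\dd-1})$ sign cells (equivalently, over the $\binom{\rr}{\dd-1}$ candidate supporting hyperplanes) costs a factor of order $(\dd/\alpha_{\vec x})^{\dd-1}$, which overwhelms $\e^{-3\dd}$ as $\alpha_{\vec x}\to 0$; and the generic $\varepsilon$-net theorem for VC dimension $\dd$ carries a $\log(1/\alpha_{\vec x})$ factor that the stated threshold $3\dd/\alpha_{\vec x}$ does not have. Naming the obstacle and gesturing at ``a sharper $\varepsilon$-net or a peeling argument'' is an honest diagnosis, but it is not a proof plan that one could execute as written: you have shown why the routine approaches fail, not how to succeed.

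For orientation, the argument in \cite{Hayakawa2023} does not repair the union bound; it avoids it. The route there is a Wendel-type inequality that controls $1-p_{\rr,\vec x}$ directly by a lower binomial tail in the depth, of the form $1-p_{\rr,\vec x}\le \PP\paren{\mathrm{Bin}(\rr,\alpha_{\vec x})\le \dd-1}$ (proved by a combinatorial/inductive argument in which the separating direction is tied to at most $\dd-1$ of the sample points, so that no supremum over directions is ever taken), after which the stated threshold follows from a Chernoff estimate: with $\rr\ge 3\dd/\alpha_{\vec x}$ the binomial has mean at least $3\dd$, and $\PP(\mathrm{Bin}\le \dd-1)\le \e^{-3\dd}(3\e)^{\dd}=(3/\e^{2})^{\dd}<2^{-\dd}$. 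If you want to complete your write-up, that intermediate binomial-tail lemma is the missing ingredient you should aim to prove; the cell-counting/$\varepsilon$-net framing will not get you to the constant $3$.
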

Let $\vec w$ be a $\dd$-dimensional random vector drawn uniformly from $\operatorname{col} (\WP{\dd}{4^n})$.
From \cref{cor:full_W_feasibility}, we find the trivial lower bound $1/4^n \leq \alpha_{\vec w}$, since at least one point is in an arbitrary halfspace with the origin on its boundary.
Finding a constant lower bound $1/\beta \leq \alpha_{\vec w}$ would imply that for $\rr \geq 3\dd \beta$ we find a feasible $\WP{\dd}{\rr}$ with high probability.
One needs to show that at least $4^{\LandauO (n)}$ points are contained in an arbitrary halfspace with the origin on its boundary.

% ~~~~~~~~~~~~~~~~~~~
\section{Proofs for the general robust conjugation method}
\label{app:robust_general}
% ~~~~~~~~~~~~~~~~~~~
We provide the proofs for \cref{sec:robust_general}.
Some proofs provide more technical details which aims for an easy implementation into a programming language.
For the sake of easy readability we repeat the definitions of the main text.

\DefSingleGateLayer*

We consider the time evolution
\begin{equation}
\label{eq_app:evo_block_finite_pulse_time}
 U(t\lambda_{\vec c}) =  \left( \prodl_{\kk=1}^{\ns} \e^{- \i t_p^{(\kk)} (H_S - H_{c^{(\kk)}})}  \right) \e^{-\i t \lambda_{\vec c} H_S} \left( \prodr_{\kk=1}^{\ns} \e^{- \i t_p^{(\kk)} (H_S + H_{c^{(\kk)}})}  \right) \, ,
\end{equation}
with the finite duration of the $\kk$-th pulse layer $0<t_p^{(\kk)}$.

\LemHav*

\begin{proof}
 We consider the time evolution during the $\kk$-th layer of single-qubit pulses in the interaction frame with respect to the single-qubit pulse Hamiltonian $H_{c^{(\kk)}}$ \cite{Haeberlen1968,Brinkmann2016}
\begin{equation}
\label{eq_app:int_frame_prop}
 \e^{-\i t_p^{(\kk)} (H_S \pm H_{c^{(\kk)}})} = \e^{\mp \i t_p^{(\kk)} H_{c^{(\kk)}}} \mathcal{U}^{(\kk)}_{\pm} (t_p^{(\kk)})
 = \vec S_{c^{(\kk)}}^{\pm 1} \mathcal{U}^{(\kk)}_{\pm} (t_p^{(\kk)}) \, ,
\end{equation}
with the interaction frame propagator $\mathcal{U}^{(\kk)}_{\pm} (t_p^{(\kk)})$.
The interaction frame propagator has to fulfill
\begin{equation}
 \frac{d \mathcal{U}^{(\kk)}_{\pm} (t)}{dt} =  - \i \left(\vec S_{c^{(\kk)}}^{\mp 1}(t) H_S \vec S_{c^{(\kk)}}^{\pm 1}(t) \right) \mathcal{U}^{(\kk)}_{\pm} (t) \, .
\end{equation}
Inserting \cref{eq_app:int_frame_prop} into \cref{eq_app:evo_block_finite_pulse_time} yields
\begin{equation}
 U(t\lambda_{\vec c}) = \left( \prodl_{\kk=1}^{\ns} \vec S_{c^{(\kk)}}^{- 1} \mathcal{U}^{(\kk)}_{-} (t_p^{(\kk)}) \right) \e^{-\i t \lambda_{\vec c} H_S} \left( \prodr_{\kk=1}^{\ns} \vec S_{c^{(\kk)}} \mathcal{U}^{(\kk)}_{+} (t_p^{(\kk)}) \right) \, .
\end{equation}
We define $\tilde{\mathcal{U}}^{(\kk)}_{-} (t)$ as
\begin{equation}
 \frac{d \tilde{\mathcal{U}}^{(\kk)}_{-} (t)}{dt} \coloneqq \vec S_{\vec c^{\geq \kk}}^{-1} \frac{d \mathcal{U}{(\kk)}_{-} (t)}{dt} \vec S_{\vec c^{\geq \kk}}
 =  - \i \left( \vec S_{\vec c^{\geq \kk}}^{-1} \vec S_{c^{(\kk)}}(t) H_S \vec S_{c^{(\kk)}}^{-1} (t) \vec S_{\vec c^{\geq \kk}} \right) \tilde{\mathcal{U}}^{(\kk)}_{-} (t) \, ,
\end{equation}
and $\tilde{\mathcal{U}}^{(\kk)}_{+} (t)$ as
\begin{equation}
 \frac{d \tilde{\mathcal{U}}^{(\kk)}_{+} (t)}{dt} \coloneqq \vec S_{\vec c^{\geq (\kk + 1)}}^{-1} \frac{d \mathcal{U}{(\kk)}_{+} (t)}{dt} \vec S_{\vec c^{\geq (\kk + 1)}}
 =  - \i \left( \vec S_{\vec c^{\geq (\kk + 1)}}^{-1} \vec S_{c^{(\kk)}}^{-1}(t) H_S \vec S_{c^{(\kk)}}(t) \vec S_{\vec c^{\geq (\kk + 1)}} \right) \tilde{\mathcal{U}}^{(\kk)}_{+} (t) \, ;
\end{equation}
where
$\tilde{\mathcal{U}}^{(\kk)}_{-} (t)$ and $\tilde{\mathcal{U}}^{(\kk)}_{+} (t)$ are defined such that we can commute the exact evolution of the single-qubit pulse layers to the free evolution under $H_S$.
It directly follows that
\begin{equation}
 U(t\lambda_{\vec c}) = \left( \prodl_{\kk=1}^{\ns} \tilde{\mathcal{U}}^{(\kk)}_{-} (t) \right) \e^{-\i t \lambda_{\vec c} \vec S_{\vec c}^{-1} H_S \vec S_{\vec c}} \left( \prodr_{\kk=1}^{\ns} \tilde{\mathcal{U}}^{(\kk)}_{+} (t) \right) \, .
\end{equation}
The Hamiltonian governing the evolution of $U(t\lambda_{\vec c})$ is defined piecewise as
\begin{equation}
H_{\vec c}(\tilde{t}) \coloneqq \begin{cases}
		\vec S_{\vec c^{\geq (\kk + 1)}}^{-1} \vec S_{c^{(\kk)}}^{-1} (\tilde{t}) H_S \vec S_{c^{(\kk)}}(\tilde{t}) \vec S_{\vec c^{\geq (\kk + 1)}} \, , &  T_p^{\leq (\kk-1)} \leq \tilde{t} < T_p^{\leq \kk} \\
		\vec S_{\vec c}^{-1} H_S \vec S_{\vec c} \, , & T_p \leq \tilde{t} < T_p + t \lambda_{\vec c} \\
		\vec S_{\vec c^{\geq \kk}}^{-1} \vec S_{c^{(\kk)}}(\tilde{t}) H_S \vec S_{c^{(\kk)}}^{-1} (\tilde{t}) \vec S_{\vec c^{\geq \kk}} \, , & T_p + T_p^{\leq (\kk-1)} + t \lambda_{\vec c} \leq \tilde{t} < T_p + T_p^{\leq \kk} + t \lambda_{\vec c} \, .
	\end{cases}
\end{equation}
The effective Hamiltonian is approximated by the first-order term in the Magnus expansion up to time $T = 2T_p + t \lambda_{\vec c}$, yielding the time independent average Hamiltonian
\begin{equation}
\begin{aligned}
 H_{\av,\vec c}(t) &= \int_0^{T} H_{\vec c}(\tilde{t}) \rmd \tilde{t} \\
 &=  t \lambda_{\vec c} \vec S_{\vec c}^{-1} H_S \vec S_{\vec c} + \sum_{\kk=1}^{\ns} \vec S_{\vec c^{\geq (\kk + 1)}}^{-1} \int_0^{t_p^{(\kk)}} \vec S_{c^{(\kk)}}^{-1} (t) H_S \vec S_{c^{(\kk)}}(t) \rmd t \vec S_{\vec c^{\geq (\kk + 1)}} + \vec S_{\vec c^{\geq \kk}}^{-1} \int_0^{t_p^{(\kk)}} \vec S_{c^{(\kk)}}(t) H_S \vec S_{c^{(\kk)}}^{-1} (t) \rmd t \vec S_{\vec c^{\geq \kk}}\\
 &=t \lambda_{\vec c} \vec S_{\vec c}^{-1} H_S \vec S_{\vec c} + 2 \sum_{\kk=1}^{\ns} \vec S_{\vec c^{\geq (\kk + 1)}}^{-1} \int_0^{t_p^{(\kk)}} \vec S_{c^{(\kk)}}^{-1} (t) H_S \vec S_{c^{(\kk)}}(t) \rmd t \vec S_{\vec c^{\geq (\kk + 1)}} \, .
\end{aligned}
\end{equation}
We denote the term corresponding to the finite pulse time error by
\begin{equation}
 H_{\err, \vec c} \coloneqq 2 \sum_{\kk=1}^{\ns} \vec S_{\vec c^{\geq (\kk + 1)}}^{-1} \int_0^{t_p^{(\kk)}} \vec S_{c^{(\kk)}}^{-1} (t) H_S \vec S_{c^{(\kk)}} (t) \rmd t \, \vec S_{\vec c^{\geq (\kk + 1)}} \, .
\end{equation}
Then, the average Hamiltonian is
\begin{equation}
 H_{\av, \vec c} (t) = t \lambda_{\vec c} \vec S_{\vec c}^{-1} H_S \vec S_{\vec c} + H_{\err, \vec c} \, .
\end{equation}
Note, that conjugation of the system Hamiltonian $H_S$ with single-qubit operations as in $\vec S_{\vec c}^{-1} H_S \vec S_{\vec c}$ and $H_{\err, \vec c}$ always preserves the locality of $H_S$.

Finally, we prove the error bounds on the truncation of the Magnus expansion after the first order.
The $k$-th order term of the Magnus expansion can be written as
\begin{equation}
 H_{\av}^{(k)} = \sum_{\sigma \in S_k} (-1)^{d_b} \frac{d_a! d_b!}{k!} \int_0^{T} \rmd \tau_1 \int_0^{\tau_1} \rmd \tau_2 \dots \int_0^{\tau_{k-1}} \rmd \tau_k H_{\vec c}(\tau_{\sigma(1)}) H_{\vec c}(\tau_{\sigma(2)}) \dots H_{\vec c}(\tau_{\sigma(k)}) \, ,
\end{equation}
where $T = 2T_p + t \lambda_{\vec c}$, $S_k$ denotes the group of permutations $\sigma$ of the set $[k]$ the number of ascents $d_a$ and descents $d_b$ are defined as
\begin{equation}
 d_a \coloneqq |\Set*{i \in [k-1] \given \sigma(i)<\sigma(i+1)}|\quad \text{ and } \quad
 d_b \coloneqq |\Set*{i \in [k-1] \given \sigma(i)>\sigma(i+1)}| \, ,
\end{equation}
and it holds that $d_a+d_b = k$ \cite{Arnal2017}.
Finally, we argue that
\begin{equation}
 \| H_{\av}^{(k)} \| \leq \LandauO \left( (2T_p + t \lambda_{\vec c})^k \| \max_{\tau \in [0,2T_p + t \lambda_{\vec c}]} H_{\vec c}(\tau) \|^k \right) = \LandauO ( (2T_p + t \lambda_{\vec c})^k \| H_S \|^k) \, ,
\end{equation}
where the equality follows from the definition of $H_{\vec c}(\tau) = U^{-1} H_S U$ for some unitaries $U$ and the invariance of the spectral norm under unitary transformation.
The error bound in spectral norm follows from the Duhamel principle in Ref.~\cite[{App.~A}]{Sharma2024}
\begin{equation}
 \| U(t \lambda_{\vec c}) - \e^{-\i H_{\av, \vec c} (t)} \| \leq \sum_{k=2}^\infty \| H_{\av}^{(k)} \| = \LandauO((2T_p + t \lambda_{\vec c})^2 \| H_S \|^2) \, ,
\end{equation}
assuming that $\| H_{\av}^{(2)} \|> \sum_{k=3}^\infty\| H_{\av}^{(k)} \|$, i.e.\ the Magnus expansion converges.
\end{proof}

\begin{restatable}{restatableLem}{LemWmat}
\label{lem:robust_LP}
 Let $\WGG{\dd}{\rr}, \EC{\dd}{\rr} \in \R^{\dd \times \rr}$ be the matrices representing the ideal conjugation and the finite pulse time error in the Pauli basis respectively.
 The entries are given by
\begin{equation}
 \WGG{\dd}{\rr}_{\vec a \vec c} \coloneqq \frac{1}{2^n} \Tr \left(P_{\vec a} \left(\vec S_{\vec c}^{\dagger} H_S \vec S_{\vec c} \right) \right) \quad \text{and} \quad \EC{\dd}{\rr}_{\vec a \vec c} \coloneqq \frac{1}{2^n} \Tr \left(P_{\vec a} H_{\err, \vec c} \right) \, ,
\end{equation}
and can be calculated in polynomial time in the number of qubits for a local system Hamiltonian $H_S$.
\end{restatable}
\begin{proof}
Recall, that the system Hamiltonian has the form
\begin{equation}
H_S = \sum_{\vec a \in \FF_2^{2n}\setminus \{ \vec 0 \}} J_{\vec a} P_{\vec a} \, .
\end{equation}
Let $H_{c^{(\kk)}} = \frac{\theta^{(\kk)}}{t_p^{(\kk)}} \sum_{i=1}^n (-1)^{s_i^{(\kk)}} h_i^{(\kk)}$ be an arbitrary layer of single-qubit rotations, with the rotation angle $\theta^{(\kk)}$, the pulse duration $t_p^{(\kk)}$ the generators $h_i^{(\kk)} \coloneqq p_{x,i}^{(\kk)} X + p_{y,i}^{(\kk)} Y + p_{z,i}^{(\kk)} Z$ and the rotation direction $s_i^{(\kk)} \in \FF_2$.
For the sake of a clear notation we omit the layer index $(\kk)$ for now.
The single-qubit pulse on the $i$-th qubit can be written as
\begin{equation}
 S_{c_i}(t) \coloneqq \e^{-\i t \frac{\theta}{t_p} h_i} = \cos(t \frac{\theta}{t_p}) I - \i (-1)^{s_i} \sin(t \frac{\theta}{t_p}) h_i \, .
\end{equation}
Then, $\vec S_{c} (t) = \e^{- \i t H_{c}} = \bigotimes_{i=1}^n S_{c_i}(t)$.
The conjugation with a single-qubit pulse layer changes the interaction term as
\begin{equation}
\label{eq:L_conjugation}
 \vec S_{c}^{-1} (t) J_{\vec a} P_{\vec a} \vec S_{c} (t) = J_{\vec a} \bigotimes_{i=1}^n S_{c_i}^{-1} (t) P_{\vec a_i} S_{c_i}(t) \, .
\end{equation}
The effect of such a conjugation on the $i$-th qubit is given by
\begin{equation}
\label{eq:L_conjugation_qubit}
 S_{c_i}^{-1} (t) P_{\vec a_i} S_{c_i}(t) = \cos^2(\tilde{\theta}) P_{\vec a_i} + \sin^2(\tilde{\theta}) h_i P_{\vec a_i} h_i + \i [h_i , P_{\vec a_i}] (-1)^{s_i} \sin (\tilde{\theta}) \cos (\tilde{\theta}) \, ,
\end{equation}
with $\tilde{\theta} \coloneqq t \frac{\theta}{t_p}$.
\Cref{eq:L_conjugation_qubit} can be further decomposed into Pauli terms $S_{c_i}^{-1} (t) P_{\vec a_i} S_{c_i}(t) = g_{x,i} (\tilde{\theta}) X + g_{y,i} (\tilde{\theta}) Y + g_{z,i} (\tilde{\theta}) Z$, with
\begin{equation}
\begin{aligned}
 g_{x,i} (\tilde{\theta}) &\coloneqq \begin{cases}
		\cos^2(\tilde{\theta}) + (p_{x,i}^2 - p_{y,i}^2 -p_{z,i}^2) \sin^2(\tilde{\theta})  \, , &\text{if } P_{\vec a_i} = X \\
		2 p_{x,i} p_{y,i} \sin^2(\tilde{\theta}) + 2 p_{z,i} (-1)^{s_i} \sin (\tilde{\theta}) \cos (\tilde{\theta}) \, , & \text{if }P_{\vec a_i} = Y \\
		2 p_{x,i} p_{z,i} \sin^2(\tilde{\theta}) - 2 p_{y,i} (-1)^{s_i} \sin (\tilde{\theta}) \cos (\tilde{\theta}) \, , & \text{if }P_{\vec a_i} = Z\, ,
	\end{cases} \\
 g_{y,i} (\tilde{\theta}) &\coloneqq \begin{cases}
		2 p_{x,i} p_{y,i} \sin^2(\tilde{\theta}) + 2 p_{z,i} (-1)^{s_i} \sin (\tilde{\theta}) \cos (\tilde{\theta}) \, , & \text{if }P_{\vec a_i} = X \\
		\cos^2(\tilde{\theta}) + (p_{y,i}^2 - p_{x,i}^2 -p_{z,i}^2) \sin^2(\tilde{\theta}) \, , & \text{if }P_{\vec a_i} = Y \\
		2 p_{y,i} p_{z,i} \sin^2(\tilde{\theta}) - 2 p_{x,i} (-1)^{s_i} \sin (\tilde{\theta}) \cos (\tilde{\theta}) \, , & \text{if }P_{\vec a_i} = Z\, ,
	\end{cases} \\
 g_{z,i} (\tilde{\theta}) &\coloneqq \begin{cases}
		2 p_{x,i} p_{z,i} \sin^2(\tilde{\theta}) - 2 p_{y,i} (-1)^{s_i} \sin (\tilde{\theta}) \cos (\tilde{\theta}) \, , & \text{if }P_{\vec a_i} = X \\
		2 p_{y,i} p_{z,i} \sin^2(\tilde{\theta}) + 2 p_{x,i} (-1)^{s_i} \sin (\tilde{\theta}) \cos (\tilde{\theta}) \, , & \text{if }P_{\vec a_i} = Y \\
		\cos^2(\tilde{\theta}) + (p_{z,i}^2 - p_{x,i}^2 -p_{y,i}^2) \sin^2(\tilde{\theta}) \, , & \text{if }P_{\vec a_i} = Z\, ,
	\end{cases}
\end{aligned}
\end{equation}
and if $P_{\vec a_i} = I$, then $S_{c_i}^{-1} (t) P_{\vec a_i} S_{c_i}(t) = I$.
Assume, that the interaction $J_{\vec a} P_{\vec a}$ is $k$-local, i.e.\ the interaction acts on the qubits $i \in \supp (\vec a)$ and $|\supp (\vec a)| = k$.
Applying $A \otimes (B + C) = A \otimes B + A \otimes C$ the conjugation in \cref{eq:L_conjugation} yields
\begin{equation}
\label{eq:extended_L_conjugation}
 \bigotimes_{i=1}^n S_{c_i}^{-1} (t) P_{\vec a_i} S_{c_i}(t)
 = \sum_{\substack{\tilde{\vec a} \in \FF_2^n \\ \supp (\tilde{\vec a}) = \supp (\vec a)}} \left(\prod_{i \in \tilde{\vec a}} g_{\tilde{\vec a}_i,i} (\tilde{\theta}) \right) P_{\tilde{\vec a}} \, ,
\end{equation}
where we identify $g_{(1,0),i} = g_{x,i}$, $g_{(1,1),i} = g_{y,i}$ and $g_{(0,1),i} = g_{z,i}$.
This sum has at most $3^k$ terms, which is constant for a system Hamiltonian $H_S$ with a fixed locality.

With that we are ready to compute $\EC{\dd}{\rr}_{\vec a \vec c}$ by calculating the Pauli coefficients of
\begin{equation}
\label{eq:Herr_general_robust_app}
 H_{\err, \vec c} = 2 \sum_{\kk=1}^S \vec S_{\vec c^{\geq (\kk + 1)}}^{-1} \int_0^{t_p^{(\kk)}} \vec S_{c^{(\kk)}}^{-1} (t) H_S \vec S_{c^{(\kk)}} (t) \rmd t \, \vec S_{\vec c^{\geq (\kk + 1)}} \, .
\end{equation}
We start with the integral
\begin{equation}
\begin{aligned}
\int_0^{t_p^{(\kk)}} \vec S_{c^{(\kk)}}^{-1} (t) H_S \vec S_{c^{(\kk)}} (t) \rmd t &= \sum_{\vec a \in \FF_2^{2n}\setminus \{ \vec 0 \}} J_{\vec a} \int_0^{t_p^{(\kk)}} \bigotimes_{i=1}^n S_{c_i^{(\kk)}}^{-1} (t) P_{\vec a_i} S_{c_i^{(\kk)}}(t) \rmd t \\
&= \sum_{\vec a \in \FF_2^{2n}\setminus \{ \vec 0 \}} \sum_{\substack{\tilde{\vec a} \in \FF_2^n \\ \supp (\tilde{\vec a}) = \supp (\vec a)}} J_{\vec a} \frac{t_p^{(\kk)}}{\theta^{(\kk)}} \int_0^{\theta^{(\kk)}} \left(\prod_{i \in \tilde{\vec a}} g_{\tilde{\vec a}_i,i} (\tilde{\theta}) \right) \rmd \tilde{\theta} P_{\tilde{\vec a}} \\
&\eqqcolon \sum_{\vec a \in \FF_2^{2n}\setminus \{ \vec 0 \}} E_{\vec a}^{(\kk)} P_{\vec a} \, ,
\end{aligned}
\end{equation}
where $E_{\vec a}^{(\kk)}$ can be efficiently calculated by integrating trigonometric polynoms and summing all contributions from the terms with the same locality $\supp (\tilde{\vec a}) = \supp (\vec a)$.
Next, the conjugation $\vec S_{\vec c^{\geq (\kk + 1)}}^{-1} (\cdot) \, \vec S_{\vec c^{\geq (\kk + 1)}}$ in \cref{eq:Herr_general_robust_app} corresponds to applying \cref{eq:extended_L_conjugation} for each $\tilde{\kk} = 1, \dots ,\kk+1$ to $E_{\vec a}^{(\kk)} P_{\vec a}$ with $t=t_p^{(\tilde{\kk})}$ or $\tilde{\theta} = \theta^{(\tilde{\kk})}$.
Then, we obtain
\begin{equation}
 H_{\err, \vec c} = \sum_{\vec a \in \FF_2^{2n}\setminus \{ \vec 0 \}} \EC{\dd}{\rr}_{\vec a \vec c} P_{\vec a} \, .
\end{equation}
The entries of $\WGG{\dd}{\rr}$ can be calculated similarly by applying \cref{eq:extended_L_conjugation} for each $\kk = 1, \dots ,\ns$ to $J_{\vec a} P_{\vec a}$ with $t=t_p^{(\kk)}$ or $\tilde{\theta} = \theta^{(\kk)}$.
Together, we obtain
\begin{equation}
 \vec S_{\vec c}^{-1} H_S \vec S_{\vec c} = \sum_{\vec a \in \FF_2^{2n}\setminus \{ \vec 0 \}} \WGG{\dd}{\rr}_{\vec a \vec c} P_{\vec a} \, .
\end{equation}
\end{proof}

\MainThm*

\begin{proof}
The chosen product formula determines the number of implemented single-qubit conjugations $U(t\lambda_{\vec c})$ for each $\vec c$.
Let this number be $n_{\vec c}$.
The finite pulse time error term in \eqref{eq:robustLP} has to be rescaled to account for $n_{\vec c}$ since each implementation of $U(t\lambda_{\vec c})$ causes the associated finite pulse time error.
This can be done by multiplying the $\vec c$-th column of $\EC{\dd}{\rr}$ by $n_{\vec c}$.

For simplicity, we assume the first order Trotter approximation in \cref{eq:first_order_trotter_sim}.
To account for the number of implemented single-qubit conjugations we have to rescale $\EC{\dd}{\rr}$ by $n_{\vec c} = \nTro$ for all $\vec c$.
Let $t H_{\av} \coloneqq \sum_{\vec c}^{\rr} H_{\av, \vec c} (t)$, with $\lambda_{\vec c}$ from \eqref{eq:robustLP}.
Let the target Hamiltonian be
\begin{equation}
 H_T = \sum_{\vec a \in \FF_2^{2n}\setminus \{ \vec 0 \}} A_{\vec a} P_{\vec a} \, .
\end{equation}
By the constraint of \eqref{eq:robustLP}, we have $H_{\av} = H_T$.
The time evolution governed by the target Hamiltonian $H_T$ can be approximated with
\begin{equation}
 \e^{-\i t H_T} = \e^{-\i t H_{\av}} \approx \left( \prodr_{\vec c} \e^{- \i H_{\av, \vec c} \left(\frac{t }{\nTro}\right) } \right)^{\nTro} \approx \left( \prodr_{\vec c} U(t \lambda_{\vec c}/\nTro) \right)^{\nTro} \, ,
\end{equation}
where the first approximation is given by the Trotter scheme and the second approximation is given by the first order Magnus expansion from \cref{lem:robust_finite_pulse_time}.
\end{proof}

% ~~~~~~~~~~~~~~~~~~~
\section{Proofs for the robust Pauli conjugation method}
\label{app:robust_pauli}
% ~~~~~~~~~~~~~~~~~~~

\LemHerr*
	
\begin{proof}
From \cref{lem:robust_finite_pulse_time} with $\ns=1$ and $H_{\vec c} = H_{c^{(1)}}$ we get the Hamiltonian corresponding to the finite pulse time effect
\begin{equation}
\label{eq:pauli_finite_pulse_error}
 H_{\err, \vec c} \coloneqq 2 \int_0^{t_p} \e^{ \i t H_{\vec c}} H_S \e^{- \i t H_{\vec c}} \rmd t \, .
\end{equation}
Recall, that the system Hamiltonian has the form
\begin{equation}
H_S = \sum_{\vec a \in \FF_2^{2n}\setminus \{ \vec 0 \}} J_{\vec a} P_{\vec a} \, .
\end{equation}
Before we compute $H_{\err, \vec c}$ we investigate the conjugation $\e^{ \i t H_{\vec c}} H_S \e^{- \i t H_{\vec c}}$ for a single-qubit.
A $\pi$ pulse on the $i$-th qubit can be written as
\begin{equation}
 S_{c_i} (t) \coloneqq \e^{- \i t \frac{\pi}{2 t_p} (-1)^{s_i} P_{\vec b_i}}
 = \cos \left(\frac{\pi}{2} \frac{t}{t_p} \right) I - \i (-1)^{s_i} P_{\vec b_i} \sin \left(\frac{\pi}{2} \frac{t}{t_p} \right)
 = \cos (\theta) I - \i (-1)^{s_i} P_{\vec b_i} \sin (\theta) \, ,
\end{equation}
with the Pauli generator $P_{\vec b}$ from \cref{eq:pauli_conjugation_system} and $\theta (t) \coloneqq \frac{\pi}{2} \frac{t}{t_p}$.
Then the effect of the conjugation on the $i$-th qubit is given by
\begin{equation}
\begin{aligned}
 S_{c_i}^{-1} (\theta) P_{\vec a_i} S_{c_i} (\theta) &= \underbrace{\left( \cos^2 (\theta) + (-1)^{\langle \vec a_i , \vec b_i \rangle} \sin^2 (\theta) \right)}_{\eqqcolon \alpha_{\vec a_i, \vec b_i} (\theta)} P_{\vec a_i} + (-1)^{s_i} \underbrace{\cos (\theta) \sin (\theta)}_{\beta (\theta)} \i [P_{\vec b_i}, P_{\vec a_i}] \\
 &= \alpha_{\vec a_i, \vec b_i} (\theta) P_{\vec a_i} + (-1)^{s_i} \beta (\theta) \i [P_{\vec b_i}, P_{\vec a_i}] \, ,
\end{aligned}
\end{equation}
with the binary symplectic form $\langle \argdot , \argdot \rangle$ from \cref{eq:pauli_conjugation} and the commutator $[\argdot , \argdot]$.
Let
\begin{equation}
 F(e) \coloneqq \begin{cases}
		P_{\vec a} \alpha_{\vec a_i, \vec b_i} (\theta) \, , & e = 0 \\
		\i [P_{\vec b_i}, P_{\vec a_i}] \beta (\theta) \, , & e = 1 \, .
	\end{cases}
\end{equation}
Inserting $H_S$ in \cref{eq:pauli_finite_pulse_error} and applying $A \otimes (B + C) = A \otimes B + A \otimes C$ yields
\begin{equation}
\label{eq:robust_pauli_Herr_calc}
\begin{aligned}
 H_{\err, \vec c} &= \frac{4 t_p}{\pi} \sum_{\vec a \in \FF_2^{2n}\setminus \{ \vec 0 \}} J_{\vec a} \int_0^{\frac{\pi}{2}} \left( \bigotimes_{i \in \supp (\vec a)} \left( P_{\vec a_i} \alpha_{\vec a_i, \vec b_i} (\theta) + \i [P_{\vec b_i}, P_{\vec a_i}] \beta (\theta) (-1)^{s_i} \right) \right) \rmd \theta \\
 &= \frac{4 t_p}{\pi} \sum_{\vec a \in \FF_2^{2n}\setminus \{ \vec 0 \}} \left( J_{\vec a} \int_0^{\frac{\pi}{2}}  \prod_{i \in \supp (\vec a)} \alpha_{\vec a_i, \vec b_i} (\theta)  \rmd \theta P_{\vec a} + \sum_{\substack{\vec e \in \FF_2^{n}\setminus \{ \vec 0 \} \\ e_i = 0 \, \forall i \notin \supp (\vec a)}} \left( \prod_{i \in \supp (\vec a)} (-1)^{e_i+s_i} \right) \left( \bigotimes_{i \in \supp (\vec a)} F(e_i) \right) \right) \\
 &= \sum_{\vec a \in \FF_2^{2n}\setminus \{ \vec 0 \}} J_{\vec a} \EP{\dd}{\rr}_{\vec a, \vec c} P_{\vec a} + \sum_{\substack{\vec e \in \FF_2^{n}\setminus \{ \vec 0 \} \\ e_i = 0 \, \forall i \notin \supp (\vec a)}} (-1)^{\vec e \cdot \vec s} \bigotimes_{i \in \supp (\vec a)} F(e_i) \, .
\end{aligned}
\end{equation}
Moreover, we define $\EP{\dd}{\rr}_{\vec a, \vec c} \coloneqq \frac{4 t_p}{\pi} \int_0^{\frac{\pi}{2}} \left( \prod_{i \in \supp (\vec a)} \alpha_{\vec a_i, \vec b_i} (\theta) \right) \rmd \theta$.
To conclude, we have the average Hamiltonian
\begin{equation}
  H_{\av, \vec c} (t) = t \lambda_{\vec c} \vec S_{\vec c}^{\dagger} H_S \vec S_{\vec c} + H_{\err, \vec c} \, ,
\end{equation}
from \cref{lem:robust_finite_pulse_time}, and we calculated the finite pulse time error term
\begin{equation}
 H_{\err, \vec c} = \sum_{\vec a \in \FF_2^{2n}\setminus \{ \vec 0 \}} \left( J_{\vec a} \EP{\dd}{\rr}_{\vec a, \vec c} P_{\vec a} + R_{\vec a , \vec c}\right) \, ,
\end{equation}
with
 \begin{equation}
 \EP{\dd}{\rr}_{\vec a, \vec c} \coloneqq \frac{4 t_p}{\pi} \int_0^{\frac{\pi}{2}} \left( \prod_{i \in \supp (\vec a)} (\cos^2 (\theta) + (-1)^{\langle \vec a_i , \vec b_i \rangle} \sin^2 (\theta)) \right) \rmd \theta
\end{equation}
and
\begin{equation}
 R_{\vec a , \vec c} \coloneqq \sum_{\substack{\vec e \in \FF_2^{n}\setminus \{ \vec 0 \} \\ e_i = 0 \, \forall i \notin \supp (\vec a)}} (-1)^{\vec e \cdot \vec s} \bigotimes_{i \in \supp (\vec a)} F(e_i) \, ,
\end{equation}
which proofs the lemma.
\end{proof}

\MainProp*

\begin{proof}
Similar as in the proof of \cref{thrm:robust_finite_pulse_time} the columns of $\EP{\dd}{\rr}$ in \eqref{eq:robustPauliLP} has to be rescaled by the number of implementations $U(t\lambda_{\vec c})$ which we denote by $n_{\vec c}$.
For simplicity we assume the first order Trotter approximation in \cref{eq:first_order_trotter_sim}.

We start with the rest term in the average Hamiltonian in \cref{eq:pauli_finite_pulse} which is proportional to $(-1)^{\vec e_{\vec a} \cdot \vec s}$.
Let's assume we require $\pp$ different rotation directions $\vec s_{(1)}, \dots , \vec s_{(\pp)} \in \FF_2^n$
such that $\sum_{j=1}^{\pp} (-1)^{\vec e_{\vec a} \cdot \vec s_{(j)}} = 0$ for all $\vec e_{\vec a} \in \FF_2^{n}$ with $e_{\vec a, i} = 0$ for all $i \notin \supp (\vec a)$ for any $\vec a$ with $J_{\vec a} \neq 0$.
A non-optimal choice of $\vec s_{(j)}$ are all possible binary combinations without the all-zero vector, then $\pp = 2^n - 1$.
An efficient choice is provided in \cref{prop:rot_dir_choice} below.
To indicate the used rotation direction, we specify the single-qubit Pauli pulse layer by the tuple $\vec c_{(j)} = (t_p, \vec s_{(j)}, \vec h)$.
Note, that the generators do not change.
Let the partial average Hamiltonian be the average Hamiltonian without the rest error term
\begin{equation}
\begin{aligned}
  \tilde{H}_{\av, \vec c} (t) &\coloneqq \sum_{j=1}^{\pp} H_{\av, \vec c_{(j)}} \left(\frac{t}{\pp} \right) \\
  &= t \lambda_{\vec c} \vec S_{\vec c}^\dagger H_S \vec S_{\vec c} + \pp \sum_{\vec a \in \FF_2^{2n}\setminus \{ \vec 0 \}} \EP{\dd}{\rr}_{\vec a, \vec c} J_{\vec a} P_{\vec a} \, .
\end{aligned}
\end{equation}
The evolution under the partial average Hamiltonian can be approximated by
\begin{equation}
 \e^{-\i t \tilde{H}_{\av, \vec c}(t)} \approx \prodr_{j=1}^{\pp} \e^{-\i H_{\av, \vec c_{(j)}} \left(\frac{t}{\pp} \right)} \approx \prodr_{j=1}^{\pp} U(t\lambda_{\vec c_{(j)}}/\pp)  \, ,
\end{equation}
where the first approximation is given by a first order Trotter scheme where we set $\nTro = 1$ for simplicity, and the second approximation is given by the first order Magnus expansion from \cref{lem:robust_finite_pulse_time}.
Let $t H_{\av} \coloneqq \sum_{\vec c}^{\rr} \tilde{H}_{\av, \vec c} (t)$, with $\lambda_{\vec c}$ from \eqref{eq:robustPauliLP}.
Let the target Hamiltonian be
\begin{equation}
 H_T = \sum_{\vec a \in \FF_2^{2n}\setminus \{ \vec 0 \}} A_{\vec a} P_{\vec a} \, .
\end{equation}
By the constraint of \eqref{eq:robustPauliLP}, we have $H_{\av} = H_T$.
The time evolution governed by the target Hamiltonian $H_T$ can be approximated with
\begin{equation}
 \e^{-\i t H_T} = \e^{-\i t H_{\av}} \approx \left( \prodr_{\vec c} \e^{- \i \tilde{H}_{\av, \vec c_{(j)}} \left(\frac{t}{\nTro} \right)} \right)^{\nTro} \approx \left( \prodr_{\vec c} \prodr_{j=1}^{\pp} U\left(\frac{t\lambda_{\vec c_{(j)}}}{\pp \nTro}\right) \right)^{\nTro} \, ,
\end{equation}
where the first approximation is given by the Trotter scheme.
To account for the number of implemented Pauli conjugations $U(t\lambda_{\vec c})$ we have to rescale $\EP{\dd}{\rr}$ by $n_{\vec c} = \pp \nTro$ for all $\vec c$.
\end{proof}

\begin{restatable}{restatableProp}{PropPiRotations}
\label{prop:rotation_err_robust}
Let the rotation directions $\vec s \in \FF_2^n$ of the $\pi$ pulses such that $\sum_{\vec s} (-1)^{\vec e_{\vec a} \cdot \vec s} = 0$ for all $\vec e_{\vec a} \in \FF_2^{n}$ with $e_{\vec a, i} = 0$ for all $i \notin \supp (\vec a)$ for any $\vec a$ with $J_{\vec a} \neq 0$, as in \cref{lem:pauli_finite_pulse_error}.
Then, the rotation angle errors in the first order Taylor approximation cancels.
\end{restatable}
\begin{proof}
We model the rotation angle error of a perfect $\pi$ pulse $\e^{- \i (-1)^{s_i} \frac{\pi}{2} P_{\vec b_i}}$ with the rotation direction $(-1)^{s_i}$ and $s_i \in \FF_2$ by
\begin{equation}
\begin{aligned}
 \tilde{S}_{c_i} =
  \e^{- \i (-1)^{s_i} \frac{\pi + \varepsilon_i}{2} P_{\vec b_i}} = -\frac{\varepsilon_i}{2} \Id - \i (-1)^{s_i} P_{\vec b_i} + \LandauO (\varepsilon_i^2) \, ,
\end{aligned}
\end{equation}
where we used the first-order Taylor expansion of sine and cosine.
Conjugating a Pauli operator with an imperfect Pauli pulse results in
\begin{equation}
 \tilde{S}_{c_i}^{\dagger} P_{\vec a_i} \tilde{S}_{c_i} = P_{\vec b_i} P_{\vec a_i} P_{\vec b_i} + \i (-1)^{s_i} \frac{\varepsilon_i}{2} [P_{\vec a_i}, P_{\vec b_i}] + \LandauO (\varepsilon_i^2) \, .
\end{equation}
We use the identity $U^\dagger \e^{-\i t H} U = \e^{-\i t U^\dagger H U}$ to compute the effective Hamiltonian of a system Hamiltonian $H_S$ conjugated with a layer of imperfect Pauli pulses $\tilde{\vec S}_{\vec c} = \bigotimes_{i=1}^n \tilde{S}_{c_i}$,
\begin{equation}
\label{eq:app_angle_error_cancel}
\begin{aligned}
 \tilde{\vec S}_{\vec c}^\dagger H_S \tilde{\vec S}_{\vec c} &= \sum_{\vec a \in \FF_2^{2n}\setminus \{ \vec 0 \}} J_{\vec a}  \bigotimes_{i=1}^n \left( \tilde{S}_{c_i}^{\dagger} P_{\vec a_i} \tilde{S}_{c_i} \right) \\
 &= \sum_{\vec a \in \FF_2^{2n}\setminus \{ \vec 0 \}} J_{\vec a}  \bigotimes_{i=1}^n \left( P_{\vec b_i} P_{\vec a_i} P_{\vec b_i} + \i (-1)^{s_i} \frac{\varepsilon_i}{2} [P_{\vec a_i}, P_{\vec b_i}] + \LandauO (\varepsilon_i^2) \right) \\
 &= \sum_{\vec a \in \FF_2^{2n}\setminus \{ \vec 0 \}} \left( J_{\vec a} P_{\vec b} P_{\vec a} P_{\vec b} + \sum_{\substack{\vec e \in \FF_2^{n} \\ e_i = 0 \, \forall i \notin \supp (\vec a)}} \left( \prod_{i \in \supp (\vec a)} (-1)^{e_i+s_i} \varepsilon_i^{e_i} \right) \left( \bigotimes_{i \in \supp (\vec a)} F(e_i) + \LandauO (\varepsilon_i^2) \right) \right) \, ,
\end{aligned}
\end{equation}
where we used $A \otimes (B + C) = A \otimes B + A \otimes C$ and
\begin{equation}
 F(e) \coloneqq \begin{cases}
		P_{\vec b_i} P_{\vec a_i} P_{\vec b_i} \, , & e = 0 \\
		\i \frac{1}{2} [P_{\vec a_i}, P_{\vec b_i}] \, , & e = 1 \, .
	\end{cases}
\end{equation}
The second term in the sum is the first-order angle error contribution, and has the same sign structure $(-1)^{e_i+s_i}$ as the rest term $R_{\vec a, \vec c}$ in \cref{eq:robust_pauli_Herr_calc}.
Therefore, choosing the signs $\vec s \in \FF_2^n$, such that the rest term $R_{\vec a, \vec c}$ cancels, simultaneously cancels the first order angle error contribution in \cref{eq:app_angle_error_cancel}.
\end{proof}

The proof of the following result is similar as the proof in \cite[Lemma 8]{basler_time-optimal_2024}.

\PropPiRotationsChoice*

\begin{proof}
From the choice of $(-1)^{\vec s_{(j)}}$ it directly follows that $\sum_{j=1}^{\pp} (-1)^{\vec e \cdot \vec s_{(j)}} = 0$ for all $\vec e \in \FF_2^{n}$ with $| \vec e | = 1$, since the sum over all rows of a Walsh-Hadamard matrix especially $\WP{\pp}{n}$ is zero.
We now have to show that $\sum_{j=1}^{\pp} (-1)^{\vec e \cdot \vec s_{(j)}} = 0$ for all $\vec e \in \FF_2^{n}$ with $| \vec e | = 2$.
For any $\vec e \in \FF_2^{n}$ with $| \vec e | = 2$ we can write $(-1)^{\vec e \cdot \vec s} = (-1)^{s_i} (-1)^{s_k} = \left(\left((-1)^{\vec s}\right) \left( (-1)^{\vec s}\right)^T\right)_{ik}$ with $i,k \in [n]$ and $i \neq k$.
Then, we obtain
\begin{equation}
\label{eq:sum_e_s}
 \sum_{j=1}^{\pp} (-1)^{\vec e \cdot \vec s_{(j)}} = \sum_{j=1}^{\pp} \left( \left((-1)^{\vec s_{(j)}}\right) \left( (-1)^{\vec s_{(j)}}\right)^T \right)_{ik} \, ,
\end{equation}
with $i,k \in [n]$ and $i \neq k$.
The orthogonality property of the Walsh-Hadamard matrix yields
\begin{equation}
 \sum_{j=1}^{\pp} \left((-1)^{\vec s_{(j)}}\right) \left( (-1)^{\vec s_{(j)}}\right)^T =  (\WP{\pp}{n})^T \WP{\pp}{n} = \pp I \, ,
\end{equation}
where the non-diagonal entries $i \neq k$ correspond to the sum \cref{eq:sum_e_s} and are zero.
\end{proof}

%%%=============================================

\begin{acronym}[MAGIC]\itemsep.5\baselineskip
\acro{AGF}{average gate fidelity}
\acro{AHT}{average Hamiltonian theory}
\acro{AQFT}{approximate quantum Fourier transform}

\acro{BOG}{binned outcome generation}

\acro{CP}{completely positive}
\acro{CPT}{completely positive and trace preserving}
\acro{CS}{compressed sensing} 

\acro{DAQC}{digital-analog quantum computing}
\acro{DFE}{direct fidelity estimation} 
\acro{DM}{dark matter}
\acro{DD}{dynamical decoupling}

\acro{EF}{extended formulation}

\acro{GST}{gate set tomography}
\acro{GUE}{Gaussian unitary ensemble}

\acro{HOG}{heavy outcome generation}

\acro{KKT}{Karush–Kuhn–Tucker}

\acro{LDP}{Lamb-Dicke parameter}
\acro{LP}{linear program}

\acro{MAGIC}{magnetic gradient induced coupling}
\acro{MBL}{many-body localization}
\acro{MIP}{mixed integer program}
\acro{MILP}{mixed integer linear program}
\acro{ML}{machine learning}
\acro{MLE}{maximum likelihood estimation}
\acro{MPO}{matrix product operator}
\acro{MPS}{matrix product state}
\acro{MS}{M{\o}lmer-S{\o}rensen}
\acro{MUBs}{mutually unbiased bases} 
\acro{mw}{micro wave}

\acro{NISQ}{noisy and intermediate scale quantum}
\acro{NMR}{nuclear magnetic resonance}

\acro{POVM}{positive operator valued measure}
\acro{PVM}{projector-valued measure}

\acro{QAOA}{quantum approximate optimization algorithm}
\acro{QFT}{quantum Fourier transform}
\acro{QML}{quantum machine learning}
\acro{QMT}{measurement tomography}
\acro{QPT}{quantum process tomography}
\acro{QPU}{quantum processing unit}

\acro{RDM}{reduced density matrix}

\acro{SFE}{shadow fidelity estimation}
\acro{SIC}{symmetric, informationally complete}
\acro{SPAM}{state preparation and measurement}

\acro{RB}{randomized benchmarking}
\acro{rf}{radio frequency}
\acro{RIC}{restricted isometry constant}
\acro{RIP}{restricted isometry property}
\acro{RSD}{relative standard deviation}

\acro{TT}{tensor train}
\acro{TV}{total variation}

\acro{VQA}{variational quantum algorithm}

\acro{VQE}{variational quantum eigensolver}

\acro{XEB}{cross-entropy benchmarking}

\end{acronym}

%%%=============================================

%%%=============================================
\twocolumngrid
\bibliographystyle{./myapsrev4-2}
\bibliography{./mk}
%%%=============================================

\end{document}